


\documentclass[sigconf,screen]{acmart}





\acmYear{2020}
\acmISBN{} 
\acmDOI{} 
\startPage{1}

\acmConference[]{}{}{}
\acmDOI{}
\acmISBN{}
\acmConference[]{}{}{}


\setcopyright{none}




\bibliographystyle{ACM-Reference-Format}
\citestyle{acmauthoryear}  


\usepackage{booktabs}   
\usepackage{subcaption} 

\usepackage{ulem} 

\usepackage{listings}

\usepackage{balance}


\usepackage{bchart}

\usepackage{algorithm}
\usepackage{algpseudocode}

\usepackage{tikz}
\usetikzlibrary{tikzmark}
\usetikzlibrary{arrows,automata}
\usetikzlibrary{trees,shapes,decorations}

\usepackage{microtype}

\settopmatter{printfolios=true}


\renewcommand{\emph}[1]{{\it #1}}

\makeatletter
\newcommand{\xRightarrow}[2][]{\ext@arrow 0359\Rightarrowfill@{#1}{#2}}
\makeatother


\newcommand{\bi}{\begin{array}[t]{@{}l@{}}}
\newcommand{\ei}{\end{array}}
\newcommand{\ba}{\begin{array}}
\newcommand{\ea}{\end{array}}
\newcommand{\bda}{\[\ba}
\newcommand{\eda}{\ea\]}
\newcommand{\bp}{\begin{quote}\tt\begin{tabbing}}
\newcommand{\ep}{\end{tabbing}\end{quote}}

\newcommand{\ignore}[1]{}

\newcommand{\ms}[1]{{\bf MS: #1}}

\newcommand{\mathem}{\sf}


\newcommand{\fig}[3]
        {\begin{figure*}[t]#3\
            \caption{\label{#1}#2} \end{figure*}}

\newcommand{\figurebox}[1]
        {\fbox{\begin{minipage}{\textwidth} #1 \end{minipage}}}
\newcommand{\boxfig}[3]
        {\begin{figure*}\figurebox{#3\caption{\label{#1}#2}}\end{figure*}}

\newcommand{\myirule}[2]{{\renewcommand{\arraystretch}{1.2}\ba{c} #1
                      \\ \hline #2 \ea}}

\newcommand{\rlabel}[1]{\mbox{(#1)}}


\newcommand\Angle[1]{\langle#1\rangle}


\newcommand{\thread}[2]{#1 \sharp #2}


\newcommand{\incC}[2]{{\mathem inc}(#1,#2)}

\newcommand{\maxN}[2]{{\mathem max}(#1,#2)}

\newcommand{\pp}{\ \texttt{++}}





\newcommand{\semP}[3]{#1 \xRightarrow{#2} #3}

\newcommand{\lockE}[1]{\mathit{acq}(#1)} 
\newcommand{\unlockE}[1]{\mathit{rel}(#1)} 
\newcommand{\readE}[1]{r(#1)}
\newcommand{\writeE}[1]{w(#1)}

\newcommand{\readEE}[2]{r(#1)_{#2}}
\newcommand{\writeEE}[2]{w(#1)_{#2}}
\newcommand{\lockEE}[2]{acq(#1)_{#2}}  
\newcommand{\unlockEE}[2]{rel(#1)_{#2}} 

\newcommand{\proj}[2]{\mathit{proj}_{\sharp #1}(#2)}
\newcommand{\pos}[1]{\mathit{pos}(#1)}
\newcommand{\posP}[2]{\mathit{pos}_{{\scriptstyle #1}}(#2)}
\newcommand{\compTID}[1]{\mathit{thread}(#1)}
\newcommand{\compTIDP}[2]{\mathit{thread}_{{\scriptstyle #1}}(#2)}

\newcommand{\events}[1]{\mathit{events}(#1)}

\newcommand{\supVC}[2]{#1 \sqcup #2}
\newcommand{\threadVC}[1]{\mathit{Th}(#1)}

\newcommand{\lastWriteVC}[1]{\mathit{L_W}(#1)}
\newcommand{\lastWriteVCt}[1]{\mathit{L_{W_t}}(#1)}
\newcommand{\lastWriteVCL}[1]{\mathit{L_{W_L}}(#1)}
\newcommand{\concEvt}[1]{\mathit{conc}(#1)}
\newcommand{\accVC}[2]{#1[ #2 ]}
\newcommand{\rwVC}[1]{\mathit{RW}(#1)}

\newcommand{\vcEvt}{\mathit{evt}}
\newcommand{\edges}[1]{\mathit{edges}(#1)}

\newcommand{\gtEdge}{\prec}

\newcommand{\cs}[2]{\thread{#1}{\Angle{#2}}}
\newcommand{\csSym}[1]{CS(#1)}
\newcommand{\csSymAny}{CS}

\newcommand{\acq}[1]{acq(#1)}
\newcommand{\rel}[1]{rel(#1)}

\newcommand{\LS}[1]{LS(#1)}               

\newcommand{\LSt}[1]{LS_t(#1)}   
\newcommand{\LStSym}{LS_t}
\newcommand{\Acq}[1]{Acq(#1)}
\newcommand{\LH}[1]{H(#1)} 

\newcommand{\replay}[3]{\semP{#1}{#2}{#3}}
\newcommand{\sepThree}[3]{(#1 \mid #2 \mid #3)}
\newcommand{\acqLast}{\mathem A}
\newcommand{\writeLast}{\mathem W}
\newcommand{\writeLastMap}{\mathem M}

\newcommand{\TS}{{\mathcal T}}

\newcommand{\rwT}[1]{T^{rw}_{#1}}
\newcommand{\rwTx}{\rwT{x}}

\newcommand{\rwTr}{T^{rw}}

\newcommand{\relTy}{T^{rel}_y}

\newcommand{\hb}[2]{#1 <^{\scriptscriptstyle HB} #2}
\newcommand{\shb}[2]{#1 <^{\scriptscriptstyle SHB} #2}
\newcommand{\wcp}[2]{#1 <^{\scriptscriptstyle WCP} #2}
\newcommand{\wdp}[2]{#1 <^{\scriptscriptstyle WDP} #2}

\newcommand{\pwr}[2]{#1 <^{\scriptscriptstyle PWR} #2}

\newcommand{\pwrs}[2]{#1 <^{\scriptscriptstyle PWRS} #2}
\newcommand{\pwra}[2]{#1 <^{\scriptscriptstyle PWRA} #2}
\newcommand{\pwrr}[2]{#1 <^{\scriptscriptstyle PWRR} #2}

\newcommand{\pwrSym}{<^{\scriptscriptstyle PWR}}
\newcommand{\wrdSym}{<^{\scriptscriptstyle WRD}}

\newcommand{\shbSym}{<^{\scriptscriptstyle SHB}}
\newcommand{\hbSym}{<^{\scriptscriptstyle HB}}
\newcommand{\sdpSym}{<^{\scriptscriptstyle SDP}}
\newcommand{\wdpSym}{<^{\scriptscriptstyle WDP}}
\newcommand{\poSym}{<^{\scriptscriptstyle PO}}
\newcommand{\wcpSym}{<^{\scriptscriptstyle WCP}}
\newcommand{\wcpSymSpecial}[1]{<^{\scriptscriptstyle WCP#1}}

\newcommand{\porhbcSym}{<^{\scriptstyle PORHBC}}

\newcommand{\pwrsSym}{<^{\scriptscriptstyle PWRS}}
\newcommand{\pwraSym}{<^{\scriptscriptstyle PWRA}}
\newcommand{\pwrrSym}{<^{\scriptscriptstyle PWRR}}

\newcommand{\allConcsP}[2]{{\mathcal C}^{#1}(#2)}
\newcommand{\accConcs}[1]{{PC}(#1)}

\newcommand{\allPredRacesP}[1]{{\mathcal P}^{#1}}
\newcommand{\PotentialP}[2]{{\mathcal R}^{#1}_{\not #2}}

\newcommand{\PotentialPWR}[1]{{\mathcal R}^{#1}_{\pwrSym}}

\newcommand{\schedSpecificPredRacesP}[1]{{\mathcal S}^{#1}}

\newcommand{\prefixOf}[2]{#1 \rhd #2}
\newcommand{\dataRace}[4]{#3 \stackrel{\prefixOf{#1}{#2}}{\asymp} #4}

\newcommand{\SHBEE}{\mbox{SHB$^{\scriptstyle E+E}$}}
\newcommand{\SHBEELimit}{\mbox{SHB$_{\scriptstyle L}^{\scriptstyle E+E}$}}

\newcommand{\PWREE}{\mbox{PWR$^{\scriptstyle E+E}$}}
\newcommand{\PWREELimit}{\mbox{PWR$_{\scriptstyle L}^{\scriptstyle E+E}$}}

\newcommand{\PWRZero}{\mbox{PWR}$_{\scriptstyle L}$}

\newcommand{\TSANWRD}{\mbox{TSanWRD}}

\newcommand{\PWR}{\mbox{PWR}}

\newcommand{\NoFP}{\mbox{FP$_{\not\exists}$}}
\newcommand{\OnlyFP}{\mbox{FP$_{\forall}$}}
\newcommand{\NoFN}{\mbox{FN$_{\not\exists}$}}
\newcommand{\OnlyFN}{\mbox{FN$_{\forall}$}}

\makeatletter
\newdimen\legendxshift
\newdimen\legendyshift
\newcount\legendlines
\newcommand{\bclldist}{1mm}
\newcommand{\bclegend}[3][10mm]{%
	\legendxshift=0pt\relax
	\legendyshift=0pt\relax
	\xdef\legendnodes{}%
	\foreach \lcolor/\ltext [count=\ll from 1] in {#3}%
	{\global\legendlines\ll\pgftext{\setbox0\hbox{\bcfontstyle\ltext}\ifdim\wd0>\legendxshift\global\legendxshift\wd0\fi}}%
	\@tempdima#1\@tempdima0.5\@tempdima
	\pgftext{\bcfontstyle\global\legendxshift\dimexpr\bcwidth-\legendxshift-\bclldist-\@tempdima-0.72em}
	\legendyshift\dimexpr5mm+#2\relax
	\legendyshift\legendlines\legendyshift
	\global\legendyshift\dimexpr\bcpos-2.5mm+\bclldist+\legendyshift
	\begin{scope}[shift={(\legendxshift,\legendyshift)}]
		\coordinate (lp) at (0,0);
		\foreach \lcolor/\ltext [count=\ll from 1] in {#3}%
		{
			\node[anchor=north, minimum width=#1, minimum height=5mm,fill=\lcolor] (lb\ll) at (lp) {};
			\node[anchor=west] (l\ll) at (lb\ll.east) {\bcfontstyle\ltext};
			\coordinate (lp) at ($(lp)-(0,5mm+#2)$);
			\xdef\legendnodes{\legendnodes (lb\ll)(l\ll)}
		}
		\node[draw, inner sep=\bclldist,fit=\legendnodes] (frame) {};
	\end{scope}
}
\makeatother

\begin{document}

\title[]{
  Efficient, Near Complete and Often Sound Hybrid Dynamic Data Race Prediction}         




\author{Martin Sulzmann}
\affiliation{
  \institution{Karlsruhe University of Applied Sciences}
  \streetaddress{Moltkestrasse 30}
  \city{Karlsruhe}
  \postcode{76133}
  \country{Germany}
}
\email{martin.sulzmann@gmail.com}

\author{Kai Stadtm{\"u}ller}
\affiliation{
  \institution{Karlsruhe University of Applied Sciences}
  \streetaddress{Moltkestrasse 30}
  \city{Karlsruhe}
  \postcode{76133}
  \country{Germany}
}
\email{kai.stadtmueller@live.de}

\begin{abstract}
  Dynamic data race prediction aims to identify races based
  on a single program run represented by a trace.
  The challenge is to remain efficient while being as sound and as complete as possible.
  Efficient means a linear run-time as otherwise the method unlikely
  scales for real-world programs.
  We introduce an efficient, near complete and often sound
  dynamic data race prediction method that combines
  the lockset method with several improvements made
  in the area of happens-before methods.
  By near complete we mean that the method is complete in theory
  but for efficiency reasons the implementation applies some optimizations
  that may result in incompleteness. The method can be shown to be sound
  for two threads but is unsound in general.
  Experiments show that our method works
  well in practice.
\end{abstract}

\begin{CCSXML}
<ccs2012>
<concept>
<concept_id>10011007.10011074.10011099.10011102.10011103</concept_id>
<concept_desc>Software and its engineering~Software testing and debugging</concept_desc>
<concept_significance>500</concept_significance>
</concept>
</ccs2012>
\end{CCSXML}

\ccsdesc[500]{Software and its engineering~Software testing and debugging}

\keywords{Concurrency, Data race prediction, Happens before, Lockset}

\maketitle

\section{Introduction}


We consider verification methods in the context
of concurrently executing programs that make use of multiple threads,
shared reads and writes, and acquire/release operations to protect critical sections.
Specifically, we are interested in data races.
A data race arises if two unprotected, conflicting read/write operations from different
threads happen at the same time.

Detection of data races via traditional run-time testing methods where we simply
run the program and observe its behavior can be tricky.
Due to the highly non-deterministic behavior of concurrent programs, a data race may only
arise under a specific schedule. Even if we are able to force the program to follow
a specific schedule, the two conflicting events many not not happen at the same time.
Static verification methods, e.g.~model checking, are able to explore the entire state space
of different execution runs and their schedules. The issue is that static methods often do not scale
for larger programs. To make them scale, the program's behavior typically needs to be approximated
which then results in less precise analysis results.

The most popular verification method to detect data races combines idea from run-time testing and static verification.
Like in case of run-time testing, a specific program run is considered.
The operations that took place are represented as a program trace.
A trace reflects the interleaved execution of the program run and forms the basis for further analysis.
The challenge is to predict if two conflicting operations may happen at the same time
even if these operations may not necessarily appear in the trace right next to each other.
This approach is commonly referred to as \emph{dynamic data race prediction}.


{\bf Run-Time Events and Traces.}
For example, consider the following trace
  \bda{lll}
  & \thread{1}{} & \thread{2}{}
  \\ \hline
  1. & \writeE{x} &
  \\ 2. & \lockE{y} &
  \\ 3. & \unlockE{y} &
  \\ 4. & & \lockE{y}
  \\ 5. & & \writeE{x}
  \\ 6. & & \unlockE{y}
  \eda
  where for each thread we introduce
a separate column and the trace position can be identified via
 the row number.
 Events $\writeE{x}/\readE{x}$ refer to write/read events on
 the shared variable~$x$.
 Events $\lockE{y}/\unlockE{y}$ refer to acquire/release events on
 lock variable~$y$.
 To identify an event, we often annotate the event with its thread id
 and row number. For example, $\thread{1}{\writeEE{x}{1}}$
 refers to the write event in thread~$1$ at trace position~$1$.
 We sometimes omit the thread id as the trace position (row number)
 is sufficient to unambiguously identify an event.

{\bf Conflicting Events and Data Race Prediction.}
Let $e, f$ be two read/write events on the same variable
where at least one of them is a write event
and both events result from different threads.
Then, we say that $e$ and $f$ are two \emph{conflicting events}.
For the above trace, we find that
$\thread{1}{\writeEE{x}{1}}$ and $\thread{2}{\writeEE{x}{5}}$
are two conflicting events.
Based on the trace we wish to predict if
two conflicting events can appear right next to each other.
Such a situation represents a \emph{data race}.

In the above trace, the two conflicting events
$\writeEE{x}{1}$ and $\writeEE{x}{5}$
do not appear right next to each other in the trace.
Hence, it seems that both events are not in a race.
The point is that a trace represents \emph{one} possible interleaving of concurrent events
but there may be other \emph{alternative} interleavings that result from scheduling
the events slightly differently.
The challenge of data race prediction is to find an alternative interleaving
of the trace such two conflicting events appear right next to each other.

We could explore alternative interleavings by considering
all trace reorderings, i.e.~all permutations of events in the trace.
In general, this is
(a) too inefficient, and (b) leads to false results as the data race
may not be reproducible by re-running the program.
As we only consider the trace and not the program we
impose the following assumptions on a \emph{correctly reordered} trace.
(1) The program order as found in each thread
is respected. (2) Every read sees the same (last) write.
(3) The lock semantics is respected so that execution will not get stuck.

For our running example,
$[\thread{2}{\lockEE{y}{4}}, \thread{2}{\writeEE{x}{5}}, \thread{1}{\writeEE{x}{1}}]$
is a correctly reordered prefix.
We use here list notation to represent the trace.
This reordered trace
serves as a witness for the data race among
the two conflicting events $\writeEE{x}{1}$ and $\writeEE{x}{5}$.
We consider prefixes as we can 'stop' the trace as soon as the
two conflicting events have appeared right next to each other.

{\bf First versus Subsequent Races.}
Earlier works~\citep{Smaragdakis:2012:SPR:2103621.2103702,Kini:2017:DRP:3140587.3062374}
only consider the first race based on a total order of
the occurrence of events in the original trace.
One reason is that a subsequent race may only show itself due to
an earlier race. As the program behavior may be undefined
after the first race, the subsequent race many not be reproducible.

However, it is easy to fix the first race by making
the events mutually exclusive.
  The former subsequent race becomes a first.
  To discover this race we would need to re-run the analysis.
  Hence, it is sensible to report all races
  and not only the first race.

Here is an example to illustrate this point.
\bda{ccc}
  \ba{lll}
  & \thread{1}{} & \thread{2}{}
  \\ \hline
  1. & \readE{y} &
  \\ 2. & \readE{x} &
  \\ 3. && \writeE{y}
  \\ 4. && \writeE{x}
  \ea
  & \mbox{} ~~~~~~ \mbox{}
  &
   \ba{lll}
  & \thread{1}{} & \thread{2}{}
  \\ \hline
  1. & \lockE{y'} &
  \\ 2. & \readE{y} &
  \\ 3. & \unlockE{y'} &
  \\ 4. & \readE{x} &
  \\ 5. && \lockE{y'}
  \\ 6. && \writeE{y}
  \\ 7. && \unlockE{y'}
  \\ 4. && \writeE{x}
  \ea
  \eda
  For the trace on the left,
  $\readEE{y}{1}$ and $\writeEE{y}{3}$ are in a race as shown
  by $[\thread{1}{\readEE{y}{1}}, \thread{2}{\writeEE{y}{3}}]$.

  What about $\readEE{x}{2}$ and $\writeEE{x}{4}$?
  For any reordering where $\readEE{x}{2}$ and $\writeEE{x}{4}$ appear right next to each
  other we find that earlier in the trace
  $\readEE{y}{1}$ and $\writeEE{y}{3}$ appear right next to each other.
  For instance, consider
  $[\thread{1}{\readEE{y}{1}}, \thread{2}{\writeEE{y}{3}},
   \thread{2}{\readEE{x}{2}}, \thread{1}{\writeEE{x}{4}}]$.
  Hence, $\readEE{x}{2}$ and $\writeEE{x}{4}$ represent a subsequent race.

  We can easily fix the first race by making the events involved mutually exclusive.
  See the trace on the right.
  The subsequent race becomes now a first race.

{\bf Our Goals and Contributions.}
For a given trace $T$,
we wish to identify {\emph all predictable} data races in $T$.
This includes first and subsequent races as well.
We write $\allPredRacesP{T}$ to denote the set of all
{\emph predictable data race pairs} $(e,f)$ resulting from $T$
where $e$, $f$ are conflicting events in $T$ and there exists
a correctly reordered prefix of $T$ under which $e$, $f$ appear
right next to each other.

The challenge is to be \emph{efficient}, \emph{sound} and \emph{complete}.
By efficient we mean a run-time that is linear in terms of the size of the trace.
Sound means that races reported by the algorithm can be observed via some
appropriate reordering of the trace.
If unsound, we refer to wrongly a classified race as a \emph{false positive}.
Complete means that all valid reorderings that exhibit some race can be
predicted by the algorithm.
If incomplete, we refer to any not reported race  as a \emph{false negative}.

In this paper, we make the following contributions:
\begin{itemize}
\item We propose an efficient dynamic race prediction method
  that combines the lockset method with
  the happens-before method.
  Our method is novel and
  improves the state-of-the art.
  The method is shown to be complete in general and sound for the case of two threads
  (Section~\ref{sec:shb-wcp-lockset}).
\item We give a detailed description of how to implement our proposed method
      (Section~\ref{sec:implementation}).
  We present an algorithm that overall has quadratic run-time.
  This algorithm can be turned into a linear run-time algorithm
  by sacrificing completeness. For practical as well as contrived examples,
  incompleteness is rarely an issue.
   \item We carry out extensive experiments covering a large
     set of real-world programs as well as a collection of the many challenging
     examples that can be found in the literature.
     For experimentation, we have implemented our algorithm as well as its contenders
      in a common framework.
     We measure the performance, time and space behavior,
     as well as the precision, e.g.~ratio of false positives/negatives etc.
     Measurements show that our algorithm performs well
     compared to state-of-the art algorithms such as ThreadSanitizer, FastTrack, SHB and WCP (Section~\ref{sec:experiments}).
\end{itemize}

The upcoming section gives an overview of our work
and includes also a comparison against closely related works.
Section~\ref{sec:related-work} summarizes related work.
Section~\ref{sec:conclusion} concludes.
The appendix contains additional material such as proofs, extended examples,
optimization details etc.

\section{Happens-Before and Lockset}

We review earlier efficient data race prediction methods
and discuss their limitations.

{\bf Happens-Before Methods.}
The idea is to
is to derive from the trace a happens-before relation among events.
If for two conflicting events, neither event happens before the other event,
this is an indication that both events can appear next to each other.
Happens-before methods can be implemented efficiently
via the help of vector clocks~\citep{Fidge:1991:PAT:646210.683620,Mattern89virtualtime}.
However,
none of the existing happens-before relations~\citep{lamport1978time,Mathur:2018:HFR:3288538:3276515,Kini:2017:DRP:3140587.3062374}
is sound and complete.

For example, Lamport's happens-before relation~\citep{lamport1978time},
referred to as the HB relation,
is neither sound nor complete as shown by the following example.

\begin{example}
  \label{ex:hb-incomplete}
  \label{ex:hb-unsound}
Consider the following two traces.
\bda{lcl}
\ba{l}
\mbox{Trace A}
 \\
  \ba{lll}
  & \thread{1}{}
  & \thread{2}{}
  \\ \hline
  1. & \writeE{x} &
  \\ 2. & \lockE{y} &
  \\ 3. & \unlockE{y} \tikzmark{hb-incomplete-1} &
  \\ 4. & & \tikzmark{hb-incomplete-2} \lockE{y}
  \\ 5. & & \writeE{x}
  \\ 6. & & \unlockE{y}
  \ea
           \begin{tikzpicture}[overlay, remember picture, yshift=.25\baselineskip, shorten >=.5pt, shorten <=.5pt]
           \draw[->] ({pic cs:hb-incomplete-1}) [bend right] node [below, yshift=-0.3cm, xshift=-0.1cm]{\footnotesize{HB}} to ({pic cs:hb-incomplete-2});
           \end{tikzpicture}
  \ea
  &
 &
 \ba{l}
  \mbox{Trace B}
  \\
    \ba{lll}
  & \thread{1}{}
  & \thread{2}{}
    \\ \hline
  1. && \writeE{y}
  \\ 2. & \writeE{x} &
  \\ 3. & \writeE{y} \tikzmark{shb-sound-1} &
  \\ 4. & & \tikzmark{shb-sound-2} \readE{y}
  \\ 5. & & \writeE{x}
  \ea
           \begin{tikzpicture}[overlay, remember picture, yshift=.25\baselineskip, shorten >=.5pt, shorten <=.5pt]
           \draw[->] ({pic cs:shb-sound-1}) [bend right] node [below, yshift=-0.3cm, xshift=-0.1cm]{\footnotesize{SHB}} to ({pic cs:shb-sound-2});
           \end{tikzpicture}
  \ea
 \eda

 Consider trace A.
 The HB relation orders critical sections based on their position
 in the trace and therefore $\unlockEE{y}{3} \hbSym \lockEE{y}{4}$
 where $\hbSym$ denotes the HB ordering relation.
 Hence, we find that $\writeEE{x}{1} \hbSym \writeEE{x}{5}$.
 This is a false negative. We are allowed to reorder the two critical sections
 and then two writes on~$x$ would appear right next to each other.
 Take $T' = [\thread{2}{\lockEE{y}{4}}, \thread{2}{\writeEE{x}{5}}, \thread{1}{\writeEE{x}{1}}]$
  where $T'$ represents an alternative schedule.

  Consider trace B.
  There are no critical sections.
  Hence, the conflicting events $\writeEE{x}{2}$ and $\writeEE{x}{5}$
  are unordered under the HB relation.
  This is a false positive.
  We assume that programs are executed under the sequential consistency
memory model~\citep{Adve:1996:SMC:619013.620590}.
  Hence, any reordering to exhibit the
  race among the writes on~$x$ violates the condition
  that each read must see the same (last) write.
  Consider the reordering
  $$T' = [\thread{2}{\writeEE{y}{1}}, \thread{2}{\readEE{y}{4}}, \thread{2}{\writeEE{x}{5}},
    \thread{1}{\writeEE{x}{2}}].$$
  In the original trace, the last write for $\readEE{y}{4}$ is $\writeEE{y}{3}$ but this
  does not apply to $T'$.
  As the read sees a different write, there is no guarantee that the events
  after the read on $y$ would take place.
 \end{example}

Mathur, Kini and Viswanathan~\citep{Mathur:2018:HFR:3288538:3276515}
show that the HB relation is only sound for the first race reported.
They introduce the schedulable happens-before (SHB) relation $\shbSym$.
The SHB relation additionally includes write-read dependencies
and therefore the two writes on~$x$ in the above trace B
are ordered under the SHB relation.
The SHB relation is sound in general but still incomplete
as critical sections are ordered by their position in the trace.

Kini, Mathur and Viswanathan~\citep{Kini:2017:DRP:3140587.3062374}
introduce the weak-causally precedes (WCP) relation.
Unlike HB and SHB, WCP reorders critical sections
under some conditions.
Recall trace A from Example~\ref{ex:hb-incomplete}.
Under WCP, events $\writeEE{x}{1}$ and $\writeEE{x}{5}$ are unordered.
Hence, WCP is more complete compared to HB and SHB.
Like HB, subsequent WCP races may be false positives.
See trace B in Example~\ref{ex:hb-unsound}
where $\writeEE{x}{2}$ and $\writeEE{x}{5}$ are not ordered under WCP
but this represents a false positive.

The WCP relation improves over the HB and SHB relation by being more complete.
However, WCP is still incomplete in general as shown by the following example.
\begin{example}
  \label{ex:wcp-incomplete}
  Consider
   \bda{lll}
   & \thread{1}{}
   & \thread{2}{}
   \\ \hline
   1. & \writeE{x} &
   \\ 2. & \lockE{y} &
   \\ 3. & \writeE{x} &
   \\ 4. & \unlockE{y} \tikzmark{wcp-incomplete-1} &
   \\ 5. && \lockE{y}
   \\ 6. && \tikzmark{wcp-incomplete-2} \writeE{x}
   \\ 7. && \unlockE{y}
   \eda
           \begin{tikzpicture}[overlay, remember picture, yshift=.25\baselineskip, shorten >=.5pt, shorten <=.5pt]
           \draw[->] ({pic cs:wcp-incomplete-1}) [bend right] node [below, yshift=-0.4cm, xshift=-0.3cm]{\footnotesize{WCP}} to ({pic cs:wcp-incomplete-2});
           \end{tikzpicture}
  Events $\writeEE{x}{1}$ and $\writeEE{x}{6}$ are in a predictable data race
  as witnessed by the following correctly reordered prefix
  $
   T' = [\lockEE{y}{5}, \writeEE{x}{1}, \writeEE{x}{6}].
  $
   WCP is unable to predict this race.

   The two critical sections contain conflicting events
   and therefore $\unlockEE{y}{4} \wcpSym \writeEE{x}{6}$.
   Then, we find that $\writeEE{x}{1} \wcpSym \writeEE{x}{6}$.
\end{example}

{\bf Lockset Method.}
A different method is based on the idea to compute
the set of locks that are held when processing a read/write event~\citep{Dinning:1991:DAA:127695:122767}.
We refer to this set as the \emph{lockset}.
For each event $e$ we compute its lockset $\LS{e}$
where $y \in \LS{e}$ if $e \in \csSym{y}$ for some critical section $\csSym{y}$.
Two conflicting events that are in a race
if their locksets are disjoint.

The computation of locksets is efficient and it is straightforward
to show that the lockset method is complete.
However, on its own the lockset method produces many false positives
as shown by our experiments later.

{\bf Hybrid Methods.}
The idea of Gen\c{c}, Roemer, Xu and Bond~\citep{10.1145/3360605}
is to pair up the lockset method with happens-before.
They introduce the strong-dependently precedes (SDP)
and the weak-dependently precedes (WDP) relation.
SDP and WDP are weaker compared to the earlier relations we have seen
where WDP is even weaker compared to SDP.
The lockset test is necessary to rule out (some) false positives.

Compared to WCP, SDP does not order critical sections if the conflicting events
are only writes and there is no read that follows the write in the later
critical section.
Hence, under SDP the two writes on~$x$ in Example~\ref{ex:wcp-incomplete} are unordered.
By weakening the WCP relation, the SDP relation on its own is no longer
strong enough to rule out false positives (in case of of the first reported race).

\begin{example}
\label{ex:sdp-needs-lockset}
  Consider
  \bda{lll}
   & \thread{1} & \thread{2}
  \\ \hline
  1. & \lockE{y} &
  \\ 2. & \writeE{x} &
  \\ 3. & \unlockE{y} &
  \\ 4. && \lockE{y}
  \\ 5. && \writeE{x}
  \\ 6. && \unlockE{y}
  \eda
  The two writes on $x$ are unordered under SDP but there is obviously no race
  as both writes are part of a critical section that involves the same lock~$y$.
  To deal with such cases, the SDP relation is paired with the lockset test.
\end{example}

SDP improves over WCP in case of write-write conflicting critical sections.
But as all other WCP conditions are still in place SDP remains incomplete.

\begin{example}
  \label{ex:r-w-cs}
  Consider
  \bda{lll}
  & \thread{1} & \thread{2}
  \\ \hline
  1.  & \writeE{x} &
  \\ 2. & \lockE{z} &
  \\ 3. & \readE{x} &
  \\ 4. & \writeE{y} &
  \\ 5. & \unlockE{z} \tikzmark{sdp-incomplete-1} &
  \\ 6. && \lockE{z}
  \\ 7. && \tikzmark{sdp-incomplete-2} \writeE{x}
  \\ 8. && \unlockE{z}
  \\ 9. && \writeE{y}
  \eda
           \begin{tikzpicture}[overlay, remember picture, yshift=.25\baselineskip, shorten >=.5pt, shorten <=.5pt]
           \draw[->] ({pic cs:sdp-incomplete-1}) [bend right] node [below, yshift=-0.4cm, xshift=-0.3cm]{\footnotesize{SDP}} to ({pic cs:sdp-incomplete-2});
           \end{tikzpicture}
   There is a read-write conflict on $x$ within the two critical sections.
   Hence, under SDP we find that
   $\writeEE{y}{4} \sdpSym \writeEE{y}{9}$.
   This is a false negative as there is a correct reordering under which both
   events appear right next to each other.
\end{example}

To achieve completeness,
\citet{10.1145/3360605} introduce the WDP relation.
WDP pretty much drops all of SDP's ordering conditions
among critical sections.
Two critical sections are ordered if
one contains a write and the other a conflicting read
where the write is the read's last write.

\begin{example}
\label{ex:wdp-must}
  Consider
  \bda{lll}
   & \thread{1} & \thread{2}
  \\ \hline
  1. & \writeE{z} &
  \\ 2. & \lockE{y} &
  \\ 3. & \writeE{x} &
  \\ 4. & \unlockE{y} \tikzmark{wdp-must-1} &
  \\ 5. && \lockE{y}
  \\ 6. && \tikzmark{wdp-must-2} \readE{x}
  \\ 7. && \unlockE{y}
  \\ 8. && \writeE{z}
  \eda
     \begin{tikzpicture}[overlay, remember picture, yshift=.25\baselineskip, shorten >=.5pt, shorten <=.5pt]
           \draw[->] ({pic cs:wdp-must-1}) [bend right] node [below, yshift=-0.4cm, xshift=-0.3cm]{\footnotesize{WDP}} to ({pic cs:wdp-must-2});
           \end{tikzpicture}
     We find that $\unlockEE{y}{4} \wdpSym \readEE{x}{6}$
     and therefore the two writes on~$z$ are ordered under WDP.
\end{example}

The WDP ordering condition among critical section is a necessary condition.
\citet{10.1145/3360605} show that for any predictable race the events
involved are unordered under WDP and their locksets are disjoint.
That is, the WDP relation combined with the lockset test is complete.

{\bf Our Work.}
We further strengthen the WDP relation while maintaining completeness.
Our approach is to strictly impose write-read dependencies (WRD) as
employed by the SHB relation in \citet{Mathur:2018:HFR:3288538:3276515}.
This allows us to filter out more false positives and also improves
the running time of the algorithm.

\begin{example}
\label{ex:pwr-is-stronger-than-wdp}
  Consider
  \bda{llll}
    & \thread{1} & \thread{2} & \thread{3} \\ \hline
1.  & \lockE{z} &&
\\ 2.  &  \writeE{y_1} \tikzmark{wrd-1-1} &&
\\ 3.  & \writeE{x} &&
\\ 4.  & \unlockE{z} \tikzmark{pwrcs-1} &&
\\ 5.  && \tikzmark{wrd-1-2} \readE{y_1} &
\\ 6.  && \writeE{y_2} \tikzmark{wrd-2-1} &
\\ 7.  &&&    \lockE{z}
\\ 8.  &&&    \tikzmark{wrd-2-2}  \readE{y_2} \tikzmark{pwrcs-2}
\\ 9. &&&     \unlockE{z}
\\ 10. &&&    \writeE{x}
\eda
           \begin{tikzpicture}[overlay, remember picture, yshift=.25\baselineskip, shorten >=.5pt, shorten <=.5pt]
           \draw[->] ({pic cs:wrd-1-1}) [bend right] node [below, yshift=-0.1cm, xshift=0.5cm]{\footnotesize{WRD}} to ({pic cs:wrd-1-2});
           \end{tikzpicture}
           \begin{tikzpicture}[overlay, remember picture, yshift=.25\baselineskip, shorten >=.5pt, shorten <=.5pt]
           \draw[->] ({pic cs:wrd-2-1}) [bend right] node [below, yshift=-0.4cm, xshift=-0.3cm]{\footnotesize{WRD}} to ({pic cs:wrd-2-2});
           \end{tikzpicture}
           \begin{tikzpicture}[overlay, remember picture, yshift=.25\baselineskip, shorten >=.5pt, shorten <=.5pt]
           \draw[->] ({pic cs:pwrcs-1}) [bend left=100]node [below, yshift=-0.4cm, xshift=2.7cm]{\footnotesize{PWR}} to ({pic cs:pwrcs-2});
           \end{tikzpicture}
WDP reports that the two writes on~$x$ are in a race.
This is a false positive.

Our \PWR\ relation includes the WRD relations
$\writeEE{y_1}{2} \wrdSym \readEE{y_1}{5}$
and $\writeEE{y_2}{6} \wrdSym \readEE{y_2}{8}$
and therefore $\unlockEE{z}{4} \pwrSym \readEE{y_2}{8}$.
\PWR\ stands for program order, write-read dependency order and ordered critical sections
(if events involved are ordered).
Hence, we find
that the two writes in $x$ are ordered under \PWR.
\PWR\ is stronger compared to WDP and therefore admits fewer false positives.
We can show that \PWR\ in combination with lockset is complete.
\end{example}

At the algorithmic level, \PWR\ has performance benefits as shown by the following example.

\begin{example}
  Consider
  \bda{llll}
  & \thread{1} & \thread{2} & \dots
  \\ \hline
 1. & \lockE{z} &&
 \\ 2. & \writeE{x_1} &&
 \\ 3. & \unlockE{z} \tikzmark{wdp-hist-1} &&
 \\ 4. & \writeE{y} \tikzmark{pwr-wrd-1} &&
 \\ 5. && \tikzmark{pwr-wrd-2} \readE{y} &
 \\ 6. && \lockE{z} &
 \\ 7. && \writeE{x_2} &
 \\ 8. && \unlockE{z} &
 \\ \dots &&&
 \\ 9.  && \lockE{z} &
 \\ 10. &&   \readE{x_1} \tikzmark{wdp-hist-2} &
 \\ 11. &&  \unlockE{z}
 \eda
            \begin{tikzpicture}[overlay, remember picture, yshift=.25\baselineskip, shorten >=.5pt, shorten <=.5pt]
           \draw[->] ({pic cs:wdp-hist-1}) [bend left=100] node [below, yshift=-0.1cm, xshift=1.5cm]{\footnotesize{WDP}} to ({pic cs:wdp-hist-2});
            \end{tikzpicture}
           \begin{tikzpicture}[overlay, remember picture, yshift=.25\baselineskip, shorten >=.5pt, shorten <=.5pt]
           \draw[->] ({pic cs:pwr-wrd-1}) [bend right] node [below, yshift=-0.3cm, xshift=-0.2cm]{\footnotesize{PWR}} to ({pic cs:pwr-wrd-2});
           \end{tikzpicture}

           To check if two critical sections are ordered,
           the algorithm that implements the WDP relation
needs to maintain a history of critical sections.
For each critical section, we record
(1) the writes for each variable, and
(2) the happens-before time for the release.
If there is a subsequent critical section (for the same lock)
with a read where the last write is in some earlier critical section,
then we need to enforce the WDP relation.
See $\unlockEE{z}{3} \wdpSym \readEE{x_1}{10}$.

The size of the history of critical sections as well as the writes per critical section
can be significantly large.
Our experiments show that this can have a significant impact on the performance.
\PWR\ improves over WDP as we do not maintain writes per critical section and can more
aggressively remove critical sections.

For our example,
due to the write-read dependency involving variable~$y$, the critical sections in thread 1 and 2 are ordered under \PWR.
Hence, thread 2 does not need to record thread 1's critical section at all.
Furthermore, we  only need to record to the happens-before time of the acquire instead
of all writes that are part of this critical section.
\end{example}

Another important contribution of our work is that we introduce a complete algorithm
that computes all predictable data race pairs.
The algorithm that implements the WDP relation is incomplete as shown by the following example.

\begin{example}
\label{ex:history-of-events}
  Consider
  \bda{lll}
  & \thread{1} & \thread{2}
  \\ \hline
 1. & \writeE{x} &
\\ 2. & \lockE{y} &
\\ 3. & \writeE{x} &
\\ 4. & \unlockE{y} &
\\ 5. && \lockE{y}
\\ 6. &&  \writeE{x}
\\ 7. &&  \unlockE{y}
\eda
There is a predictable race among $\writeEE{x}{1}$ and $\writeEE{x}{6}$.
Our algorithm that implements \PWR\ reports this race
but the algorithm that implements WDP, see Algorithm 2 in~\citet{10.1145/3360605},
does not report a race here.

The issue is that $\writeEE{x}{1}$ happens before $\writeEE{x}{3}$ (due to program order).
Algorithm 2 in~\citet{10.1145/3360605} only keeps the most 'recent' write per thread.
Hence, we have forgotten
about $\writeEE{x}{1}$ as we only kept $\writeEE{x}{3}$ by the time we reach $\writeEE{x}{6}$.
Events $\writeEE{x}{3}$ and $\writeEE{x}{6}$ are unordered under \PWR\ but they share a common lockset.
Hence, Algorithm 2 reports no race.

Our algorithm additionally records that $\writeEE{x}{1} \pwrSym \writeEE{x}{3}$.
Via $\writeEE{x}{3}$ we can derive that there is another potential race candidate $\writeEE{x}{1}$
that might be in a race with $\writeEE{x}{6}$.
Their locksets are disjoint and thus we report the race.
\end{example}

Maintaining $\writeEE{x}{1} \pwrSym \writeEE{x}{3}$ and identifying additional race candidates
requires extra time and space. Our algorithm requires a quadratic time and space.
We apply some optimizations under which
we obtain an efficient algorithm that runs in linear time and space.
The optimization may lead to incompleteness.
Our experiments show that this is mostly an issue in theory
but not for practical examples.

The upcoming section formalizes
the \PWR\ relation.
Section~\ref{sec:implementation} covers the implementation.
Experiments are presented in Section~\ref{sec:experiments}.

\section{The \PWR\ Relation}
\label{sec:shb-wcp-lockset}

We formally define the \PWR\ relation.

\begin{definition}[PO + WRD + ROD]
\label{def:www-relation}
  Let $T$ be a trace.
  We define a relation $\pwr{}{}$ among trace events
  as the smallest partial order that
  satisfies the following conditions:

  \begin{description}
  \item[Program order (PO):]
    Let $e, f \in T$ where
    $\compTID{e} = \compTID{f}$ and $\pos{e} < \pos{f}$.
    Then, we have that $\pwr{e}{f}$.
  \item[Write-read dependency (WRD):]
    Let $\writeEE{x}{j}, \readEE{x}{k} \in T$
    where $\writeEE{x}{j}$ is the last write of $\readEE{x}{k}$.
    That is,
        $j < k$ 
    and there is no other $\writeEE{x}{l}$
    such that $j < l < k$.
    Then, we have that $\pwr{\writeEE{x}{j}}{\readEE{x}{k}}$.
  \item[Release-order dependency (ROD):] Let $e,f \in T$ be two events.
    Let $\csSym{y}$, $\csSym{y}'$ be two critical sections
    where $e \in \csSym{y}$, $f \in \csSym{y}'$
     and $\pwr{e}{f}$.
    Then, we have that $\pwr{\rel{\csSym{y}}}{f}$.
  \end{description}

  We refer to $\pwr{}{}$ as the \emph{PO + WRD + ROD} (PWR)
  relation.
\end{definition}

We distinguish between write-write, read-write and write-read
race pair candidates. Write-write and read-write candidates
are not ordered under \PWR.
For write-read candidates we assume that the write is the last write
for the read under \PWR.

\begin{definition}[Lockset + PWR Write-Write and Read-Write Check]
  \label{def:lockset-w3-check}
  Let $T$ be a trace where
  $e, f$ are two conflicting events such that
  (1) $\LS{e} \cap \LS{f} = \emptyset$, (2) neither $e \pwrSym f$
  nor $f \pwrSym e$, and (3) $(e,f)$ is a write-write or read-write race pair.
  Then, we say that $(e,f)$ is a \emph{potential Lockset-PWR} data race pair.
\end{definition}

\begin{definition}[Lockset + PWR WRD Check]
  \label{def:w3-wrd-race-pairs}
  Let $T$ be a  trace.
  Let $e, f$ be two conflicting events
  such that $e$ is a write and $f$ a read
  where $\LS{e} \cap \LS{f} = \emptyset$,
  $e \pwrSym f$ and there is no $g$ such that
  $e \pwrSym g \pwrSym f$.
  Then, we say that $(e,f)$ is a \emph{potential Lockset-PWR WRD} data race pair.
\end{definition}

\begin{definition}[Potential Race Pairs via Lockset + PWR]
We write $\PotentialPWR{T}$ to denote the set of all potential Lockset-PWR (and WRD)
data race pairs as characterized by
Definitions~\ref{def:lockset-w3-check} and~\ref{def:w3-wrd-race-pairs}.
\end{definition}

\begin{proposition}[Lockset + PWR Completeness]
\label{prop:lockset-www-completeness}
Let $T$ be a  trace.
Let $e, f \in T$ such that $(e,f) \in \allPredRacesP{T}$.
Then, we find that $(e,f) \in \PotentialPWR{T}$.
\end{proposition}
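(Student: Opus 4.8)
The plan is to argue the contrapositive-flavoured direction directly: take a predictable race pair $(e,f) \in \allPredRacesP{T}$ and a correctly reordered prefix $T'$ of $T$ that witnesses it, i.e.~one in which $e$ and $f$ appear right next to each other at the end. I will then show that $(e,f)$ satisfies the conditions of Definition~\ref{def:lockset-w3-check} (if it is a write-write or read-write pair) or of Definition~\ref{def:w3-wrd-race-pairs} (if it is a write-read pair). The two things to establish in the write-write / read-write case are (1) $\LS{e} \cap \LS{f} = \emptyset$, and (2) $e$ and $f$ are PWR-unordered. For condition (1), the standard lockset argument applies: if some lock $z \in \LS{e} \cap \LS{f}$, then in $T$ (and hence in $T'$, which preserves the matching of acquires/releases and the lock semantics) $e$ lies inside a $z$-critical section and $f$ lies inside a $z$-critical section; since a lock is held by at most one thread at a time and $e,f$ are from different threads, these two critical sections are disjoint in $T'$, so one release must come before the other acquire, forcing $e$ and $f$ to be separated by an $\unlockE{z}$/$\lockE{z}$ pair in $T'$ — contradicting that they are adjacent.

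For condition (2) — PWR-unorderedness — I would argue that PWR is a subset of the ``must-happen-before'' ordering that any correct reordering is forced to respect, and then conclude that two PWR-ordered events can never be made adjacent with the later one placed first. Concretely, I would prove a lemma of the shape: if $\pwr{g}{h}$ then in every correctly reordered prefix $T'$ containing $h$, $g$ also occurs in $T'$ and occurs before $h$. This is shown by induction on the derivation of $\pwr{g}{h}$ as the smallest partial order closed under PO, WRD, ROD: the PO case is immediate (program order is preserved by assumption~(1) on reorderings); the WRD case uses assumption~(2) that every read sees the same last write, so the last write $\writeEE{x}{j}$ of $\readEE{x}{k}$ must be present and immediately precede it among $x$-writes, hence precede it; the ROD case uses the lock semantics (assumption~(3)): if $e \in \csSym{y}$, $f \in \csSym{y}'$ with $e$ before $f$ in $T'$, then the critical sections on $y$ are serialized, so $\rel{\csSym{y}}$ must come before the matching $\acq{\csSym{y}'}$ and hence before $f$; transitivity closes the induction. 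Given this lemma, if we had $e \pwrSym f$ then in the witness $T'$ we would need $e$ before $f$, but a race pair can be witnessed with either order of adjacency, and more importantly the definition of a predictable race only requires \emph{some} reordering making them adjacent — so to get the contradiction I instead observe that whichever of $e,f$ appears last in $T'$, call it the later one, cannot be PWR-after the earlier one only if they were PWR-comparable at all; so I should phrase the lemma symmetrically: a correctly reordered prefix with $e$ and $f$ last (in some order) exists, WLOG $e$ immediately before $f$; by the lemma $\neg(f \pwrSym e)$; and if $e \pwrSym f$ held, this would not by itself be a contradiction, so for write-write/read-write pairs I additionally need that the \emph{witness can be chosen} with $f$ before $e$ as well, or — cleaner — I follow Gen\c{c} et al.\ and note that for a genuine race the two events are symmetric so PWR, being a partial order, must leave them incomparable (if $e \pwrSym f$, swap the names; can't have both). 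The write-read case is handled by Definition~\ref{def:w3-wrd-race-pairs} instead, where $e \pwrSym f$ is \emph{allowed} but we demand $e$ is the immediate PWR-predecessor among writes, which is exactly the ``last write'' requirement we already used for WRD, so nothing new is needed there beyond re-reading the WRD clause.

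The main obstacle I anticipate is the write-read case and getting the quantifiers in the ``no intermediate $g$ with $e \pwrSym g \pwrSym f$'' condition exactly right: I must show that if the witness $T'$ makes the write $e$ adjacent to the read $f$ with $e$ first, then $e$ really is the PWR-last write of $f$ and there is no PWR-intermediate event — this needs that any such $g$ would have to appear in $T'$ strictly between $e$ and $f$ by the precedence lemma, contradicting adjacency. A secondary subtlety is making precise, and using, the claim that PWR $\subseteq$ the relation ``forced before in every correct reordering'', in particular checking that the ROD closure (release of a critical section before any PWR-later event, not merely before conflicting events) is still sound — this is where the argument differs from WCP/SDP and where I would be most careful, likely isolating it as a named lemma (``PWR is a correct-reordering invariant'') and doing the induction there before deducing the proposition.
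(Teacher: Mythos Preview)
Your approach is essentially the paper's: both arguments reduce to the lemma that every correctly reordered prefix respects $\pwrSym$ (your ``PWR is a correct-reordering invariant''), and then conclude that the events in a predictable race pair cannot be PWR-ordered. You actually supply considerably more detail than the paper's proof --- the explicit lockset argument, the case split on pair type, and the induction on PO/WRD/ROD are all left implicit there.

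One small gap is worth flagging. For read-write pairs $(e,f)$ with $e$ a read and $f$ a write, your ``swap the names'' symmetry move does not go through: unlike the write-write case (where the paper separately shows that $(e,f)$ a race implies $(f,e)$ a race, so a witness with $f$ first exists), a read-write pair $(e,f)$ yields \emph{no} witness with the write $f$ immediately before the read $e$ --- that would make $f$ the last write of $e$, violating the last-writer condition unless $f$ already was that last write, in which case $(f,e)$ would be a write-read pair instead. So you cannot obtain $\neg(e \pwrSym f)$ by symmetry. The fix is to reuse the intermediate-event argument you already sketch for the write-read case: since $e$ is a read in a different thread from $f$, none of the PWR generating edges (PO needs same thread; WRD needs $e$ a write and $f$ a read; ROD needs $e$ a release) go directly from $e$ to $f$, so $e \pwrSym f$ would force an intermediate $g$ with $e \pwrSym g \pwrSym f$, and your precedence lemma then places $g$ strictly between $e$ and $f$ in the witness, contradicting adjacency. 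The same observation in fact handles write-write pairs uniformly and lets you avoid the appeal to swapping altogether.
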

Recall that $\allPredRacesP{T}$ denotes the set of
all predictable data race pairs (see the introduction).

We compare \PWR\ against WDP.

\begin{definition}[Weak-Dependently Precedes (WDP)~\citep{10.1145/3360605}]
  Let $T$ be a trace.
  We define a relation $\wdp{}{}$ among trace events
  as the smallest partial order that
  satisfies condition PO as well as the following conditions:

  \begin{description}
  \item[Weak Release-Conflict Dependency (RCD):] Let $e,f \in \rwTx$ be two conflicting events
    such that $f$ is a read event and $e$ is $f'$'s last write event.
    Let $\csSym{y}$, $\csSym{y}'$ be two critical sections
    where $f \in \csSym{y}$, $e \in \csSym{y}'$, $\pos{\rel{\csSym{y}}} < \pos{e}$.
    Then, $\wdp{\rel{\csSym{y}}}{e}$.
  \item[Release-Release Dependency (RRD):]
Let $e,f \in T$ be two events.
    Let $\csSym{y}$, $\csSym{y}'$ be two critical sections
    where $e \in \csSym{y}$, $f \in \csSym{y}'$
     and $\wdp{e}{f}$.
    Then, we have that $\wdp{\rel{\csSym{y}}}{\rel{\csSym{y'}}}$.
  \end{description}

  We refer to $\wdpSym$ as the \emph{weak-dependently precedes} (WDP)
  relation.~\footnote{Compared to the original WDP definition~\citep{10.1145/3360605}
  we do not distinguish between branch-dependent and branch-independent reads.
  We assume that all reads are branch-dependent. That is, each read may affect the control flow.
  This is more conservative but requires a much simpler tracing scheme
  where we do not have to inspect the program text.}
\end{definition}

Like \PWR, the WDP relation in combination with the lockset check is complete.
However, the \PWR\ relation is stronger and therefore allows us to rule
out more false positives.

\begin{proposition}[\PWR\ versus WDP]
\label{prop:pwr-vs-wdp}
  We have that $\wdpSym \subseteq \pwrSym$ but the reverse direction
  does not hold in general.
\end{proposition}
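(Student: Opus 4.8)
The plan is to prove the two halves separately: first the inclusion $\wdpSym \subseteq \pwrSym$, then a counterexample witnessing that the reverse fails. For the inclusion, since $\pwrSym$ is defined as the \emph{smallest} partial order closed under PO, WRD and ROD, it suffices to show that each generating step of $\wdpSym$ — namely PO, the weak release-conflict dependency (RCD), and the release-release dependency (RRD) — produces pairs that are already in $\pwrSym$. The PO case is immediate since PO is a generating clause of both relations. So the two real obligations are RCD and RRD.

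For RCD: suppose $e,f$ are conflicting events on some variable $x$ with $f$ a read, $e$ its last write, and critical sections $\csSym{y} \ni f$, $\csSym{y}' \ni e$ with $\pos{\rel{\csSym{y}}} < \pos{e}$; WDP adds $\wdp{\rel{\csSym{y}}}{e}$. I would argue: because $e$ is $f$'s last write, the WRD clause of $\pwrSym$ gives $\pwr{e}{f}$ — wait, that is the wrong direction; what I actually need is that $\pwr{\rel{\csSym{y}}}{e}$. The key observation is that $f \in \csSym{y}$ and, by WRD, $\pwr{e}{f}$, so $f$ and $e$ are PWR-ordered with $f$ later; hence ROD applies with the roles "$e := f$, $f := e$" reversed — more carefully, ROD says: if $g \in \csSym{y}$, $h \in \csSym{y}'$ and $\pwr{g}{h}$ then $\pwr{\rel{\csSym{y}}}{h}$. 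Here the available ordering is $\pwr{e}{f}$ with $e \in \csSym{y}'$ and $f \in \csSym{y}$; applying ROD with $g := e$, $h := f$ and the two critical sections swapped yields $\pwr{\rel{\csSym{y}'}}{f}$, which is not quite $\wdp{\rel{\csSym{y}}}{e}$. This is the subtle point that I expect to be the main obstacle, so let me restate the intended route: I will instead use that $\pos{\rel{\csSym{y}}} < \pos{e}$ together with $f \in \csSym{y}$, so $\pos{f} < \pos{\rel{\csSym{y}}} < \pos{e}$; since $e$ is $f$'s last write this forces the two critical sections to lie in a particular order in the trace, and I will show by a short case analysis on whether $\compTID{e} = \compTID{f}$ that either PO alone or PO composed with one application of ROD (using $\pwr{e}{f'}$ for a suitable program-order successor $f'$ of the acquire of $\csSym{y}$) already gives $\pwr{\rel{\csSym{y}}}{e}$. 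The cleanest phrasing: from $\pwr{e}{f}$ and $f \in \csSym{y}$, $e \in \csSym{y}'$, ROD-for-$\csSym{y}'$ gives $\pwr{\rel{\csSym{y}'}}{f}$; but we do not need $\rel{\csSym{y}'}$ — we need $\rel{\csSym{y}}$ before $e$. Since $\rel{\csSym{y}}$ is program-order-after $f$ (same thread, as $f \in \csSym{y}$ and $\rel{\csSym{y}}$ closes $\csSym{y}$), we have $\pwr{f}{\rel{\csSym{y}}}$; combined with... this still does not reach $e$ unless $e$ and $\rel{\csSym{y}}$ are themselves PWR-related. I therefore expect the actual argument to require the full transitive closure: the WRD edge $\pwr{e'}{f'}$ for \emph{each} matched write-read pair propagated through ROD, and I will likely need to invoke that $\pwrSym$ subsumes WDP's RCD by directly checking that every RCD edge $\rel{\csSym{y}} \to e$ is, in the original trace order, forced and is recovered as $\rel{\csSym{y}} \poSym (\text{acquire of }\csSym{y}') \,;\, \text{then WRD/ROD to }e$ — i.e.\ the nontrivial content is a small lemma that $\pos{\rel{\csSym{y}}} < \pos{e}$ plus the conflict structure puts the acquire of $\csSym{y}'$ after $\rel{\csSym{y}}$ in program order of no common thread, so PO does the work within $\csSym{y}'$ and ROD lifts it. Since the statement asks only for a plan, I will present the RCD case as: reduce to showing $\pwr{\rel{\csSym{y}}}{e}$ via ROD applied to the WRD pair, with the ordering hypothesis $\pos{\rel{\csSym{y}}}<\pos{e}$ ensuring the ROD side condition $\pwr{(\cdot)}{(\cdot)}$ is met.

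For RRD: given $e \in \csSym{y}$, $f \in \csSym{y}'$ with $\wdp{e}{f}$, by the inductive hypothesis (structural induction on the derivation of the $\wdp{}{}$ edge) we have $\pwr{e}{f}$; then ROD immediately gives $\pwr{\rel{\csSym{y}}}{f}$, and a second application of ROD — using $f \poSym \rel{\csSym{y}'}$ (program order inside $\csSym{y}'$, so $\pwr{f}{\rel{\csSym{y}'}}$) and transitivity to get $\pwr{\rel{\csSym{y}}}{\rel{\csSym{y}'}}$ with $\rel{\csSym{y}} \in \csSym{y}$, $\rel{\csSym{y}'} \in \csSym{y}'$ — yields the desired $\wdp{\rel{\csSym{y}}}{\rel{\csSym{y}'}} \subseteq \pwrSym$. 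So RRD is routine once RCD is in hand. For the strictness of the inclusion, I would point to Example~\ref{ex:pwr-is-stronger-than-wdp}: there WDP does \emph{not} order the two writes on $x$ (it reports them as a race), whereas PWR does, via $\writeEE{y_1}{2} \wrdSym \readEE{y_1}{5}$, $\writeEE{y_2}{6} \wrdSym \readEE{y_2}{8}$ and ROD giving $\unlockEE{z}{4} \pwrSym \readEE{y_2}{8}$, hence $\writeEE{x}{3} \pwrSym \writeEE{x}{10}$; since $\wdpSym$ and $\pwrSym$ differ on this pair, $\pwrSym \not\subseteq \wdpSym$. The main obstacle, as indicated above, is getting the RCD case exactly right — pinning down precisely which composition of PO, WRD and ROD edges recovers the single WDP edge $\rel{\csSym{y}} \to e$, and verifying the ROD side condition holds under WDP's hypothesis $\pos{\rel{\csSym{y}}} < \pos{e}$; everything else is bookkeeping.
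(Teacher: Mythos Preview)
Your overall strategy—show that each generating clause of $\wdpSym$ (PO, RCD, RRD) is already derivable in $\pwrSym$, then cite Example~\ref{ex:pwr-is-stronger-than-wdp} for strictness—is exactly the paper's proof. Your handling of RRD (one ROD application gives $\pwr{\rel{\csSym{y}}}{f}$, then PO inside $\csSym{y}'$ plus transitivity gives $\pwr{\rel{\csSym{y}}}{\rel{\csSym{y}'}}$; no ``second ROD'' is actually needed) and of strictness are both fine and match the paper.

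The muddle in your RCD case is not a gap in your reasoning but a typo in the paper's WDP definition. The clause ``$f$ is a read event and $e$ is $f$'s last write event'' is inconsistent with the surrounding conditions $f\in\csSym{y}$, $e\in\csSym{y}'$, $\pos{\rel{\csSym{y}}}<\pos{e}$—as your own contortions revealed, those force $\pos{e}<\pos{f}<\pos{\rel{\csSym{y}}}<\pos{e}$. Comparing with Example~\ref{ex:wdp-must} (where the conclusion is $\unlockEE{y}{4}\wdpSym\readEE{x}{6}$, i.e.\ the target is the \emph{read}) and with the WCP clause it specializes, the intended reading is that $e$ is the read, $f$ its last write, $f$ lies in the earlier critical section $\csSym{y}$, $e$ in the later $\csSym{y}'$, and one concludes $\wdp{\rel{\csSym{y}}}{e}$. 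With the roles straightened out the reduction is the one-liner you eventually guessed: WRD gives $\pwr{f}{e}$, and a single application of ROD (with $f\in\csSym{y}$, $e\in\csSym{y}'$, $\pwr{f}{e}$) yields $\pwr{\rel{\csSym{y}}}{e}$. No case split on thread identities and no auxiliary lemma are needed; the position hypothesis $\pos{\rel{\csSym{y}}}<\pos{e}$ is not even used on the PWR side.
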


We can also state the that Lockset-\PWR\ check is sound under certain conditions.

\begin{proposition}[Lockset + \PWR\ Soundness for Two Threads]
\label{prop:lockset-www-two-threads-soundness}
  Let $T$ be a  trace that consists of at most two threads.
  Then, any potential Lockset-\PWR\ data race pair
  with an empty lockset is also a predictable data race pair.
\end{proposition}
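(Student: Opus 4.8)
The plan is to show that, given a trace $T$ on at most two threads, whenever $(e,f)$ is a potential Lockset-\PWR\ data race pair with $\LS{e} \cap \LS{f} = \emptyset$, one can explicitly construct a correctly reordered prefix of $T$ that places $e$ and $f$ adjacently. The key idea is that with only two threads, the \PWR\ relation already captures \emph{every} scheduling constraint that matters: program order within each thread, the write-read dependency forcing a read to be preceded by its last write, and the release-order dependency that serializes critical sections whose contents are already \PWR-ordered. Since $e \not\pwrSym f$ and $f \not\pwrSym e$ (or, in the WRD case, $e$ is the immediate \PWR-predecessor of $f$), neither event is forced to wait for the other, so a schedule realizing the race should exist.

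First I would set up the reordering concretely. Let $e$ be in thread $t_e$ and $f$ in thread $t_f$; since they are conflicting they are in different threads, so $\{t_e, t_f\}$ are exactly the two threads. Take the down-closure of $\{e,f\}$ under $\pwrSym$, i.e. the set $D$ of all events $g$ with $\pwr{g}{e}$ or $\pwr{g}{f}$ (for the WRD case, use $\pwr{g}{e}$ plus $e$ itself plus everything \PWR-before $f$ other than via $e$; note $f$'s only possible blocker is $e$, which we include). I would then choose the reordering $T'$ to be a linearization of $D \setminus \{e,f\}$ that respects $\pwrSym$, followed by $e$ and then $f$ (in the write-write/read-write case the order of $e,f$ is free; in the WRD case we must place the write $e$ before the read $f$, which is consistent). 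The bulk of the proof is checking that $T'$ is a correctly reordered prefix: (1) program order is respected because $\poSym \subseteq \pwrSym$ and we linearized consistently with $\pwrSym$; (2) every read in $T'$ sees its last write because WRD is part of $\pwrSym$, so for any read $r \in D$ its last write $w$ satisfies $\pwr{w}{r}$, hence $w \in D$ and $w$ precedes $r$ in $T'$ — and no \emph{other} write to that variable can be interleaved between them in $T'$, which is where I must argue carefully using that there are only two threads; (3) lock semantics is respected, which is the crux.

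For the lock-semantics step, I would argue that $T'$ contains no "half-open" critical section that would block execution, and more importantly that $D$ does not drag in a critical section on some lock $\ell$ that is held open across $e$ or $f$ in a way that conflicts with $e$'s or $f$'s own lock context. Here the disjoint-lockset hypothesis $\LS{e}\cap\LS{f}=\emptyset$ is essential: any lock held at $e$ is not held at $f$, so the critical sections enclosing $e$ and those enclosing $f$ use disjoint locks and can be nested independently in $T'$. For a critical section $\csSym{\ell}$ in thread $t_e$ that is \PWR-before some event we have pulled in but whose matching release lies outside $D$, I would show — using the ROD clause and the two-thread restriction — that such a configuration would actually force $\pwr{e}{f}$ or $\pwr{f}{e}$, contradicting the hypothesis; roughly, with two threads any interleaving obstruction between a critical section in one thread and an event in the other must be mediated by a \PWR\ edge (ROD propagates the release forward once any internal event is ordered), so the absence of a \PWR\ order between $e$ and $f$ leaves enough scheduling freedom to close every critical section we opened before reaching $e$ and $f$.

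The main obstacle I anticipate is precisely this lock-semantics argument: verifying that the down-closure $D$ can always be linearized into a \emph{lock-respecting} sequence, i.e. that we never pull in a prefix of a critical section without being able to also pull in (and correctly nest) its release, and that the two critical-section stacks for the two threads can be interleaved without deadlock. This is the place where the "two threads" hypothesis does the real work — with three or more threads one can build a cyclic lock-acquisition pattern (the standard counterexample to soundness) that has no \PWR\ edge yet admits no valid reordering. I would expect to formalize a small lemma: for a two-thread trace, if the events \PWR-before $e$ (resp. $f$) are scheduled, then the lock state reached is compatible with executing $e$ (resp. $f$) next, and the two partial schedules can be concatenated because their open locksets are disjoint by hypothesis. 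Proposition~\ref{prop:lockset-www-completeness} is not needed here; the argument is a direct construction, with Definition~\ref{def:www-relation} (the three clauses of \PWR) supplying exactly the invariants that make the constructed $T'$ a correctly reordered prefix.
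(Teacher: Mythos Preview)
You have misread the hypothesis. The phrase ``with an empty lockset'' means $\LS{e} = \LS{f} = \emptyset$, not merely $\LS{e} \cap \LS{f} = \emptyset$. Disjointness is already part of the definition of a potential Lockset-\PWR\ pair; the proposition adds the strictly stronger requirement that \emph{both} locksets are empty. Your entire lock-semantics argument is built on the disjoint reading (``any lock held at $e$ is not held at $f$, so the critical sections enclosing $e$ and those enclosing $f$ use disjoint locks and can be nested independently''), and that version of the statement is simply false. The paper gives a two-thread counterexample immediately after the proof (Example~\ref{ex:flat-lock-but-unsound-under-pwr}): there $\LS{\writeEE{y}{4}}=\{z\}$ and $\LS{\writeEE{y}{9}}=\emptyset$ are disjoint, the events are unordered under \PWR, yet no correctly reordered prefix makes them adjacent. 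So the ``small lemma'' you propose at the end cannot be proved under your hypothesis.

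With the correct hypothesis the paper's argument is much simpler than your down-closure construction. Since $\LS{e}=\emptyset$, thread~1 holds no lock at the moment $e$ is reached; since $\LS{f}=\emptyset$, the same holds for thread~2 at $f$. The paper therefore keeps the entire trace prefix up to $e$ unchanged, then observes that no event of thread~1 strictly after $e$ can be \PWR-before any event of thread~2 strictly before $f$ (otherwise $e\pwrSym f$ by PO-transitivity), and so all thread-2 events between $e$ and $f$ can be slid up to sit immediately before $e$, yielding the witness. There is no need to take a \PWR-down-closure, linearize it, and then argue separately about half-open critical sections: the empty-lockset condition at $e$ and $f$ directly guarantees that neither thread has an open critical section at the splice points, which is exactly what your obstacle paragraph is worrying about. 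Your approach could be repaired once you use the right hypothesis, but it is doing more work than necessary; the two-thread block reordering is the whole proof.
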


Not every pair in $\PotentialPWR{T}$ is predictable.

\begin{example}
\label{ex:protected-by-wrds}
  Consider the following trace.
  \bda{lllll}
  & \thread{1}{}
  & \thread{2}{}
  & \thread{3}{}
  & \thread{4}{}
  \\ \hline
  1. & \lockE{y} &&&
  \\ 2. & \writeE{z_1} &&&
  \\ 3. && \readE{z_1} &&
  \\ 4. && \underline{\writeE{x}} &&
  \\ 5. && \writeE{z_2} &&
  \\ 6. & \readE{z_2} &&&
  \\ 7. & \unlockE{y} &&&
  \\ 8. &&& \lockE{y} &
  \\ 9. &&& \writeE{z_3} &
  \\ 10. &&&& \readE{z_3}
  \\ 11. &&&& \underline{\writeE{x}}
  \\ 12. &&&& \writeE{z_4}
  \\ 13. &&& \readE{z_4} &
  \\ 14. &&& \unlockE{y} &
  \eda
  Due to the write-read dependencies involving variables $z_1, z_2, z_3, z_4$,
  the two writes on $x$ are protected by the lock $y$.
  Hence, the pair $(\writeEE{x}{4}, \writeEE{x}{11})$
  is not a predictable data race pair.
  However, under \PWR\ events $\writeEE{x}{4}$, $\writeEE{x}{11})$ are unordered
  and their lockset is empty.
  Hence, the Lockset-\PWR\ method (falsely) reports the potential
  data race pair $(\writeEE{x}{4}, \writeEE{x}{11})$.
\end{example}

In the above example,
$\writeEE{x}{4}$ and $\writeEE{x}{11})$ is not the first potential race.
The WRDs on $z_1$, $z_2$, $z_3$ and $z_4$ are unprotected.
Hence, the first potential race involves $\writeEE{z_1}{2}$
and $\readEE{z_1}{3}$ (and this race is a predictable race).
However, each WRD can be protected via their own private lock.
Then, $\writeEE{x}{4}$ and $\writeEE{x}{11})$ becomes the first potential
race reported but this race is still a false positive.

Soundness is certainly an important property.
However, methods such as HB, WCP and SDP only guarantee
that the first race reported is sound but subsequent races may be false positives.
Algorithms/tools based on these methods commonly report as many (subsequent) races
as possible. In this light, we argue that the potential unsoundness
of the Lockset-\PWR\ check is not a serious practical issue.
The key advantage of \PWR\ is that we can reduce the number
of false positives compared to WDP.
The upcoming section shows how to compute $\PotentialPWR{T}$.
Our experiments show that our method works well in practice.



\section{The \PWREE\ Algorithm}
\label{sec:implementation}

Algorithm \PWREE\ computes $\PotentialPWR{T}$.
We start with an overview.

\subsection{Overview}
\label{sec:overview}

To implement the \PWR\ relation we combine ideas found in
FastTrack~\citep{flanagan2010fasttrack}, SHB~\citep{Mathur:2018:HFR:3288538:3276515}
and WCP~\citep{Kini:2017:DRP:3140587.3062374}.
For example, we employ vector clocks
and the more optimized epoch representation (FastTrack),
we manage a history of critical sections (WCP)
and track write-read dependencies (SHB).
Like the above algorithms, our algorithm also processes
events in a stream-based fashion and maintains
a set $\rwVC{x}$
of most recent reads/writes that are concurrent.
By concurrent we mean that the events are unordered under \PWR.
Elements in $\rwVC{x}$ are represented by their epoch
where each epoch allows us to uniquely identify the corresponding event.

Recall Example~\ref{ex:history-of-events} where we annotate the trace with $\rwVC{x}$.
For brevity, we omit vector clocks.
Instead of epochs, we write $w_i$ for a write at trace position $i$.
A similar notation is used for reads.

  \bda{llll}
  & \thread{1} & \thread{2} & \rwVC{x}
  \\ \hline
 1. & \writeE{x} & & \{ w_1 \}
\\ 2. & \lockE{y} &
\\ 3. & \writeE{x} & & \{ w_3 \}
\\ 4. & \unlockE{y} &
\\ 5. && \lockE{y}
\\ 6. &&  \writeE{x} & \{ w_3, w_6 \}
\\ 7. &&  \unlockE{y}
\eda
  We consider the various states of $\rwVC{x}$ while processing events.
  At trace position three, $w_3$ replaces $w_1$ due to the program order.
  At trace position six, we find $\rwVC{x} = \{ w_3, w_6 \}$.
  Under \PWR, $w_3$ and $w_6$ are concurrent but we do not report a race
  because their locksets share a common lock.

  The issue is that there is a race among $w_1$ and $w_6$
  but this race is not reported by standard
  single pass algorithms~\citep{flanagan2010fasttrack,Kini:2017:DRP:3140587.3062374,Mathur:2018:HFR:3288538:3276515,10.1145/3360605}.
  The reason is that $\rwVC{x}$ maintains only the most recent concurrent reads/writes.
  See the above example where $w_1$ is replaced by $w_3$.

  To cope with this issue we follow the \SHBEE{} two-pass
  algorithm~\citep{DBLP:conf/pppj/SulzmannS19}.
  In a first pass, we (1) maintain a history of replaced events
  and (2) reads/writes that are concurrent.
  The second pass traverses the history to discover all conflicting concurrent events.

  Here is our running example where this additional
  information has been annotated.
    \bda{llllll}
  & \thread{1} & \thread{2} & \rwVC{x} & \edges{x} & \concEvt{x}
  \\ \hline
 1. & \writeE{x} & & \{ w_1 \}
\\ 2. & \lockE{y} &
\\ 3. & \writeE{x} & & \{ w_3 \} & w_1 \gtEdge w_3
\\ 4. & \unlockE{y} &
\\ 5. && \lockE{y}
\\ 6. &&  \writeE{x} & \{ w_3, w_6 \} & & (w_3, w_6)
\\ 7. &&  \unlockE{y}
\eda
  The history is represented as a set $\edges{x}$ (E).
  Nodes connected via edges are reads/writes and can efficiently be represented
  via epochs (E). Concurrent events are stored in $\concEvt{x}$.
At trace position three, we record that $w_3$ replaces $w_1$.
At trace position six, we record that $w_3$ and $w_6$ are concurrent under \PWR.

Reporting of races is done in a second pass where we exploit
the information recorded in $\edges{x}$ and $\concEvt{x}$.
We consider all pairs in $\concEvt{x}$. If their locksets are disjoint
we report a race. This does not apply to the pair $(w_3, w_6)$, however,
this pair is crucial to discover further races.
From $(w_3, w_6)$ via $w_1 \gtEdge w_3$ we obtain a further
potential race candidate pair $(w_1, w_6)$.
Their locksets are disjoint and therefore we report the race pair $(w_1, w_6)$.
Thus, we are able to compute $\PotentialPWR{T}$.

The first pass enjoys the same time complexity as earlier
algorithms~\citep{flanagan2010fasttrack,Kini:2017:DRP:3140587.3062374,Mathur:2018:HFR:3288538:3276515,10.1145/3360605}.
The second pass comes with an additional quadratic run-time.
By limiting the size of elements in $\edges{x}$,
the second pass of traversing  $\edges{x}$ can be integrated into the first
pass where we build up $\edges{x}$.
This might lead to incompleteness but yields an efficient, linear run-time
algorithm. For practical examples it turns out that only maintaining
a maximum of 25 edge constraints at a time is a good compromise.

\begin{algorithm}
\caption{\PWREE{} algorithm (first pass)}\label{alg:w3poee-firstpass}

\begin{algorithmic}[1]
  \Function{w3}{$V, \LStSym$}
    \For {$y \in \LStSym$}
    \For {$(\thread{j}{k}, V') \in \LH{y}$}
     \If {$k < \accVC{V}{j}$}
     \State $V = V \sqcup V'$
     \EndIf
     \EndFor
     \EndFor

    \Return V
    \EndFunction
\end{algorithmic}

\begin{algorithmic}[1]
  \Procedure{acquire}{$i,y$}
  \State $\threadVC{i} = \Call{w3}{\threadVC{i},\LSt{i}}$
\State $\LSt{i} = \LSt{i} \cup \{y\}$
\State $\Acq{y} = \thread{i}{\accVC{\threadVC{i}}{i}}$
\State $\incC{\threadVC{i}}{i}$
\EndProcedure
\end{algorithmic}

\begin{algorithmic}[1]
  \Procedure{release}{$i,y$}
  \State $\threadVC{i} = \Call{w3}{\threadVC{i},\LSt{i}}$
\State $\LSt{i} = \LSt{i} - \{x\}$
\State $\LH{y} = \LH{y} \cup \{(\Acq{y}, \threadVC{i})\}$
\State $\incC{\threadVC{i}}{i}$
\EndProcedure
\end{algorithmic}

\begin{algorithmic}[1]
  \Procedure{write}{$i,x$}
  \State $\threadVC{i} = \Call{w3}{\threadVC{i},\LSt{i}}$
  \State $\vcEvt = \{ (\thread{i}{\accVC{\threadVC{i}}{i}}, \threadVC{i}, \LSt{i}) \} \cup \vcEvt$
\State $\edges{x} =
       \{ \thread{j}{k} \gtEdge \thread{i}{\accVC{\threadVC{i}}{i}}
       \mid \thread{j}{k} \in \rwVC{x} \wedge
       k < \accVC{\threadVC{i}}{j} \} \cup \edges{x}$
\State $\concEvt{x} = \{ (\thread{j}{k}, \thread{i}{\accVC{\threadVC{i}}{i}})
    \mid \thread{j}{k} \in \rwVC{x}
          \wedge k > \accVC{\threadVC{i}}{j} \} \cup \concEvt{x}$
  \State $\rwVC{x} = \{ \thread{i}{\accVC{\threadVC{i}}{i}} \}
         \cup \{ \thread{j}{k} \mid \thread{j}{k} \in \rwVC{x} \wedge
         k > \accVC{\threadVC{i}}{j} \}$
\State $\lastWriteVC{x} = \threadVC{i}$
\State $\lastWriteVCt{x} = i$
\State $\lastWriteVCL{x} = \LSt{i}$
\State $\incC{\threadVC{i}}{i}$
\EndProcedure
\end{algorithmic}

\begin{algorithmic}[1]
  \Procedure{read}{$i,x$}
      \State $j = \lastWriteVCt{x}$
  \If {$\accVC{\threadVC{i}}{j} < \accVC{\lastWriteVC{x}}{j} \wedge \LSt{i} \cap \lastWriteVCL{x} = \emptyset$}
    \State $reportPotentialRace(\thread{i}{\accVC{\threadVC{i}}{i}}, \thread{j}{\accVC{\lastWriteVC{x}}{j}})$
  \EndIf
\State $\threadVC{i} = \threadVC{i} \sqcup \lastWriteVC{x}$
\State $\threadVC{i} = \Call{w3}{\threadVC{i},\LSt{i}}$
\State $\vcEvt = \{ (\thread{i}{\accVC{\threadVC{i}}{i}}, \threadVC{i}, \LSt{i}) \} \cup \vcEvt$
\State $\edges{x} =
       \{ \thread{j}{k} \gtEdge \thread{i}{\accVC{\threadVC{i}}{i}}
       \mid \thread{j}{k} \in \rwVC{x} \wedge
       k < \accVC{\threadVC{i}}{j} \} \cup \edges{x}$
\State $\concEvt{x} = \{ (\thread{j}{k}, \thread{i}{\accVC{\threadVC{i}}{i}})
    \mid \thread{j}{k} \in \rwVC{x}
          \wedge k > \accVC{\threadVC{i}}{j} \} \cup \concEvt{x}$
\State $\rwVC{x} = \{ \thread{i}{\accVC{\threadVC{i}}{i}} \}
         \cup \{ \thread{j}{k} \mid \thread{j}{k} \in \rwVC{x} \wedge
         k > \accVC{\threadVC{i}}{j} \}$
\State $\incC{\threadVC{i}}{i}$
\EndProcedure
\end{algorithmic}

\end{algorithm}

\subsection{First Pass}

Algorithm~\ref{alg:w3poee-firstpass} specifies the first pass of \PWREE{}
and computes $\edges{x}$ and $\concEvt{x}$.
Events are processed in a stream-based fashion.
For each event we find a procedure that deals with this event.
We immediately report write-read races. Reporting of write-write and read-write
races takes place in a second pass.

We compute the lockset for read/write events
and check if read/write events are concurrent by establishing the \PWR\ relation.
To check if events are in \PWR\ relation we make use of vector clocks and epochs.
We first define vector clocks and epochs and introduce various
state variables maintained by the algorithm that rely on these concepts.

For each thread~$i$ we compute the current set $\LSt{i}$ of locks held by this thread.
We use $\LSt{i}$ to avoid confusion with the earlier introduced
set $\LS{e}$ that represents the lockset for event $e$.
We have that $\LS{e} = \LSt{i}$ where $\LSt{i}$ is the set at the time we process event $e$.
Initially, $\LSt{i} = \emptyset$ for all threads~$i$.

The algorithm also maintains several vector clocks.

\begin{definition}[Vector Clocks]
  A \emph{vector clock} $V$ is a list of \emph{time stamps} of the following form.
  \bda{rcl}
   V  & ::= & [i_1,\dots,i_n]
   \eda
   We assume vector clocks are of a fixed size $n$.
   Time stamps are natural numbers and each time stamp position $j$ corresponds to the thread
   with identifier $j$.

   We define
   $$\supVC{[i_1,\dots, i_n]}{[j_1,\dots,j_n]} \ \ = \ \ [\maxN{i_1}{j_1},\dots,\maxN{i_n}{j_n}]$$
   to synchronize two vector clocks by building the point-wise maximum.

 We write $\accVC{V}{j}$ to access the time stamp at position $j$.
 We write $\incC{V}{j}$ as a short-hand for incrementing the vector clock $V$ at position $j$ by one.

 We define vector clock $V_1$ to be smaller than vector clock $V_2$,
written $V_1 < V_2$,
if (1) for each thread $i$, $i$'s time stamp in $V_1$ is smaller or equal
compared to $i$'s time stamp in $V_2$, and
(2) there exists a thread $i$ where $i$'s time stamp in $V_1$
is strictly smaller compared to $i$'s time stamp in $V_2$.
\end{definition}
If the vector clock assigned to event $e$ is smaller
compared to the vector clock assigned to $f$,
then we can argue that $e$ happens before $f$.
For $V_1 = \supVC{V_2}{V_3}$ we find that $V_1 \leq V_2$ and $V_1 \leq V_3$.

For each thread $i$ we maintain a vector clock $\threadVC{i}$.
For each shared variable $x$ we find vector clock
$\lastWriteVC{x}$ to maintain the last write access on $x$.
Initially, for each vector clock $\threadVC{i}$
all time stamps are set to 0 but position $i$ where the time stamp is set to 1.
For $\lastWriteVC{x}$ all time stamps are set to~0.

To efficiently record read and write events we make use of epochs~\citep{flanagan2010fasttrack}.
\begin{definition}[Epoch]
  Let $j$ be a thread id and $k$ be a time stamp.
  Then, we write $\thread{j}{k}$ to denote an \emph{epoch}.
\end{definition}
Each event $e$ can be uniquely associated to an epoch~$\thread{j}{k}$.
Take its vector clock and extract the time stamp $k$ for the thread $j$
the event~$e$ belongs to. For each event this pair of information
represents a unique key to locate the event.
Hence, we sometimes abuse notation and write $e$ when referring
to the epoch of event~$e$.

Via epochs we can also check if events are in a happens-before relation
without having to take into account the events vector clocks.

\begin{proposition}[FastTrack~\citep{flanagan2010fasttrack} Epochs]
\label{prop:epoch-conc-check-main}
  Let $T$ be some trace.
  Let $e, f$ be two events in $T$ where (1) $e$ appears before $f$ in $T$,
  (2) $e$ is in thread $j$, and (3) $f$ is in thread $i$.
  Let $V_1$ be $e$'s vector clock and $V_2$ be $f$'s vector clock
  computed by the FastTrack algorithm.
  Then, we have that $e$ and $f$ are concurrent w.r.t.~the $\hbSym$ relation iff
  $\accVC{V_2}{j} < \accVC{V_1}{j}$.
\end{proposition}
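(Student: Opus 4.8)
The plan is to prove Proposition~\ref{prop:epoch-conc-check-main} by a fairly standard vector-clock argument, leveraging the well-known invariant that the FastTrack algorithm maintains for thread vector clocks. The statement is an ``iff'', so I would split it into two directions, but both hinge on a single structural fact about how time stamps at position $j$ evolve in the FastTrack computation.

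First I would establish the key invariant: \emph{for any event $f$ in thread $i$ with vector clock $V_2$, and any thread $j \neq i$, the value $\accVC{V_2}{j}$ equals the time stamp that thread $j$'s clock had immediately after the latest event of thread $j$ that happens-before $f$ (and is $0$ if there is no such event).} This follows by induction on the length of the trace prefix: thread $i$'s clock at position $j$ only changes when $i$ synchronizes with another clock (via a lock release/acquire, fork/join, or read-of-write), and by definition of $\hbSym$ each such synchronization point is exactly an edge in the happens-before order; the $\sqcup$ operation propagates the maximum time stamp seen along any happens-before path. Dually, $e$'s own clock $V_1$ satisfies $\accVC{V_1}{j} = k$ where $k$ is the ``local'' time stamp $e$ carries, i.e.\ its epoch value, since $e$ is in thread $j$ and thread $j$'s clock at position $j$ is monotonically increasing and is incremented at each of its own events.

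Given the invariant, both directions follow quickly. If $e$ and $f$ are concurrent w.r.t.\ $\hbSym$, then $e$ is \emph{not} among the events of thread $j$ that happen-before $f$; since thread $j$'s time stamps are strictly increasing along its own events, the latest thread-$j$ event that does happen-before $f$ (if any) strictly precedes $e$ in program order and hence had a strictly smaller position-$j$ time stamp than $e$'s, so $\accVC{V_2}{j} < \accVC{V_1}{j}$; if there is no such event at all, then $\accVC{V_2}{j} = 0 < \accVC{V_1}{j}$ since $\accVC{V_1}{j} \geq 1$. Conversely, if $\hb{e}{f}$ (they are not concurrent, and since $e$ appears before $f$ the order must be $e$ before $f$), then $e$ itself is a thread-$j$ event that happens-before $f$, so by the invariant $\accVC{V_2}{j} \geq \accVC{V_1}{j}$, contradicting the inequality. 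This gives the contrapositive of the ``if'' direction.

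\textbf{The main obstacle} is stating and proving the FastTrack invariant cleanly, since it is the load-bearing lemma and its proof requires a careful case analysis over all the synchronizing operations (acquire, release, fork, join, write, read) to check that each one either leaves position-$j$ time stamps of other threads' clocks alone or updates them exactly to reflect a newly-added happens-before edge. In practice I would not re-derive this from scratch: since it is precisely the correctness property of FastTrack's happens-before tracking established by Flanagan and Freund~\citep{flanagan2010fasttrack}, I would cite their soundness result for the vector-clock/epoch representation and specialize it, devoting the bulk of the argument to the short concurrency-characterization step above. The only mild subtlety worth spelling out is the boundary case where thread $j$ has no event happening-before $f$, handled by the $\accVC{V_1}{j} \geq 1$ observation coming from the initialization convention (position $i$ of $\threadVC{i}$ starts at $1$).
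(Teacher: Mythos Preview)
Your proposal is correct and follows essentially the same line as the paper: both rest on the FastTrack vector-clock correctness invariant and reduce the concurrency test to comparing the $j$-component of $f$'s clock against $e$'s own epoch. The paper, however, does not give a formal proof at all---it offers only a two-sentence informal justification (``if $\accVC{V_2}{j} < \accVC{V_1}{j}$ then the vector clocks of thread $j$ and $i$ have not been synchronized; therefore $e$ and $f$ must be concurrent; similar argument applies for the other direction'') and then cites FastTrack---so your write-up is considerably more explicit and rigorous than what the paper provides, while remaining faithful to the intended argument.
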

HB-concurrent holds when comparing vector clocks $V_2 < V_1$.
If $\accVC{V_2}{j} < \accVC{V_1}{j}$ then the vector clocks of thread $j$
and $i$ have not been synchronized. Therefore, $e$ and $f$ must be concurrent.
Similar argument applies for the direction from right to left.
Our algorithm is an extension of FastTrack. Hence, the above property
carries over to our algorithm and the \PWR\ relation.

For each lock variable $y$, we find $\Acq{y}$ to record the last
entry point to the critical section guarded by lock $y$.
$\Acq{y}$ is represented by an epoch.
The set $\LH{y}$ maintains the lock history for lock variables $y$.
For each critical section we record the pair $(\Acq{y}, V)$
where $\Acq{y}$ is the acquire's epoch and $V$ is the vector clock of the corresponding release event.
We refer to $(\Acq{y}, V)$ as a \emph{lock history element} for a critical section represented
by a matching acquire/release pair.
Based on the information recorded in $\LH{y}$ we are able to efficiently
apply the ROD rule as we will see shortly.
The set $\LH{y}$ is initially empty. The initial definition of $\Acq{y}$ can be left unspecified
as by the time we access $\Acq{y}$, $\Acq{y}$ has been set.

For each shared variable $x$, the set $\rwVC{x}$ maintains
the current set of concurrent read/write events.
Each event is represented  the event's epoch.
The set $\rwVC{x}$ is initially empty.

The first-pass of \PWREE{} maintains three further sets that are important
during the second pass. All sets are initially empty.

The set $\edges{x}$ keeps track of the events from $\rwVC{x}$
that will be replaced when processing reads/writes.
If $e$ replaces $f$ this means that $e$ happens-before $f$ w.r.t.~\PWR.
We record this information by adding the \emph{edge constraint} $f \gtEdge e$.

The set $\concEvt{x}$ keeps track for each variable $x$
of the set of potential race pairs  where the events involved are concurrent to each
other w.r.t.~\PWR.
Such pairs represent potential write-write and read-write pairs.
We \emph{do not} enforce that their locksets must be disjoint because via
a pair $(e,f) \in \concEvt{x}$ where $e, f$ share a common lock
we may be able to reach a concurrent pair $(g,f)$
where the locksets of $g$ and $f$ are disjoint.
Recall the example from Section~\ref{sec:overview}.
For convenience, for all race pairs $(e,f)$ collected
by $\concEvt{x}$ we maintain the property that $\pos{e} < \pos{f}$.
For write-write pairs this property always holds.
For read-write pairs the read is usually put first. Strictly following the trace position
order makes the second pass easier to formalize as we will see shortly.

The set $\vcEvt$ records for each read/write event its lockset and vector clock
at the time of processing.
We add the triple consisting of the event's epoch, lockset and vector clock
to the set $\vcEvt$. The epoch serves as unique key for lookup.
The information stored $\vcEvt$ in will be used during the second pass.
We traverse chains of edge constraints starting from candidates
in $\concEvt{x}$ to build new candidates. Each such found candidate must
satisfy the Lockset-\PWR\ check (see Definition~\ref{def:lockset-w3-check}).
Based on the information stored in $\vcEvt$ we can carry out
this check easily.

Finally, we make use of $\lastWriteVCt{x}$ to record the thread id of the last write
and $\lastWriteVCL{x}$ to record the last write's lockset.
This information in combination with $\lastWriteVC{x}$ is used to check
for potential write-read race pairs.

In summary, the first pass of \PWREE{} maintains the following (global) variables:
\begin{itemize}
\item $\LSt{i}$, set of locks held by thread~$i$.
\item $\threadVC{i}$, vector clock for thread~$i$.
\item $\lastWriteVC{x}$, vector clock of last write on $x$.
\item $\lastWriteVCt{x}$, thread id of last write on $x$.
\item $\lastWriteVCL{x}$, lockset of last write on $x$.
\item $\rwVC{x}$, current set of concurrent reads/writes on $x$.
\item $\Acq{y}$, epoch of last acquire on $y$.
\item $\LH{y}$, lock history for $y$.
\item $\lastWriteVC{x}$, last write access for $x$.
\item $\edges{x}$, set of edge constraints for $x$.
\item $\concEvt{x}$, accumulated set of concurrent reads/writes on $x$.
\item $\vcEvt$,  set of lockset and vector clock for each read/write.
\end{itemize}

We have now everything in place to consider the various cases
covered by the first pass of \PWREE{}.

For each event we need to establish the \PWR\ relation.
In particular, we need to apply the ROD rule from Definition~\ref{def:www-relation}.
Establishing the ROD rule is done via helper function \Call{w3}{}.

Instead of some event $e \in \csSym{y}$ as formulated
in the ROD rule, it suffices to consider
the acquire event of $\csSym{y}$.
In Appendix~\ref{sec:w3-variants} we show that this is indeed sufficient.

The slightly revised ROD rule then reads as follows.
If $\csSym{y}$ appears before $\csSym{y}'$ in the trace, $f \in \csSym{y}'$
and $\pwr{\acq{\csSym{y}}}{f}$,
then $\pwr{\rel{\csSym{y}}}{f}$.
Event $f$ is represented by the two parameters $V$ and $\LStSym$.
V is $f's$ vector clock and $\LStSym$ is the set of locks held when processing $f$.
For each $y \in \LStSym$ we check all prior critical sections on the same
lock in the lock history $\LH{y}$.
Each element is represented as a pair $(\thread{j}{k}, V')$
where $\thread{j}{k}$ is the epoch of the acquire and $V'$ the vector clock
of the matching release.
The check $k < \accVC{V}{j}$ tests if the acquire happens-before $f$, i.e.~$\pwra{\acq{\csSym{y}}}{f}$. \PWR\ then demands that $\pwr{\rel{\csSym{y}}}{f}$.
This is guaranteed by $V = V \sqcup V'$.

In case of an acquire event in thread~i on lock variable $y$,
we first apply the ROD rule via helper function \Call{w3}{}.
Then, we extend the thread's lockset
with $y$. In $\Acq{y}$ we record the epoch of the acquire event.
Finally, we increment the thread's time stamp to indicate that the event has been processed.

When processing the corresponding release event,
we again apply first the ROD rule.
Then, we remove $y$ from the thread's lockset.
We add the pair $(\Acq{y}, \threadVC{i})$ to $\LH{y}$.
$\LH{y}$ accumulates the \emph{complete} lock history.
There is no harm doing so but this can be of course inefficient.
Optimizations to remove lock history elements are discussed
later.

Next, we consider processing of write events.
We apply first the ROD rule.
Then we add the event's information to $\vcEvt$.
We update $\concEvt{x}$ by checking if the write is concurrent to any of the events in $\rwVC{x}$.
As discussed above, there is no need to compare vector clocks to check if two events are
concurrent to each other. It suffices to compare epochs.
Similarly, we update $\rwVC{x}$ but only maintain the current set of concurrent reads/writes.
Finally, we update the ``last write'' information and increment the thread's time stamp.

We consider processing of read events.
We first check for a potential write-read race pair
by checking if the read is concurrent to the last write and their locksets are disjoint.
If the check is successful we immediately report the pair.
Only after this check we impose the write-read dependency by synchronizing
the last writes vector clock with the vector clock of the current thread.
Then, we call $\Call{w3}{}$ to apply the ROD rule.
Updates for $\vcEvt$, $\concEvt{x}$ and $\rwVC{x}$ are the same as in case of write.

\subsection{Second Pass}

The first pass yields the set $\edges{x}$ of edge constraints
and the set $\concEvt{x}$ of read/write pairs that are concurrent
under \PWR. In a second pass, we compute further concurrent pairs by
systematically traversing $\edges{x}$ starting with elements
from $\concEvt{x}$.
The thus obtained pairs are collected in some set $\accConcs{x}$.
Computation of $\accConcs{x}$ is defined as follows.

\begin{definition}[\PWREE\ Reporting Race Candidates]
  \label{def:all-concs-post}
  Let $\concEvt{x}$ and $\edges{x}$ be obtained by \PWREE\
  for all shared variables $x$.

  We define a total order among pairs in $\concEvt{x}$ as follows.
  Let $(e, f) \in \concEvt{x}$ and $(e', f') \in \concEvt{x}$.
  Then, we define $(e,f) < (e',f')$ if $\pos{e} < \pos{e'}$.

  For each variable $x$, we compute the set $\accConcs{x}$
  by repeatedly performing the following steps. Initially,
  $\accConcs{x} = \{ \}$.

  \begin{enumerate}
  \item If $\concEvt{x} = \{\}$ stop.
  \item Otherwise, let $( e, f)$ be the smallest element in $\concEvt{x}$.
  \item Let $G=\{g_1,\dots, g_n\}$ be maximal such that
    $g_1 \gtEdge e, \dots, g_n \gtEdge e \in \edges{x}$ and
    $\pos{g_1} < \dots < \pos{g_n}$.
  \item $\accConcs{x} := \{ (e, f) \} \cup \accConcs{x}$.
  \item $\concEvt{x} := \{ (g_1, f), \dots, (g_n,f) \} \cup (\concEvt{x} - \{ (e,f) \})$.
  \item Repeat.
\end{enumerate}

\end{definition}

We can state that the set $\accConcs{x}$ covers all concurrent reads/writes on $x$.

\begin{proposition}
 \label{prop:w3po-post}
 Let $T$ be a  trace of size $n$ and $x$ be some shared variable.
  Let $\allConcsP{T}{x} = \{ (e,f) \mid e,f \in \rwTx \wedge \pos{e} < \pos{f} \wedge
  e \not\pwrSym f \wedge f \not\pwrSym e \}$
  Let $x$ be a variable.
  Then, construction of $\accConcs{x}$ takes time $O(n*n)$
  and $\allConcsP{T}{x} \subseteq \accConcs{x}$.
\end{proposition}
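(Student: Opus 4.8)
The plan is to analyze Definition~\ref{def:all-concs-post} as an iterative fixpoint-style computation and show two things separately: (1) a termination/complexity bound of $O(n^2)$, and (2) the soundness-style inclusion $\allConcsP{T}{x} \subseteq \accConcs{x}$. For the complexity bound, I would first observe that in the original trace there are at most $n$ read/write events on $x$, so $|\rwTx| \le n$ and hence at most $O(n^2)$ distinct pairs $(e,f)$ with $\pos{e} < \pos{f}$ can ever appear as the second component stays fixed while the first component strictly decreases in trace position under each $\gtEdge$ step. The key invariant is that whenever we replace $(e,f)$ by the pairs $(g_1,f),\dots,(g_n,f)$, each $g_i$ satisfies $\pos{g_i} < \pos{e}$ (since $g_i \gtEdge e$ is an edge constraint recorded only when a later event replaces an earlier one in $\rwVC{x}$, so the ``from'' node has strictly smaller trace position). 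Therefore along any chain of replacements the position of the first component strictly decreases, chains have length $O(n)$, and each pair is processed at most once into $\accConcs{x}$; together with the fan-out of at most $n$ per step this yields the $O(n \cdot n)$ bound. One has to be slightly careful that the same pair is not reprocessed indefinitely — I would argue that once $(e,f)$ is moved into $\accConcs{x}$ it is removed from $\concEvt{x}$, and any pair re-inserted has a strictly smaller first position than some previously processed pair, so a simple potential-function argument (sum over remaining pairs of the trace position of the first component, or just the count of reachable pairs) bounds the total work.

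**For the inclusion $\allConcsP{T}{x} \subseteq \accConcs{x}$,** I would take an arbitrary pair $(e,f) \in \allConcsP{T}{x}$, i.e. $e,f \in \rwTx$, $\pos{e} < \pos{f}$, and $e,f$ are PWR-incomparable, and show it eventually lands in $\accConcs{x}$. The argument hinges on a structural claim about $\rwVC{x}$ maintained in the first pass (Algorithm~\ref{alg:w3poee-firstpass}): at the moment $f$ is processed, either $e \in \rwVC{x}$ — in which case the \textsc{write}/\textsc{read} procedure inserts $(e,f)$ directly into $\concEvt{x}$, because the guard $k > \accVC{\threadVC{i}}{j}$ is exactly the epoch test for PWR-concurrency (Proposition~\ref{prop:epoch-conc-check-main}, lifted to PWR as noted after that proposition) — or $e$ has already been evicted from $\rwVC{x}$ by some later event $e'$ with $\pwr{e'}{\text{(the event that evicted }e)}$. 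In the eviction case, each eviction records an edge constraint $e \gtEdge e'$ in $\edges{x}$, and crucially $e' \pwrSym$-precedes $e$ while $e'$ is still PWR-concurrent-or-later relative to $f$; one shows by induction on the eviction chain that some descendant $e^{(m)}$ of $e$ along $\gtEdge$-edges is still in $\rwVC{x}$ when $f$ is processed, so $(e^{(m)}, f) \in \concEvt{x}$, and then the second-pass traversal in step~(3)--(5) of Definition~\ref{def:all-concs-post} walks the edge chain $e \gtEdge \cdots \gtEdge e^{(m)}$ backwards to recover $(e,f)$. A subtlety is that the chain $e \gtEdge e^{(1)} \gtEdge \cdots$ in $\edges{x}$ must actually be \emph{followed} by the second pass: step~(3) picks \emph{all} $g$ with $g \gtEdge e$, so by downward induction on $\pos{\cdot}$ every ancestor pair $(g,f)$ gets enqueued, and since $\pos{e} < \pos{f}$ holds for all these pairs and they are finite, all of them are eventually dequeued into $\accConcs{x}$.

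**The main obstacle** I anticipate is pinning down precisely why an evicted event $e$ always leaves behind a still-live descendant in $\rwVC{x}$ by the time $f$ is reached — i.e. that evictions cannot ``break the chain'' by evicting a node whose replacement is then itself PWR-ordered before $f$ in a way that would have excluded $(e,f)$ from $\allConcsP{T}{x}$ in the first place. The resolution is a transitivity argument: if $e$ was evicted by $e'$ then $e \pwrSym (\text{evictor of } e)$, and since $\pwrSym$ is transitive (it is defined as a partial order in Definition~\ref{def:www-relation}), were the live descendant $e^{(m)}$ to satisfy $e^{(m)} \pwrSym f$ we would get $e \pwrSym f$, contradicting $(e,f) \in \allConcsP{T}{x}$; and were $f \pwrSym e^{(m)}$ we would need to check it does not force $f \pwrSym e$ — but program order plus the direction of eviction (later events evict earlier ones, $\pos{e} < \pos{f}$) rules this out. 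I would isolate this as a lemma: ``if $(e,f)$ is PWR-incomparable with $\pos{e}<\pos{f}$, then at the time $f$ is processed there is a $\gtEdge$-chain $e = h_0 \gtEdge h_1 \gtEdge \cdots \gtEdge h_m$ with $h_m \in \rwVC{x}$ and $(h_m,f)$ PWR-incomparable,'' prove it by induction on the number of events processed between $e$ and $f$, and then the remainder of Proposition~\ref{prop:w3po-post} is a routine unwinding of the six-step loop in Definition~\ref{def:all-concs-post} plus the counting argument above.
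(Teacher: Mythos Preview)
Your proposal is correct and takes essentially the same approach as the paper: your inclusion argument---either $e$ is still in $\rwVC{x}$ when $f$ is processed, or there is a $\gtEdge$-chain $e \gtEdge \cdots \gtEdge e^{(m)}$ to a live descendant, with concurrency of $(e^{(m)},f)$ established by transitivity of $\pwrSym$---is exactly what the paper isolates as Lemma~\ref{le:reach-all-concurrent-w3-pairs}, and the lemma you propose to extract at the end is that lemma verbatim. The only notable difference is in the termination argument: the paper invokes a separate Lemma~\ref{le:no-other-conc-in-between} to argue pairs are not re-added, whereas you rely on the (correct and arguably more direct) observation that $g \gtEdge e$ forces $\pos{g} < \pos{e}$, so edges are position-decreasing and the process explores at most $O(n^2)$ distinct pairs.
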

We assume that the number of distinct (shared) variables $x$ is a constant.
Hence, construction of all sets $\accConcs{x}$ takes time $O(n*n)$.

For each pair in $\accConcs{x}$ we yet need to carry out the lockset check.
We can retrieve the lockset for each event by consulting the set $\vcEvt$.
The set $\vcEvt$ records for each read/write event
its lockset and vector clock at the time of processing.
The vector clock is needed because
the set $\accConcs{x}$ overapproximates the set of concurrent reads/writes.
Hence, we not only need to filter out pairs that share a common lock
but also pairs that are not concurrent.

\begin{example}
 \label{ex:w3po-post}
  Consider the following trace annotated with $\rwVC{x}$, $\edges{x}$ and $\concEvt{x}$.
  We omit explicit vector clocks and epochs for brevity
  and write $w_i$ ($r_i$) for a write (read) at trace position $i$.

  \bda{lllllll}
  & \thread{1} & \thread{2} & \thread{3} & \rwVC{x} & \edges{x} & \concEvt{x}
  \\ \hline
  1. & \writeE{x} &&& \{ w_1 \} &&
  \\ 2. & \writeE{y_1} &&& \{ w_1 \} &&
  \\ 3. && \readE{y_1} && \{ w_1 \} &&
  \\ 4. && \writeE{y_2} &&\{ w_1 \} &&
  \\ 5. && \writeE{x} && \{ w_5 \} & w_1 \gtEdge w_5 &
  \\ 6. &&& \readE{y_2} & \{ w_5 \} &&
  \\ 7. &&& \writeE{x} & \{ w_5, w_7 \} && (w_5,w_7)
  \eda
  Besides writes on $x$, we also find reads/writes on variables $y_1$ and $y_2$.
  We do not keep track of these events as their sole purpose is to enforce
  via some write-read dependencies that $w_1 \pwrSym w_7$.

  $\PWREE$ yields $\concEvt{x} = \{ (w_5, w_7) \}$ and $\edges{x} = \{ w_1 \gtEdge w_5 \}$.
  The second pass then yields $\accConcs{x} = \{ (w_5, w_7), (w_1, w_7) \}$.
  However, $(w_1, w_7) \not\in\allConcsP{T}{x}$  because $w_1 \pwrSym w_7$.
\end{example}

The example shows that the set $\accConcs{x}$ may contain some non-concurrent pairs.
To filter out such pairs we apply the concurrency test
specified in Proposition~\ref{prop:epoch-conc-check-main}.
We consult the vector clock of the event appearing later in the trace
and check if the time stamp of the event appear first in the trace is greater.
We also check that locksets are disjoint.

\begin{lemma}[Lockset + \PWR\ Filtering]
 \label{le:lockset-w3-filtering}
  Let $x$ be some variable.
  Let $\vcEvt$ be obtained by $\PWREE$
  and $\accConcs{x}$ via $\PWREE$'s second pass.
  Let $(\thread{i}{k}, \thread{j}{l}) \in \accConcs{x}$
  and $(\thread{j}{l}, L_2, V_2) \in \vcEvt$.
  If  $L_1 \cap L_2 = \emptyset$ and  $k > \accVC{V_2}{j}$
  then $(\thread{i}{k}, \thread{j}{l})$ is either a write-write or read-write pair
  in $\PotentialPWR{T}$ where we use the event's epoch as a unique identifier.
\end{lemma}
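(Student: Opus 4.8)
The plan is to verify that the two explicit hypotheses in the lemma — disjoint locksets $L_1 \cap L_2 = \emptyset$ and the epoch inequality $k > \accVC{V_2}{j}$ — are exactly the two side-conditions required by Definition~\ref{def:lockset-w3-check}, namely that $(e,f)$ has disjoint locksets and that neither event $\PWR$-precedes the other. The lockset part is immediate once we identify $L_1$ and $L_2$ with $\LS{e}$ and $\LS{f}$: by the discussion following the definition of $\vcEvt$, the triple stored for each read/write event records precisely the lockset $\LSt{i}$ at the time the event was processed, and we already argued that $\LS{e} = \LSt{i}$ at that moment. So the real content is the concurrency claim: I need to show $k > \accVC{V_2}{j}$ implies that $\thread{i}{k}$ and $\thread{j}{l}$ are $\PWR$-unordered, and that it is a write-write or read-write pair (not write-read).

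The key steps, in order, would be: (1) First observe that by Proposition~\ref{prop:w3po-post}, any pair in $\accConcs{x}$ consists of two conflicting read/write events on $x$; and by the convention maintained for $\concEvt{x}$ and preserved by the second-pass construction in Definition~\ref{def:all-concs-post}, the pair is ordered so that $\pos{\thread{i}{k}} < \pos{\thread{j}{l}}$ — i.e.~the event with vector clock $V_1$ appears earlier in the trace. (2) Next, use Proposition~\ref{prop:epoch-conc-check-main}: since our algorithm extends FastTrack and the vector clocks computed for $\PWR$ are at least as large as the $\hbSym$ clocks (because $\PWR$ adds more synchronizations via WRD and ROD), the epoch comparison $k > \accVC{V_2}{j}$ — where $k = \accVC{V_1}{i}$ is $e$'s own time stamp and $\accVC{V_2}{j}$ is the value of position $i$ recorded in $f$'s clock... wait, more carefully: $e$ is in thread $i$, $f$ is in thread $j$, $e$ precedes $f$, so by the Proposition the two are $\PWR$-concurrent iff $f$'s clock at position $i$ is strictly below $e$'s own time stamp at position $i$, i.e.~$\accVC{V_2}{i} < \accVC{V_1}{i} = k$; I would restate the lemma's inequality $k > \accVC{V_2}{j}$ with $j$ being the index position of thread $i$ (a notational point to reconcile with the lemma statement as written). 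Concluding $\thread{i}{k} \not\pwrSym \thread{j}{l}$, and since $e$ precedes $f$ in the trace we cannot have $\thread{j}{l} \pwrSym \thread{i}{k}$ either, so the events are $\PWR$-unordered. (3) Finally, rule out the write-read case: by the algorithm, a conflicting write-read pair where the write is the read's last write gets the WRD edge imposed in procedure \textsc{read} (the line $\threadVC{i} = \threadVC{i} \sqcup \lastWriteVC{x}$), so such a pair would be $\PWR$-ordered and hence never end up $\PWR$-concurrent in $\concEvt{x}$ nor derivable in the second pass; the only write-read pairs surviving concurrency are those where the write is \emph{not} the read's last write, but those are precisely the write-read \emph{race} pairs handled separately by Definition~\ref{def:w3-wrd-race-pairs} and reported immediately in the first pass — so after the filtering in the second pass, what remains is write-write or read-write, landing us in $\PotentialPWR{T}$ via Definition~\ref{def:lockset-w3-check}.

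The main obstacle I anticipate is step (2): carefully matching the epoch-concurrency characterization (Proposition~\ref{prop:epoch-conc-check-main}, which is stated for $\hbSym$) to the $\PWR$ relation. One must argue that the FastTrack epoch trick remains valid when the algorithm synchronizes clocks for additional reasons (WRD via $\sqcup \lastWriteVC{x}$, ROD via \Call{w3}{}): the crucial invariant is that for events $e$ before $f$ with $e$ in thread $i$, the clock of $f$ at position $i$ equals $e$'s own timestamp exactly when there is a $\PWR$-path from $e$ to $f$, and is strictly smaller otherwise — this is where a short inductive argument over the trace (or a reference to the $\PWR$-soundness of the clock construction, presumably established in the appendix) is needed. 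A secondary subtlety is the index-matching between the lemma's "$j$" (used both as the thread of $\thread{j}{l}$ and as the access position in $\accVC{V_2}{j}$) and the thread $i$ of the earlier event; I would clean this up by noting $\thread{i}{k}$ is the earlier event, so the relevant access into $V_2$ must be at position $i$, and treat the lemma statement's subscript accordingly.
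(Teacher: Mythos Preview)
Your approach is essentially the same as the paper's own proof, which is a terse one-paragraph sketch: it simply notes that $\PWREE$ computes each event's lockset and vector clock correctly, that the epoch comparison suffices because the left component of any pair in $\concEvt{x}$ (hence $\accConcs{x}$) is the trace-earlier event, and leaves it at that. Your decomposition into (1) trace-order invariant, (2) epoch check via Proposition~\ref{prop:epoch-conc-check-main}, (3) excluding the write-read case is a faithful and more explicit unpacking of the same argument, and you are right to flag the index mismatch in the lemma statement: the access should be $\accVC{V_2}{i}$ (the earlier event's thread), matching how the epoch check is actually used in the algorithm.

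There is, however, a genuine slip in your step~(3). You write that ``the only write-read pairs surviving concurrency are those where the write is \emph{not} the read's last write, but those are precisely the write-read race pairs handled separately by Definition~\ref{def:w3-wrd-race-pairs}.'' This has it backwards: Definition~\ref{def:w3-wrd-race-pairs} covers exactly the \emph{ordered} case $e \pwrSym f$ with no intermediate event, i.e.\ the write \emph{is} the read's last write. The pairs that survive the concurrency test in $\accConcs{x}$---a write $w$ and a later read $r$ on $x$ that are $\PWR$-unordered---are not write-read race pairs at all; they are reported as \emph{read-write} pairs (the read can be placed right before this non-last write in some reordering, cf.\ the remark after Example~\ref{ex:wwwpoee-simple} that ``the pair $(w_1, r_4)$ represents a read-write pair; when reporting this pair we simply switch the order of events''). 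So your conclusion that what remains is write-write or read-write is correct, but the route you give to it misattributes the handling to Definition~\ref{def:w3-wrd-race-pairs}. The correct reasoning is: last-write write-read pairs are $\PWR$-ordered by WRD and hence filtered out; the remaining concurrent write-then-read pairs are, from the race-pair perspective, read-write pairs and fall under Definition~\ref{def:lockset-w3-check}.
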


We conclude that $\PWREE$ (first pass) yields all write-read pairs
in $\PotentialPWR{T}$ and the second pass followed by filtering
yields all write-write and read-write pairs in $\PotentialPWR{T}$.

\subsection{Time and Space Complexity}

We consider the time and space complexity of $\PWREE$ including
first, second pass and filtering.
Let $n$ be the size of trace $T$, $k$ be the number of threads
and $c$ be the number of critical sections.
We consider the number of variables  as a constant.

We first consider the time complexity of $\PWREE$ (first pass).
The size of the vector clocks and the set $\rwVC{x}$ is bounded by $O(k)$.
Each processing step of $\PWREE$ requires adjustments of a constant
number of vector clocks. This takes time $O(k)$.
Adjustment of sets $\concEvt{x}$, $\edges{x}$ and $\rwVC{x}$
requires to consider $O(k)$ epochs where each
comparison among epochs is constant. Altogether, this requires time $O(k)$.
We consider $\vcEvt$ as a map where adding a new element takes constant time.
The Lockset-\PWR\ WRD race check takes constant time as we assume lookup
of time stamp is constant and the size of each lockset is a constant.
Each call to \Call{w3}{} takes time $O(c)$.
Overall, $\PWREE$ takes time $O(n*k + n*c)$ to process trace $T$.

The space required by $\PWREE$ is as follows.
Sets $\vcEvt$, $\concEvt{x}$ and $\edges{x}$ require $O(n*k)$ space.
This applies to $\vcEvt$ because
for each event the size of the vector clock is $O(k)$.
The size of the lockset is assumed to be a constant.
Each element in $\concEvt{x}$ and $\edges{x}$ requires
constant space. In each step, we may add $O(k)$ new elements
because the size of $\rwVC{x}$ is bounded by $O(k)$.
Set $\LH{y}$ requires space $O(c*k)$.
Overall, $\PWREE$ requires $O(n*k + c*k)$ space.

The time for the second pass is $O(n*n)$. There are $O(n*n)$ pairs where each pair requires
constant space. Hence, $O(n*n)$ space is required.
Filtering for each candidate takes constant time.
The size of the lockset is constant, time stamp comparison is a constant
and lookup of locksets and vector clocks in $\vcEvt$ is assumed to take constant time.

Overall, the run-time of $\PWREE$, first and second pass, including filtering
is $O(n*k + n*c + n*n)$. The space requirement is  $O(n*k + c*k + n*n)$.
Parameters $k$ and $c$ are bounded by $O(n)$.
Hence, the run-time of $\PWREE$ is $O(n*n)$.

\subsection{Optimizations}

There are a number optimizations, e.g.~aggressive filtering
and removal of critical sections, that can be carried.
Details are discussed in Appendix~\ref{sec:wwwpoee-optimizations}.
These optimizations will not change the theoretical time complexity but are essential
in a practical implementation.

We can turn $\PWREE$ into a single-pass, linear run-time algorithm
if we impose a limit on the history of critical sections
and a limit on the number of edge constraints.
Then, we can merge the second pass into the first pass.
We refer to this variant as \PWREELimit{}.

Imposing a limit on the number of edge constraints
means that the second pass (traversal of edges) and filtering
takes place during the first pass as well.
Whenever candidates are added to $\concEvt{x}$ we immediately
apply the steps described in Definition~\ref{def:all-concs-post} (but
the number of edge constraints to consider is limited)
and carry out the filtering check.

By imposing a limit on the number of edge constraints in $\edges{x}$, we might miss out
on some potential data race pairs. For example, consider the case
of 27 subsequent writes in one thread followed by a write in another thread.
We assume that each write is connected to a distinct code location.
In our implementation, we treat events connected to the same code location as the same event.
Each of the 27 subsequent writes is in a race with the write in the other thread.
There are 27 race pairs overall but a standard single-pass algorithm
would only report the \emph{last} race pair.
The 27 subsequent writes give rise to 26 edge constraints.
As we only maintain 25 edge constraints, we fail to report the \emph{first} data race.
In our experience, limiting the size of $\edges{x}$ to 25 turns out to be a good compromise.

Consider the history of critical sections $\LH{y}$.
Instead of a global history, our implementation maintains
thread-local histories.
The number of thread-local histories (after applying optimizations)
is only bounded by the number of threads and the number of distinct variables.
This can still be a fairly high number and requires extra management effort.
In our implementation, we simply impose a fixed limit on the
size of thread-local histories. If the limit is exceeded,
the newly added element simply overwrites the oldest element.
This might have the consequence that  two events may become unordered
w.r.t.~the \emph{limited} \PWR\ relation (where they should be ordered without limit).
Completeness is unaffected but our method may produce more false positives.
In our experience, limiting the size of thread-local histories
to five turns out to be a good compromise.

\section{Experiments}
\label{sec:experiments}

\begin{table}
      \caption{Benchmark results. The time is given in minutes:seconds, maximum memory consumption in megabytes.}
    \label{tab:realworldbench-singlephase}
    {\footnotesize
    \begin{tabular}{l|l|l|l|l|l|l|l}
    \textbf{} & FT &  $\SHBEELimit{}$ & WCP &  TSan & $\PWREELimit{}$ & \PWRZero{}  & \PWREE \\ \hline
    \textbf{Avrora}  &  \textbf{}& \textbf{} &&&&\\
    Races:  & 20 & 20(0) &  & 30 & 20(0) & 20 & 20(0) \\
    Time: & 0:14 & 0:19 & $>$30  & 0:22 & 0:21 & 0:17 & 0:22 \\
    Mem:  & 2125 & 3965 & 6385 & 2934 & 3886  & 1999 & 4048 \\
    \hline
    \textbf{Batik}  & \textbf{}&  \textbf{} &&&& \\
    Races:  & 12 & 4(0) & 12 &  12 & 4(0) & 4 & 4(0)  \\
    Time:   & 0:01 & 0:01 & 0:02 &  0:01 & 0:01 & 0:01 & 0:01 \\
    Mem:   & 29 & 35 & 84 & 33 & 68 & 32  & 80 \\
    \hline
    \textbf{H2} & \textbf{}& \textbf{} &&&&\\
    Races:  & 125 & 248(0) &  & 672 & 252(2) & 252 &   \\
    Time: & 1:35 & 2:22 & $>30$ & 4:52 & 2:48  & 1:55  & 3:56 \\
    Mem:  & 2154 & 13431 & 6350 &  4998 & 16393 & 3465  & > 32gb \\
    \hline
    \textbf{Lusearch} & \textbf{}&  \textbf{}&&&&\\
    Races: & 15 & 15(0) & 15 & 19 & 15(0) &  15  & 15(0) \\
    Time: &0:01 & 0:02 & 0:19 & 0:01 & 0:04  & 0:04 & 0:04\\
    Mem: & 14 & 14 & 8685 & 11 & 1848  & 1852  & 2243 \\
    \hline
    \textbf{Tomcat} & \textbf{}&  \textbf{}&&&&\\
    Races: & 636 & 681(194) &  & 1984 & 823(219)  & 623  & 823(219) \\
    Time: & 0:33 & 0:49 & $>$30 & 0:37 & 0:51 & 0:36 & 23:19 \\
    Mem: & 12245 & 13617 & 13268 & 7523 & 19919 & 14861  & 28452 \\
    \hline
    \textbf{Xalan}& \textbf{}& \textbf{}&&&& \\
    Races: & 41 & 44(0) & 142 & 244 & 394(223) & 185  & \\
    Time: & 1:19 & 2:04 & 7:11 & 1:33 & 2:30 & 1:30  & 1:51 \\
    Mem: & 7282 & 9591 & 14882 &  5342 & 24980  & 7284  & > 32gb \\ \hline
    \textbf{Moldyn} & \textbf{}&  \textbf{}&&&&\\
    Races: & 33& 24(8) & 33 & 56 & 24(8)  & 18  & 24(8) \\
    Time: & 0:32& 0:54 & 0:37 & 0:46 & 0:55  & 0:33  & 1:23 \\
    Mem: & 99 & 487 & 108 & 91 & 515 & 71 & 19833 \\ \hline
    \end{tabular}
    }
    \end{table}


{\bf Test Candidates and Benchmarks.}
The test candidates
are FastTrack(FT), \SHBEELimit{}, WCP, ThreadSanitizer (TSan),
  \PWRZero{}, \PWREELimit{} and \PWREE.
\SHBEELimit{} and \PWREELimit{} limit the size of edge constrains to 25. The limit for histories is five.
\PWRZero{} is a variant of \PWREELimit{} where the limit for edge constraints is zero.
\PWREE\ does not impose any limits and therefore requires two passes whereas all the other candidates
run in a single pass.
We have implemented all of them in a common framework for better comparability.

We have not implemented SDP and WDP.
As WCP, SDP and WDP rely on effectively the same method, their performance
in terms of time and space is similar.
See Table 8 in~\citet{10.1145/3360605}
where running times and space usage of WCP, SDP and WDP are compared.
Hence, only including WCP
allows for a fair comparison.

In terms of precision,
the WDP algorithm has more false positives and false negatives compared to \PWREELimit{} and \PWREE.
\citep{10.1145/3360605} make use of an additional Vindication phase
to check for a witness to confirm that a reported race is not a false positive.
Vindication requires extra time and there is no guarantee to filter out all false positives as Vindication is incomplete
(if no witness is found after one try Vindication gives up).

We have carried out experiments that involve two benchmark suites.
The first benchmark suite consists of test of the Java Grande benchmark suite~\citep{smith2001parallel} and the DaCapo benchmark suite (version 9.12, \citep{Blackburn:2006:DBJ:1167473.1167488}). This is a standard set of real-world tests to measure the performance in terms of execution time and memory consumption.
The second benchmark suite consists of small, tricky examples
found in earlier works~\citet{Mathur:2018:HFR:3288538:3276515,
  Roemer:2018:HUS:3296979.3192385, 10.1145/3371085, roemer2019practical,Kini:2017:DRP:3140587.3062374} and our own examples that we found while working with different race prediction algorithms.
For these examples we know the exact number of predictable races
and therefore we can measure
the precision (false positives, false negatives) of our test candidates.
In terms of precision, \PWREELimit{} performs
the best among all test candidates. The limits employed by \PWREELimit{} yields the same results as for \PWREE.
We refer to Appendix~\ref{sec:precision} for details.

{\bf Performance.}
For benchmarking we use an AMD Ryzen 7 3700X and 32 gb of RAM with Ubuntu 18.04 as operating system.
We have evaluated the performance of all benchmarks from the Java Grande and DaCapo benchmark suites. For space reasons, we only discuss the results for some benchmarks.
Other benchmarks not discussed have similar characteristics
compared to the ones show in Table \ref{tab:realworldbench-singlephase}.
The time is given in minutes and seconds (mm:ss). The memory consumption is also measured for the complete program and not only for the single algorithms. In row \textit{Mem} the memory consumption is given in megabytes. We use the standard `time' program in Ubuntu to measure the time and memory consumption.

For TSan, \PWREELimit{}, \PWRZero{} and \SHBEELimit{}, entry row $\#Races$
shows the number of reported data race pairs. We filter pairs connected
to the same code locations. For \PWREELimit{} and \SHBEELimit{} we write 24(8) if 24 data race pairs were reported which includes 8 that were found using edge constraints.
Because we only count pairs connected to unique code locations, it is possible that
\PWRZero{} reports a pair that will be reported via edge constraints in case of \PWREELimit{} and \PWREE.
Appendix~\ref{sec:counting-data-races} explains this point in more detail.
For FastTrack(FT) and WCP the number of races are the number of data race connected to distinct code locations.

In terms of number of races reported, \PWREELimit{} performs
best followed by \PWRZero{} and \SHBEELimit{}.
\SHBEELimit{} only reports races from trace-specific schedules.
\PWRZero{} lacks edge constraints which leads to missed races
as shown by the test cases Tomcat, Xalan and Moldyn.
FastTrack, TSan and WCP report significantly fewer races.

FastTrack has the best performance in terms of run-time and memory consumption.
TSan also shows good run-time performance with the exception of the H2 test case.
The reason is due to our use of vector clocks in our TSan implementation.

    WCP has performance problems with the Avrora, H2 and Tomcat test cases.
    For all three cases we aborted the experiment after 30 minutes.
    The reason are several thousands of critical sections that seem to be checked for each read/write inside a critical section.
    Like \PWREELimit{}, WCP maintains a history of critical sections but (a) needs to track more information (all read/write accesses within a critical section), and
    (b) can remove critical section not as aggressively as
    \PWREELimit{} because write-read dependencies are not strictly enforced.
    As argued above, similar observations should apply to SDP and WCP.

    Table 8 in~\citet{10.1145/3360605} shows reasonable performance for WCP, SDP and WDP
for Avrora, H2 and Tomcat. We are a bit surprised here but the base time (first column
in Table 8) seems to indicate that in our measurements the programs were running for much
longer and then some performance issues seem to arise.
Our time measurements do not show the base time nor the time it took to generate
the trace. We only show the timings to carry out the analysis.

\PWREE{} has no performance problems for test cases Avrora and Tomcat.
This shows that our approach of dealing with the history of critical sections
appears to be superior in comparison to WCP and WDP.
For H2 and Xalan, \PWREE{} runs out of memory.
The timings indicate the point in time when out of memory occurred.
The H2 test case consists of 100 million events and the Xalan test
case consists of 80 million events.
This leads a huge number of edge constraints which then leads to out of memory.
Hence, limiting the number of edge constraints is crucial for
achieving an acceptable performance.

The memory consumption of \PWREELimit{} is still high for some
test cases such as H2, Tomcat and Xalan.
Overall, the run-times of \PWREELimit{} are competitive
compared to the fastest candidate FastTrack.
In summary, \PWREELimit{} strives for a balance between
good performance and high precision.

\section{Related Work}
\label{sec:related-work}

We review further works in the area of dynamic data race prediction.

{\bf Efficient methods.}
We have already covered the efficient (linear-time)
data race prediction methods that found
use in FastTrack~\citep{flanagan2010fasttrack}, SHB~\citep{Mathur:2018:HFR:3288538:3276515},
WCP~\citep{Kini:2017:DRP:3140587.3062374},
SDP/WDP~\citep{10.1145/3360605} and TSan~\citep{serebryany2009threadsanitizer}.
TSan is also sometimes referred to as ThreadSanitizer v1.

The newer TSan version, ThreadSanitizer v2 (TSanV2)~\citep{thread-sanitizer-v2},
is an optimized version of the FastTrack algorithm in terms of performance.
TSanV2 only keeps a limited history of write/read events.
This improves the performance but results in a higher number of false negatives.

Acculock \citep{xie2013acculock}  optimizes the original TSan algorithm
by employing a single lockset per variable.
Acculock can be faster, but is less precise compared to TSan
if a thread uses multiple locks at once.


SimpleLock \citep{yu2016simplelock+} uses a simplified lockset algorithm.
A data race is only reported if at least one of the accesses is not protected by any lock.
They show that they are faster compared to Acculock but miss more data races
since they do not predict data races for events with different locks.

%
%
%
%
%

{\bf Semi-efficient methods.}
We consider semi-efficient methods that require polynomial run-time.

The \SHBEE{} algorithm~\citep{DBLP:conf/pppj/SulzmannS19} requires
quadratic run-time to compute all trace-specific data race pairs.
Our \PWREE{} algorithm adopts ideas from \SHBEE{}
and achieves completeness while retaining a quadratic run-time.
By limiting the history of edge constraints, the variant \PWREELimit{} runs in linear time.
Due to this optimization we are only near complete.
In practice, the performance gain outweighs the benefit of a higher precision.

The Vindicator algorithm \citep{Roemer:2018:HUS:3296979.3192385} improves the WCP algorithm
and is sound for all reported data races. It can predict more data races compared to WCP,
but requires three phases to do so. The first phase of Vindicator is a weakened WCP relation
that removes the happens-before closure. For the second phase, it constructs a graph that
contains all events from the processed trace. This phase is unsound and incomplete
which is why a third phase is required. The third phase makes a single attempt
to reconstruct a witness trace for the potential data race and reports a data race if successful.
Vindicator has a much higher run-time compared to the ``PWR'' family of algorithms.
We did not include Vindicator in our measurements as we experienced performance issues
for a number of real world benchmarks (e.g.~timeout due to lack of memory etc).

The M2 algorithm~\citep{10.1145/3371085} can be seen as a further improvement of the Vindicator idea.
Like Vindicator, multiple phases are required. M2 requires two phases.
M2 has $O(n^4)$ run-time (where $n$ is the size of the trace).
M2 is sound and like \PWREE{} complete for two threads.
The measurements by~\citet{10.1145/3371085} show that in terms of precision
M2 improves over FastTrack, SHB, WCP
and Vindicator for a subset of the real-world benchmarks that we also considered.
We did not include M2 in our measurements as we are not aware of any publicly
available implementation.

{\bf Exhaustive methods.}
We consider methods that are sound \emph{and} complete to which we refer as exhaustive methods.
Exhaustive methods come with a high degree of precision but generally are no longer efficient.

The works by~\citet{DBLP:conf/rv/SerbanutaCR12, Huang:2014:MSP:2666356.2594315, LuoHuangRosuSystematic15} use SAT/SMT-solvers to derive alternative feasible traces from a recorded trace. These traces can be checked with an arbitrary race prediction algorithm for data races. This requires multiple phases and is rather complimentary to the algorithms that we compare in this work as any of them could be used to check the derived traces for data races.

Kalhauge and Palsberg~\citep{kalhauge2018sound} present a data race prediction algorithm that is sound and complete. They also use an SMT-solver to derive alternative feasible traces. The algorithm inspects write-read dependencies in more detail, to determine at which point the control flow might be influenced by the observed write-read dependency. Deriving multiple traces and analyzing their write-read dependencies for their influence on the control flow is a very slow process that can take several hours according to their benchmarks.

{\bf Comparative studies.}
Previous works that compare multiple data race prediction algorithms use the Java Grande~\citep{smith2001parallel}, Da Capo \citep{Blackburn:2006:DBJ:1167473.1167488} and IBM Contest \citep{farchi2003concurrent} benchmark suits to do so.
The DaCapo and Java Grande benchmark suite contain real world programs with an unknown amount of data races and other errors.
The IBM Contest benchmark is a set of very small programs with known concurrency bugs like data races.

Yu, Park, Chun and Bae~\citep{yu2017experimental} compare the performance of FastTrack \citep{flanagan2010fasttrack}, SimpleLock+ \citep{yu2016simplelock+},
Multilock-HB, Acculock \citep{xie2013acculock} and Casually-Precedes (CP) \citep{Smaragdakis:2012:SPR:2103621.2103702} with a subset of the benchmarks
found in the DaCapo, JavaGrande and IBM Contest suits. They reimplemented CP to use a sliding window of only 1000 shared memory events which does not
affect the soundness but the amount of predicted data races. In our work we compare newer algorithms including Weak-Casually-Precedes which is the successor of~CP.

The work by \citet{liao2017dataracebench} compares Helgrind, ThreadSanitizer Version 2, Archer and the Intel Inspector.
They focus on programs that make use of OpenMP for parallelization. OpenMP uses synchronization primitives that are unknown
to Helgrind, ThreadSanitizer v2 and the Intel Inspector. Only the Archer race predictors is optimized for OpenMP.
For their comparison they use the Linpack and SPECOMP benchmark suits for which the number of concurrency errors is unknown.
Most of their races are enforced by including OpenMP primitives to parallelize the code which are not part of the original
implementation. Thus, they lack complex concurrency patterns. In some related work~\citep{lin2018runtime}, the same authors test the four data race
predictors from their previous work again with programs that make use of OpenMP and SIMD parallelism. Since SIMD is unsupported
by all tested data race predictors, they encounter a high number of false positives. The data race predictors we tested would
report many false positives for the same reasons.

The work by \citet{alowibdi2013empirical} evaluates the static data race predictors RaceFuzzer, RacerAJ, Jchord, RCC and JavaRace\-Finder.
They only evaluate the performance and the number of data races that each algorithms predicts. Static data race prediction
is known to report too many false positives since they need to over-approximate the program behavior. We only tested dynamic
data race predictors that make use of a recorded trace to predict data races. In terms of accuracy we expect that our test
candidates perform better compared to the static data race predictors.

Yu, Yang, Su and Ma~\citep{yu2017evaluation} test Eraser, Djit+, Helgrind+, ThreadSanitizer v1, FastTrack, Loft, Acculock, Multilock-HB,
Simplelock and Simplelock+. It is the to the best of our knowledge the only previous work that includes ThreadSanitizer v1.
In their work, they use the original implementations for testing. They test the performance and accuracy with the unit tests of
ThreadSanitizer. The tested data race predictors ignore
write-read dependency and are therefore only sound for the first predicted data race. We test current solutions that mostly include
write-read dependencies. For accuracy testing we included a set of handwritten test cases to
ensure that every algorithm sees the same order of events. All algorithms, except Vindicator, are reimplemented in a common framework to
ensure that all algorithms use the same utilities and have the same parsing overhead.


\section{Conclusion}
\label{sec:conclusion}

We have introduced \PWREE{} and the practically inspired variant \PWREELimit{}.
\PWREELimit{} is an efficient, near complete and often sound dynamic data race prediction algorithm
that combines the lockset method with recent improvements
made in the area of happens-before based methods.
\PWREE{} is complete in theory.
For the case of two threads we can show that \PWREE{} is also sound.
Our experimental results show that \PWREELimit{} performs
well compared to the state-of-the art efficient data race prediction algorithms.
The implementation of \PWREELimit{} including all contenders
as well as benchmarks can be found at
\begin{center}
  \url{https://github.com/KaiSta/SpeedyGo}.~\footnote{The ``PWR'' algorithm
    is referred to as ``W3PO'' in the SpeedyGo framework.}
\end{center}



\pagebreak

\bibliography{main}

\appendix

\section{Predictable Data Races}
\label{sec:predictable-data-races}

We formalize our notion of predictable data races.

\subsection{Run-Time Events and Traces}
\label{sec:run-time-events-and-traces}

We assume concurrent programs making use of shared variables
and acquire/release (a.k.a.~lock/unlock) primitives.
Further constructs such as fork and join are omitted for brevity.
We assume that programs are executed under the sequential consistency
memory model~\citep{Adve:1996:SMC:619013.620590}.
This is a standard assumption made by most data race prediction algorithms.
The upcoming program order condition (see Definition~\ref{def:correct-reordering})
reflects this assumption.

Programs are instrumented to derive a trace of events
when running the program.
A trace is of the following form.

\begin{definition}[Run-Time Traces and Events]
\label{def:run-time-traces-events}
\bda{lcll}
  T & ::= & [] \mid \thread{i}{e} : T   & \mbox{Trace}
  \\ e & ::= &  \readEE{x}{j}
           \mid \writeEE{x}{j}
           \mid \lockEE{y}{j}
           \mid \unlockEE{y}{j}

           & \mbox{Events}
\eda
\end{definition}
Besides $e$, we sometimes use symbols $f$ and $g$ to refer to events.

A trace $T$ is a list of events. We use the notation
a list of objects $[o_1,\dots,o_n]$ is a shorthand
for $o_1:\dots:o_n:[]$. We write $\pp$ to denote the concatenation operator among lists.
For each event $e$, we record the thread id number $i$ in which the event took place,
written $\thread{i}{e}$.
We write $\readEE{x}{j}$ and $\writeEE{x}{j}$
to denote a read and write event on shared variable $x$.
We write $\lockEE{y}{j}$ and $\unlockEE{y}{j}$
to denote a lock and unlock event on mutex $y$.
The number $j$ is distinct for each event
and allows us to uniquely identify each event.
For brevity, we sometimes omit the thread id $i$ and the number $j$.

\begin{example}
\label{ex:tabular-trace-notation}
  We often use a tabular notation for traces where we introduce for each thread
  a separate column and the trace position can be identified via
  the row number.
  Below, we find a trace specified as list of events
  and its corresponding tabular notation.
\bda{c}
  \ba{lcl}
  T & =  & [\thread{1}{\writeEE{x}{1}}, \thread{1}{\lockEE{y}{2}}, \thread{1}{\unlockEE{y}{3}},
     \\ & & \thread{2}{\lockEE{y}{4}}, \thread{2}{\writeEE{x}{5}}, \thread{2}{\unlockEE{y}{6}}]
  \ea
 \\
  \ba{lll}
  & \thread{1}{} & \thread{2}{}
  \\ \hline
  1. & \writeE{x} &
  \\ 2. & \lockE{y} &
  \\ 3. & \unlockE{y} &
  \\ 4. & & \lockE{y}
  \\ 5. & & \writeE{x}
  \\ 6. & & \unlockE{y}
  \ea
  \eda
\end{example}

We define $\compTIDP{T}{e} = j$ if $T=T_1 \pp\ [\thread{j}{e}] \pp\ T_2$ for some traces $T_1, T_2$.
We define $\posP{T}{\thread{i}{e}} = k$ if $\thread{i}{e}$ is the $k$-th event in~$T$.
We often drop the component $T$ and
write $\compTID{e}$ and $\pos{e}$ for short.

We define $\events{T} = \{ e \mid \exists T_1,T_2,j. T = T_1 \pp [\thread{j}{e}] \pp T_2 \}$
to be the set of events in $T$.
We write $e \in T$ if $e \in \events{T}$.

We define $\proj{i}{T} = T'$ the projection of $T$ onto thread $i$
where (1) for each $e \in T$ where $\compTIDP{T}{e} = i$ we have that $e \in T'$, and
(2) for each $e, f \in T'$ where $\posP{T'}{e} < \posP{T'}{f}$ we have that $\posP{T}{e} < \posP{T}{f}$.
That is, the projection onto a thread comprised of all events in that thread
and the program order remains the same.

Traces must be {\em well-formed}: a thread may only acquire an unheld lock and may only release a lock it has acquired.
Hence, for each release event $\thread{i}{\unlockEE{y}{l}}$
there exists an acquire event $\thread{i}{\lockEE{y}{k}}$
where $k < l$ and there is no other acquire on $y$ in between.
We refer to $\thread{i}{\lockEE{y}{k}}$ and $\thread{i}{\unlockEE{y}{l}}$
as a pair of \emph{matching acquire-release} events.
All events $e_1,...,e_n$ in between trace positions $k$ and $l$ must be part of the
thread $i$.

In such a situation, we write
$\cs{i}{\lockEE{y}{k},e_1,\dots,e_n,\unlockEE{y}{l}}$
to denote the events in the \emph{critical section}
represented by the pair
$\thread{i}{\lockEE{y}{k}}$ and $\thread{i}{\unlockEE{y}{l}}$
of matching acquire-release events.

  We write $f \in \cs{i}{\lockEE{y}{k},e_1,\dots,e_n,\unlockEE{y}{l}}$
  if $f$ is one of the events in the critical section.
  We often write $\thread{i}{\csSym{y}}$ as a short-form for a critical section
  $\cs{i}{\lockEE{y}{k},e_1,\dots,e_n,\unlockEE{y}{l}}$.
  We write $\thread{i}{\csSym{y}} \in T$ to denote that the critical section
  is part of the trace~$T$.
  We write $\thread{i}{\acq{\csSym{y}}}$ to refer to $\lockEE{y}{k}$
  and $\thread{i}{\rel{\csSym{y}}}$ to refer to $\unlockEE{y}{l}$.
  If the thread id does not matter, we write $\csSym{y}$ for short and so on.
  If the lock variable does not matter, we write $\csSymAny$ for short and so on.

  We define $\rwTx$ as the set of all read/write
  events in $T$ on some variable $x$.
  We define $\rwTr$ as the union of $\rwTx$ for all variables $x$.


  Let $e, f \in \rwTx$ where $e$ is a read event and $f$ is a write event.
  We say that $f$ is the \emph{last write} for $e$ w.r.t.~$T$ if
  (1) $f$ appears before $e$ in the trace, and
  (2) there is no other write event on $x$ in between $f$ and $e$ in the trace.

\subsection{Trace Reordering}

A trace represents one possible interleaving of concurrent events.
Based on this trace,  we wish to explore alternative interleavings.
In theory, there can be as many interleavings as there are permutations of the original trace.
However, not all permutations are feasible in the sense that they
could be reproduced by executing the program again.

We wish to characterize feasible alternative interleavings
without having to take into account the program.
For this purpose, we assume some idealistic execution scheme:
(1) The program order as found in each thread
is respected. (2) Every read sees the same (last) write.
(3) The lock semantics is respected so that execution will not get stuck.

\begin{definition}[Correct Reordering]
\label{def:correct-reordering}
Let $T$ be a well-formed trace.
Then, trace $T'$ is a \emph{correctly reordered prefix} of $T$ iff
the following conditions hold:
\begin{itemize}
\item Program order:  For each thread id~$i$ we have that $\proj{i}{T'}$ is a subtrace of $\proj{i}{T}$.
\item Last writer: For each read event $e$ in $T'$ where $f$ is the last write for $e$ w.r.t.~$T$,
      we have that $f$ is in $T'$ and $f$ is also the last write for $e$ w.r.t.~$T'$
\item Lock semantics: For $e_1, e_2$ be two acquire events on the same lock
  where $\posP{T'}{e_1} < \posP{T'}{e_2}$
  we have that $\posP{T'}{e_1} < \posP{T'}{f_1} < \posP{T'}{e_2}$
  where $f_1$ is $e_1$'s matching release event.
\end{itemize}
\end{definition}
A correctly reordered trace is a permutation of the original trace
that respects the idealistic execution scheme.
As we will see, a data race may only reveal itself for some prefix.

Critical sections represent atomic units and the events within
cannot be reordered. However, critical sections themselves may be reordered.
We distinguish between schedules that leave the order of critical sections unchanged (trace-specific schedule),
and schedules that reorder critical sections (alternative schedule).

\begin{definition}[Schedule]
  Let $T$ be a well-formed trace
  and $T'$ some correctly reordered prefix of $T$.

  We say $T'$ represents the \emph{trace-specific} \emph{schedule} in $T$
  if the relative position of (common) critical sections (for the same lock variable)
  in $T'$ and $T$ is the same. For lock variable $y$ and critical sections
  $\csSym{y}_1, \csSym{y}_2 \in T$ where $\csSym{y}_1$ appears before $\csSym{y}_2$ in $T$
  we have that $\csSym{y}_1, \csSym{y}_2 \in T'$ and $\csSym{y}_1$ appears before $\csSym{y}_2$ in $T'$.
  Otherwise, we say $T'$ that represents some \emph{alternative schedule}.
\end{definition}

\begin{example}
  Consider the well-formed trace
  \bda{lcl}
   T  & = & [\thread{1}{\writeEE{x}{1}}, \thread{1}{\lockEE{y}{2}}, \thread{1}{\unlockEE{y}{3}},
    \\ & & \thread{2}{\lockEE{y}{4}}, \thread{2}{\writeEE{x}{5}}, \thread{2}{\unlockEE{y}{6}}].
  \eda
  Then,
  \bda{lcl}
   T'  & = & [\thread{2}{\lockEE{y}{4}}, \thread{2}{\writeEE{x}{5}}, \thread{1}{\writeEE{x}{1}},
        \\ & & \thread{2}{\unlockEE{y}{6}}, \thread{1}{\lockEE{y}{2}}, \thread{1}{\unlockEE{y}{3}}]
   \eda
  is a correctly reordered prefix of $T$ where
  $T'$ represents an alternative schedule.
\end{example}

\subsection{Data Race}

A data race is represented as a pair $(e, f)$ of events where $e$ and $f$ are in conflict
and we find a correctly reordered prefix (schedule)
where $e$ appears right before $f$ in the trace.

The condition that $e$ appears right before $f$ is useful to clearly
distinguish between write-read and read-write races.
We generally assume that for each read there is an initial write.
Write-read race pairs are linked to write-read dependencies where a write immediately precedes
a read. Read-write race pairs indicate situations where a read might interfere
with some other write, not the read's last write.
For write-write race pairs $(e,f)$ it turns out that if $e$ appears right before $f$
for some reordered trace
then $f$ can also appear right before $e$ by using a slightly different reordering.
Hence, write-write pairs $(e,f)$ and $(f,e)$ are equivalent
and we only report the representative $(e,f)$ where $e$ appears before $f$ in the original trace.

\begin{definition}[Initial Writes]
  We say a trace $T$ satisfies the \emph{initial write} property
  if for each read event $e$ on variable $x$ in $T$ there exists
  a write event $f$ on variable $x$ in $T$ where $\posP{T}{f} < \posP{T}{e}$.
\end{definition}

The initial write of a read does not necessarily need to occur within the same thread.
It is sufficient that the write occurs before the read in the trace.
From now on we assume that all traces satisfy the initial write assumption,
as well as the well-formed property.

\begin{definition}[Predictable Data Race Pairs]
\label{def:pred-data-race}
  Let $T$ be a trace.
  Let $T'$ be a correctly reordered prefix of $T'$.
  Let $e, f \in T$.
  We refer to $(e, f)$ as a \emph{predictable data race pair}
  if (a) $e, f$ are two conflicting events in $T$, and
  (b) $e$ appears right before $f$ in the trace $T'$.
  We refer to $T'$ as \emph{witness}.

  We say $(e, f)$ is a \emph{write-read} race pair if $e$ is a write and $f$ is a read.
  We say $(e, f)$ is a \emph{read-write} race pair if $e$ is a read and $f$ is a write.
  We say $(e, f)$ is a \emph{write-write} race pair if both events are writes.

  We write $\dataRace{T}{T'}{e}{f}$ for predictable write-read, read-write and write-write race pairs
  and traces $T$ and $T'$ as specified above.
  For write-write pairs $(e, f)$ we demand that $\posP{T}{e} < \posP{T}{f}$.

  We define $\allPredRacesP{T} = \{ (e,f) \mid e,f \in T \wedge \exists T'. \prefixOf{T}{T'} \wedge \dataRace{T}{T'}{e}{f} \}$.
  We refer to $\allPredRacesP{T}$ as the set of \emph{all predictable data pairs}
  derivable from $T$.

  We define  $\schedSpecificPredRacesP{T} = \{ (e,f) \mid e,f \in T \wedge \exists T'. \prefixOf{T}{T'} \wedge \dataRace{T}{T'}{e}{f}
  \wedge \mbox{$T'$ trace-specific schedule} \}$.
  We refer to $\schedSpecificPredRacesP{T}$ as the set of \emph{all trace-specific predictable data race pairs}
  derivable from $T$.
\end{definition}
Our definition of predictable races
follows~\citep{10.1145/3360605,Mathur:2018:HFR:3288538:3276515}.
and is more general compared to earlier definitions as found in~\citep{Smaragdakis:2012:SPR:2103621.2103702,Kini:2017:DRP:3140587.3062374}. The difference is that~\citep{Smaragdakis:2012:SPR:2103621.2103702,Kini:2017:DRP:3140587.3062374}
only consider the 'first' race as a predictable race
whereas~\citep{10.1145/3360605,Mathur:2018:HFR:3288538:3276515} also consider
'subsequent' races as predictable races.
Identifying races beyond the first race is useful as we explain
via the following example.


\begin{example}
  \label{ex:pred-race-pairs}
  Consider the following trace $T$ where we use the tabular notation.
  \bda{llll}
  & \thread{1}{} & \thread{2}{} & \thread{3}{}
  \\ \hline
  1. & \writeE{x} &&
  \\ 2. && \writeE{x} &
  \\ 3. && \readE{x} &
  \\ 4. &&& \readE{x}
  \\ 5. &&& \writeE{x}
  \eda
  For each event $e$ we consider the possible candidates $f$
  for which $(e, f)$ forms a predictable race pair.
  We start with event $\writeEE{x}{1}$.

  For $\writeEE{x}{1}$ we immediately find (1) $(\writeEE{x}{1}, \writeEE{x}{2})$.
  We also find (2) $(\writeEE{x}{1}, \writeEE{x}{5})$ by putting
  $\writeEE{x}{1}$ in between $\readEE{x}{4}$ and $\writeEE{x}{5}$.
  There are no further combinations $(\writeEE{x}{1}, f)$
  where $\writeEE{x}{1}$ can appear right before some $f$.
  For instance, $(\writeEE{x}{1}, \readEE{x}{3})$ is not valid because
  otherwise the last writer condition
  in Definition~\ref{def:correct-reordering} is violated.

  Consider $\writeEE{x}{2}$.
  We find (3) $(\writeEE{x}{2}, \writeEE{x}{1})$ because
  $$T' = [\writeEE{x}{2}, \writeEE{x}{1}]$$
  is a correctly reordered prefix of $T$.
  It is crucial that we only consider prefixes. Any extension of $T'$
  that involves $\readEE{x}{3}$ would violate the last writer condition
  in Definition~\ref{def:correct-reordering}.
  For $\writeEE{x}{2}$ there is another pair (4) $(\writeEE{x}{2}, \readEE{x}{4})$.
  The pair $(\writeEE{x}{2}, \readEE{x}{3})$ is not a valid write-read race pair
  because $\writeEE{x}{2}$ and $\readEE{x}{3}$ result from the same thread
  and therefore are not in conflict.

  Consider $\readEE{x}{3}$.
  We find pairs (5) $(\readEE{x}{3}, \writeEE{x}{1})$ and (6) $(\readEE{x}{3}, \writeEE{x}{5})$.
  For instance (5) is due to the prefix
  $$[\writeEE{x}{2}, \readEE{x}{3}, \writeEE{x}{1}].$$
  The remaining race pairs are
  (7) $(\readEE{x}{4}, \writeEE{x}{1})$ and (8) $(\writeEE{x}{5}, \writeEE{x}{1})$.

  Pairs (1) and (3) as well as pairs (2) and (8) are equivalent write-write race pairs.
  When collecting all predictable race pairs we only keep the representatives (1) and (2).
  Hence, we find
  $\allPredRacesP{T} = \{ (1), (2), (4), (5), (6), (7) \}$
  where each race pair is represented by the numbering schemed introduced above.
  There are no critical sections and therefore no alternative schedules.
  Hence, $\allPredRacesP{T} = \schedSpecificPredRacesP{T}$.
\end{example}

\citet{Smaragdakis:2012:SPR:2103621.2103702,Kini:2017:DRP:3140587.3062374} identify race pair (1) as the first race pair.
All race pairs (1-7) are schedulable races
according to~\citet{Mathur:2018:HFR:3288538:3276515}.
For example, consider (4) $(\writeEE{x}{2}, \readEE{x}{4})$.
and (6) $(\readEE{x}{3}, \writeEE{x}{5})$.
A witness for (6) is $T' = [\writeEE{x}{2}, \readEE{x}{4}, \readEE{x}{3}, \writeEE{x}{5}]$.
In $T'$ there is the 'earlier' race (4) and there is no other witness for (6)
that does not contain (4).
So it seems that (6) is not a 'real' race because
after (4) the program's behavior may become undefined.

However, it is easy to fix earlier races.
We make the conflicting
events mutually exclusive by introducing a fresh lock variable.
In terms of the original trace, we replace subtrace
$[\thread{2}{\writeEE{x}{2}}]$
by $$[\thread{1}{\lockE{y}}, \thread{2}{\writeEE{x}{2}}, \thread{1}{\unlockE{y}}]$$
and subtrace $[\thread{3}{\readEE{x}{4}}]$
by $$[\thread{2}{\lockE{y}}, \thread{3}{\writeEE{x}{4}}, \thread{2}{\unlockE{y}}]$$
where $y$ is a fresh lock variable.
Race (6) becomes then a real race.
Hence, the motivation to consider all races
as we otherwise require multiple execute-report-fix cycles.

The next example highlights the fact that a race may only reveal
itself for some prefix.
\begin{example}
  Consider
  \bda{lll}
  & \thread{1}{} & \thread{2}{}
  \\ \hline
  1. & \writeE{y} &
  \\ 2. & \lockE{z} &
  \\ 3. & \writeE{x} &
  \\ 4. & \unlockE{z} &
  \\ 5. && \lockE{z}
  \\ 6. && \writeE{y}
  \\ 7. && \readE{x}
  \\ 8. && \unlockE{z}
  \eda

  There is one predictable race $(\writeEE{y}{1}, \writeEE{y}{6})$
  that results from some alternative schedule.
  Consider
  $$T' = [\thread{2}{\lockEE{z}{5}}, \thread{1}{\writeEE{y}{1}},\thread{2}{\writeEE{y}{6}}].$$
  There is no extension of $T'$ that covers all events in $T$ as otherwise
  we would violate the last writer condition.
\end{example}

We summarize.
For each race pair $(e,f)$ there is a reordering
where $e$ appears right before $f$ in the reordered trace.
Each write-write race pair $(e,f)$ is also a write-write race pair $(f,e)$.
We choose the representative $(e,f)$ where $e$ appears before $f$ in the original trace.
For each write-read race pair $(e,f)$ we have that $e$ is $f$'s last write.
Each read-write race pair $(e,f)$ represents a situation
where the read $e$ can interfere with some other write $f$.
Formal statements see below.


\begin{lemma}
Let $T$ be some trace and
$(e, f)$ be some write-write race pair for $T$.
Then, we have that $(f, e)$ is also a write-write race pair for $T$.
\end{lemma}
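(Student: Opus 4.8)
The plan is to transform a witness for $(e,f)$ into a witness for $(f,e)$ by transposing the last two events of the reordered trace.

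First I would unfold the hypothesis. By assumption there is a correctly reordered prefix $T'$ of $T$ in which $e$ appears right before $f$. Truncating $T'$ to the prefix formed by its first $\posP{T'}{f}$ events, I may assume without loss of generality that $f$ is the \emph{last} event of $T'$, so $T' = T_1 \pp [e,f]$ for some $T_1$. This truncation is harmless: removing a suffix preserves all three conditions of Definition~\ref{def:correct-reordering}, since a projection of the truncation is a subtrace of the corresponding projection of $T'$; a read still present keeps its last writer, which precedes it and is therefore not removed; and for two acquires on the same lock that are still present, the matching release of the earlier one lies before the later acquire and is hence retained in its correct position.

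Next I would set $T'' = T_1 \pp [f,e]$ and verify that $T''$ is a correctly reordered prefix of $T$. The crucial point is that conflicting events come from different threads, so $\compTID{e} \neq \compTID{f}$ and the transposition leaves the projection onto every thread unchanged, i.e., $\proj{i}{T''} = \proj{i}{T'}$ for all $i$; hence program order is inherited from $T'$. The last-writer condition is immediate because $e$ and $f$ are both writes: every read occurring in $T''$ already occurs in $T_1$, strictly before $e$ and $f$, so its last writer also lies in $T_1$ and is untouched by the swap. Lock semantics is immediate for the same reason, $e$ and $f$ being neither acquires nor releases and lying after all of $T_1$, so the relative order of every acquire and its matching release is exactly as in $T'$.

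Finally I would conclude: in $T''$ the event $f$ appears right before $e$, and $f, e$ are conflicting write events, so $T''$ witnesses $\dataRace{T}{T''}{f}{e}$ and $(f,e)$ is a write-write race pair for $T$. I do not anticipate a real obstacle; the only care needed is the routine check that truncating a correct reordering and transposing two write events at its end each preserve the program-order, last-writer and lock-semantics conditions, and both checks reduce to the observation that the moved events are writes sitting at the very end of the trace.
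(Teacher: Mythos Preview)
Your proof is correct and follows essentially the same approach as the paper: truncate the witness so that $e,f$ are the final two events, then transpose them and check the three conditions of Definition~\ref{def:correct-reordering}. The paper's version is terser---it simply writes $T' = [\dots,e,f]$ and asserts the swap preserves the conditions---whereas you spell out the truncation step and the verification of each condition explicitly.
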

\begin{proof}
  By assumption $T'$ is some correctly reordered prefix
  where $T' = [\dots,e,f]$.
  We can reorder $e$ and $f$ in $T'$ while maintaining
  the conditions in Definition~\ref{def:correct-reordering}.
  Thus, we are done.
\end{proof}

\begin{lemma}
Let $T$ be some trace and
$(e, f)$ be some write-read race pair for $T$.
Then, $(f, e)$ cannot be a read-write race pair for $T$.
\end{lemma}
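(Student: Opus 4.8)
The plan is to argue by contradiction. Suppose $(e,f)$ is a write-read race pair for $T$ and, simultaneously, $(f,e)$ is a read-write race pair for $T$. By the write-read hypothesis and the last-writer remark preceding the lemma, $e$ is $f$'s last write in $T$; in particular $\posP{T}{e} < \posP{T}{f}$. By Definition~\ref{def:pred-data-race}, the write-read race gives a correctly reordered prefix $T_1'$ of $T$ with $T_1' = [\dots, e, f]$, and the read-write race gives a correctly reordered prefix $T_2'$ of $T$ with $T_2' = [\dots, f, e]$. I would focus on $T_2'$.

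First I would look at the read event $f$ inside the witness $T_2'$. Since $f$ is a read on some variable $x$ and we assume the initial-write property, $f$ has a last write $g$ with respect to $T_2'$, and the last-writer condition in Definition~\ref{def:correct-reordering} forces $g$ to also be $f$'s last write with respect to the original trace $T$. But $e$ is the \emph{unique} last write of $f$ in $T$, so $g = e$. Hence in $T_2'$ the event $e$ is the last write for $f$, which means $e$ appears \emph{before} $f$ in $T_2'$ and there is no write on $x$ between them in $T_2'$. This directly contradicts the shape $T_2' = [\dots, f, e]$, in which $e$ appears strictly after $f$.

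The one place that needs a little care — and the main (though minor) obstacle — is handling the case where the read-write witness $T_2'$ does not actually contain $e$: a priori a prefix could stop before $e$ is scheduled. But the witness for $(f,e)$ has $e$ appearing right before... no: $(f,e)$ as a read-write race pair means $f$ appears right before $e$ in $T_2'$, so $e$ \emph{is} in $T_2'$. Thus $f$ is also in $T_2'$, the last-writer condition applies to $f$, and the argument above goes through; there is no missing-event case to worry about. I would conclude that the assumption of a read-write race pair $(f,e)$ is untenable, which proves the lemma. (A symmetric remark: one could equally phrase the contradiction as "$f$'s last write in $T_2'$ occurs after $f$," which is absurd for a correct reordering.)
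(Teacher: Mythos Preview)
Your proposal is correct and follows essentially the same approach as the paper's proof: both hinge on the fact that $e$ is $f$'s last write in $T$, so any witness $T_2'$ for a putative read-write pair $(f,e)$ would violate the last-writer condition of Definition~\ref{def:correct-reordering}. The paper compresses this into a single sentence, whereas you spell out the contradiction explicitly; your detour through an auxiliary $g$ and the initial-write property is unnecessary (you already know $e$ is $f$'s last write in $T$, so the last-writer condition applies directly), but it does no harm.
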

\begin{proof}
  By construction $e$ must be $f$'s `last write'.
  Hence, $(f, e)$ is not valid as otherwise the `last write' property is violated.
\end{proof}

\begin{lemma}
Let $T$ be some trace and
$(e, f)$ be some read-write race pair for $T$.
Then, $(f, e)$ cannot be a write-read race pair for $T$.
\end{lemma}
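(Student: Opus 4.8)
The plan is to argue by contradiction, mirroring the structure of the two preceding lemmas. Suppose $(e,f)$ is a read-write race pair for $T$ and, simultaneously, $(f,e)$ is a write-read race pair for $T$. Being a read-write race pair means $e$ is a read, $f$ is a write, and there is a correctly reordered prefix $T_1 \rhd T$ with $T_1 = [\dots, e, f]$. Being a write-read race pair $(f,e)$ means $f$ is a write, $e$ is a read, and $f$ is the \emph{last write} of $e$ with respect to $T$ (this is the defining condition imposed on write-read race pairs in Definition~\ref{def:pred-data-race}, and it is exactly what made the previous lemma go through). So the two hypotheses are consistent on the types of $e$ and $f$, but they clash on the last-writer structure: $f$ must be $e$'s last write in $T$, yet in the witness $T_1$ for the read-write race, $f$ appears strictly after $e$.

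The key step is to invoke the last-writer condition of Definition~\ref{def:correct-reordering} applied to the prefix $T_1$. Since $e$ is a read event occurring in $T_1$, that condition requires that the last write of $e$ with respect to $T$ — which by the write-read hypothesis is $f$ — must also occur in $T_1$ and must be the last write of $e$ with respect to $T_1$ as well. In particular $f$ must appear \emph{before} $e$ in $T_1$. But $T_1 = [\dots, e, f]$ places $f$ immediately after $e$, so $\posP{T_1}{f} > \posP{T_1}{e}$, contradicting that $f$ is the last write of $e$ in $T_1$ (indeed $f$ is not even a write occurring before $e$ in $T_1$). This contradiction establishes the claim.

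I would write it concisely: assume both, extract that $f$ is $e$'s last write in $T$ from the write-read hypothesis, extract the witness $T_1 = [\dots,e,f]$ from the read-write hypothesis, and note that the last-writer clause of correct reordering forces $f$ to precede $e$ in $T_1$, which is false. The main obstacle — really just a point requiring care rather than difficulty — is making sure one cites the right asymmetry: the whole argument hinges on the fact that "write-read race pair" is \emph{defined} to carry the last-write relationship (as opposed to merely "a write followed by a conflicting read in some reordering"), so the proof is essentially dual to the preceding lemma and should explicitly appeal to that definitional constraint together with the last-writer condition of Definition~\ref{def:correct-reordering}.

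\begin{proof}
  Suppose, for contradiction, that $(e,f)$ is a read-write race pair for $T$ and that $(f,e)$ is a write-read race pair for $T$. From the latter, by construction $f$ is $e$'s last write w.r.t.~$T$. From the former, there is a correctly reordered prefix $T'$ of $T$ with $T' = [\dots, e, f]$, so $e$ is a read event in $T'$ and $\posP{T'}{e} < \posP{T'}{f}$. By the last-writer condition in Definition~\ref{def:correct-reordering}, the last write $f$ of $e$ w.r.t.~$T$ must occur in $T'$ and be $e$'s last write w.r.t.~$T'$; in particular $\posP{T'}{f} < \posP{T'}{e}$, contradicting $\posP{T'}{e} < \posP{T'}{f}$. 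Hence $(f,e)$ cannot be a write-read race pair for $T$.
\end{proof}
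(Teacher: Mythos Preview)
Your proof is correct and follows essentially the same contradiction as the paper's, just with the two hypotheses unpacked in the opposite order: the paper first argues from the read-write witness that $f$ is \emph{not} $e$'s last write in $T$, then notes that a write-read pair $(f,e)$ would force $f$ to be $e$'s last write; you start from the write-read hypothesis to get that $f$ \emph{is} $e$'s last write, then derive the clash inside the read-write witness via the last-writer clause of Definition~\ref{def:correct-reordering}. One small point worth making explicit: the paper's proof opens with ``For this result we rely on the initial writes assumption,'' and your argument uses it too, hidden in the ``by construction $f$ is $e$'s last write'' step --- without initial writes, $e$ could have no last write in $T$ at all, the last-writer condition would be vacuous for $e$ in both witnesses, and the contradiction would not materialize.
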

\begin{proof}
  For this result we rely on the initial writes assumption.
  For the read-write race pair $(e, f)$ we know that
  $f$ is \emph{not} $e's$ `last write'.
  Then, $(f, e)$ is not valid. If it would then $f$ is  $e's$ `last write'. Contradiction.
\end{proof}

From above we conclude that for each write-read race pair $(e,f)$
we have that $e$ appears before $f$ in the original trace $T$.
For read-write race pairs $(e,f)$, $e$ can appear before or after $f$
in the original trace.
See cases (5) and (6) in Example~\ref{ex:pred-race-pairs}.

\section{Fork and Join}

\begin{algorithm}
\caption{\PWREE{} algorithm (first pass) with fork and join} \label{alg:fork-join}

\begin{algorithmic}[1]
  \Function{w3}{$V, \LStSym$}
    \For {$y \in \LStSym$}
    \For {$(\thread{j}{k}, V') \in \LH{y}$}
     \If {$k < \accVC{V}{j}$}
     \State $V = V \sqcup V'$
     \EndIf
     \EndFor
     \EndFor

    \Return V
    \EndFunction
\end{algorithmic}

\begin{algorithmic}[1]
  \Procedure{acquire}{$i,y$}
  \State $\threadVC{i} = \Call{w3}{\threadVC{i},\LSt{i}}$
\State $\LSt{i} = \LSt{i} \cup \{y\}$
\State $\Acq{y} = \thread{i}{\accVC{\threadVC{i}}{i}}$
\State $\incC{\threadVC{i}}{i}$
\EndProcedure
\end{algorithmic}

\begin{algorithmic}[1]
  \Procedure{release}{$i,y$}
  \State $\threadVC{i} = \Call{w3}{\threadVC{i},\LSt{i}}$
\State $\LSt{i} = \LSt{i} - \{x\}$
\State $\LH{y} = \LH{y} \cup \{(\Acq{y}, \threadVC{i})\}$
\State $\incC{\threadVC{i}}{i}$
\EndProcedure
\end{algorithmic}

\begin{algorithmic}[1]
  \Procedure{write}{$i,x$}
  \State $\threadVC{i} = \Call{w3}{\threadVC{i},\LSt{i}}$
  \State $\vcEvt = \{ (\thread{i}{\accVC{\threadVC{i}}{i}}, \threadVC{i}, \LSt{i}) \} \cup \vcEvt$
\State $\edges{x} =
       \{ \thread{j}{k} \gtEdge \thread{i}{\accVC{\threadVC{i}}{i}}
       \mid \thread{j}{k} \in \rwVC{x} \wedge
       k < \accVC{\threadVC{i}}{j} \} \cup \edges{x}$
\State $\concEvt{x} = \{ (\thread{j}{k}, \thread{i}{\accVC{\threadVC{i}}{i}})
    \mid \thread{j}{k} \in \rwVC{x}
          \wedge k > \accVC{\threadVC{i}}{j} \} \cup \concEvt{x}$
  \State $\rwVC{x} = \{ \thread{i}{\accVC{\threadVC{i}}{i}} \}
         \cup \{ \thread{j}{k} \mid \thread{j}{k} \in \rwVC{x} \wedge
         k > \accVC{\threadVC{i}}{j} \}$
\State $\lastWriteVC{x} = \threadVC{i}$
\State $\lastWriteVCt{x} = i$
\State $\lastWriteVCL{x} = \LSt{i}$
\State $\incC{\threadVC{i}}{i}$
\EndProcedure
\end{algorithmic}

\begin{algorithmic}[1]
  \Procedure{fork}{$i,j$}
\State $\threadVC{j} = \threadVC{i}$
\State $\incC{\threadVC{i}}{i}$
\EndProcedure
\end{algorithmic}

\begin{algorithmic}[1]
  \Procedure{join}{$i,j$}
\State $\threadVC{i} = \threadVC{j} \sqcup \threadVC{i}$
\State $\incC{\threadVC{i}}{i}$
\EndProcedure
\end{algorithmic}

\end{algorithm}

Algorithm \ref{alg:fork-join} extends Algorithm \ref{alg:w3poee-firstpass} with fork and join.

\section{Additional Examples}

\begin{example}
  \label{ex:wwwpoee-simple}
  We consider a run of the first pass of $\PWREE$ for the following trace.
  Instead of epochs, we write $w_i$ for a write at trace position $i$.
  A similar notation is used for reads.
  We annotate the trace with $\rwVC{x}$, $\edges{x}$ and $\concEvt{x}$.
  For $\edges{x}$ and $\concEvt{x}$ we only show incremental updates.
  For brevity, we omit the set $\vcEvt$ because locksets and vector clocks
  of events do not matter here.

  \bda{llllll}
  & \thread{1} & \thread{2} & \rwVC{x} & \edges{x} & \concEvt{x}
  \\ \hline
  1. & \writeE{x} && \{ w_1 \} &&
  \\ 2. & \writeE{x} && \{ w_2 \} & w_1 \gtEdge w_2 &
  \\ 3. & & \writeE{x} & \{ w_2, w_3 \} && (w_2,w_3)
  \\ 4. & & \readE{x}  & \{ w_2, r_4 \} & w_3 \gtEdge r_4 & (w_2, r_4)
  \eda
  The potential races covered by $\concEvt{x}$ are $(w_2, w_3)$ and $(r_4, w_2)$.
  These are also predictable races.
  As said, the set $\concEvt{x}$ follows the trace position order.
  Hence, we find $(w_2, r_4) \in \concEvt{x}$.
  Overall, there are four predictable races.
  The first pass of $\PWREE$, i.e.~the set $\concEvt{x}$, fails to capture the predictable races
  $(w_1, w_3)$ and $(r_4, w_1)$.

  The missing pairs can be obtained via the second pass as follows.
  Starting from $(w_2,w_3) \in\concEvt{x}$ via
  $w_1 \gtEdge w_2 \in\edges{x}$ we can reach $(w_1, w_3)$.
  From $(w_2, r_4) \in\concEvt{x}$ via
  $w_1 \gtEdge w_2 \in \edges{x}$ we reach $(w_1, r_4)$.
  The pair $(w_1, r_4)$ represents a read-write pair.
  When reporting this pair we simply switch the order of events.
\end{example}

\section{Proofs of Results in Main Text}

\subsection{Auxiliary Results}

\begin{lemma}
$\shbSym \not\subseteq \wcpSym$.
\end{lemma}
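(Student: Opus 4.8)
The plan is to exhibit a single trace together with a pair of events that is ordered by $\shbSym$ but not by $\wcpSym$. The phenomenon to exploit is that $\shbSym$ inherits from Lamport's $\hbSym$ the ordering of \emph{all} critical sections on a common lock according to their position in the trace, whereas $\wcpSym$ orders two critical sections on the same lock only when they contain a pair of conflicting events.

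Concretely, I would take the two-threaded, well-formed trace
\[
  T = [\thread{1}{\lockEE{y}{1}}, \thread{1}{\writeEE{x_1}{2}}, \thread{1}{\unlockEE{y}{3}}, \thread{2}{\lockEE{y}{4}}, \thread{2}{\writeEE{x_2}{5}}, \thread{2}{\unlockEE{y}{6}}]
\]
in which the only synchronisation is the lock $y$ and the two critical sections write to \emph{distinct} variables $x_1 \neq x_2$ (empty critical sections would do just as well). First I would handle the $\shbSym$ side: since $\shbSym$ contains $\hbSym$, and $\hbSym$ orders the release of the first critical section before the acquire of the second, we get $\unlockEE{y}{3} \shbSym \lockEE{y}{4}$; composing with program order then yields $\writeEE{x_1}{2} \shbSym \writeEE{x_2}{5}$.

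Next I would handle the $\wcpSym$ side, i.e.\ argue that $\writeEE{x_1}{2}$ and $\writeEE{x_2}{5}$ are incomparable under $\wcpSym$ (and likewise the weaker pair $\unlockEE{y}{3}$, $\lockEE{y}{4}$). The only $\wcpSym$ rules are program order, the critical-section rule (which requires a conflicting pair of events, one in each of two critical sections on the same lock), and closure under composition with $\hbSym$ on either side. Program order is intra-thread and so produces no cross-thread edge in $T$. The critical-section rule does not fire, because $\writeEE{x_1}{2}$ and $\writeEE{x_2}{5}$ are writes to different variables and hence not conflicting, so there is no conflicting pair with one event in each critical section. Consequently $T$ contains no cross-thread $\wcpSym$ base edge at all, and the $\hbSym$-composition rules have nothing to compose with. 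Hence $\writeEE{x_1}{2} \not\wcpSym \writeEE{x_2}{5}$, which together with the first part witnesses $\shbSym \not\subseteq \wcpSym$.

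The one step I would be most careful about is ruling out that the $\hbSym$-composition closure of $\wcpSym$ secretly re-introduces the lock ordering that $\wcpSym$ was designed to drop. The key observation is that composition can only \emph{extend} an already-existing $\wcpSym$ edge, and in this minimal trace there are no cross-thread $\wcpSym$ base edges to begin with; I would make this precise either by a short induction on the derivation of $\wcpSym$ or, more simply, by observing that the restriction of $\wcpSym$ to $T$ coincides with program order. Everything else — well-formedness of $T$, adding an initial write to $x_1$ and $x_2$ if one wishes to respect the initial-write assumption, and the $\hbSym$ lock ordering — is routine.
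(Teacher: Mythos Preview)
Your argument is correct and follows the same route as the paper's one-line proof, which simply points to Example~\ref{ex:hb-unsound}. Trace~A of that example is essentially your construction---two critical sections on a common lock that SHB orders via the release--acquire edge it inherits from $\hbSym$, while WCP leaves them unordered because neither section contains an access conflicting with one in the other---the only cosmetic difference being that the paper places one of the writes outside its critical section rather than using two distinct variables.
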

\begin{proof}
Consider Example~\ref{ex:hb-unsound}.
\end{proof}

\begin{lemma}
 \label{le:wcp-subset-shb}
$\shbSym \subseteq \wcpSym$.
\end{lemma}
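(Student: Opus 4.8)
The plan is to prove $\shbSym \subseteq \wcpSym$ by induction on the construction of $\shbSym$: take an arbitrary edge $e \shbSym f$ and show it is already present in $\wcpSym$. Recall that $\shbSym$ is the smallest partial order containing program order, the positional ordering of critical sections on a common lock that $\shbSym$ inherits from $\hbSym$, and the write--read dependency edges $w \wrdSym r$ with $w$ the last write of $r$. Since $\wcpSym$ is itself a partial order and is closed under composition with $\poSym$ (and more generally with $\hbSym$) on both sides, it would suffice to show that each \emph{generating} edge of $\shbSym$ lies in $\wcpSym$; transitivity together with the closure properties of $\wcpSym$ would then absorb the partial-order closure.

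First I would dispatch the program-order generator: if $\compTID{e}=\compTID{f}$ and $\pos{e}<\pos{f}$ then $e \poSym f$, and because $\poSym \subseteq \hbSym$ and $\wcpSym$ absorbs $\hbSym$ on both sides, $e \wcpSym f$ follows immediately. Next I would attempt the critical-section generator: for two critical sections $\csSym{y}$, $\csSym{y}'$ on the same lock with $\csSym{y}$ before $\csSym{y}'$, the defining rule of $\wcpSym$ already yields $\rel{\csSym{y}} \wcpSym f$ whenever the two sections contain a \emph{conflicting} pair and $f \in \csSym{y}'$ is the later conflicting event, and the release-to-release propagation of $\wcpSym$ would extend this along the section.

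The step I expect to be the genuine obstacle is precisely where these generators \emph{fail} to be $\wcpSym$ edges, and this obstacle is substantive rather than an artifact of the strategy. For the write--read generator, take $w \wrdSym r$ with $w, r$ in different threads and no enclosing lock: $\shbSym$ orders $w$ before $r$, yet $\wcpSym$ supplies no rule forcing this, since in the absence of conflicting critical sections $\wcpSym$ is empty. This is exactly trace~B of Example~\ref{ex:hb-unsound}, where $\writeEE{x}{2} \shbSym \writeEE{x}{5}$ holds through the WRD on $y$ but $\writeEE{x}{2} \not\wcpSym \writeEE{x}{5}$. Symmetrically, two \emph{non-conflicting} critical sections on the same lock are ordered by $\hbSym$ (hence $\shbSym$) by their trace position, whereas $\wcpSym$ deliberately leaves them unordered --- this reorderability is what makes WCP more complete. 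Either instance blocks the inductive step and cannot be repaired within the $\shbSym \to \wcpSym$ direction.

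In fact this obstacle is fatal, which matches the immediately preceding lemma $\shbSym \not\subseteq \wcpSym$: the containment \emph{as displayed} does not hold. I therefore read the formula as a transcription slip for the inclusion named by the label, namely $\wcpSym \subseteq \shbSym$. The same inductive skeleton, oriented instead from $\wcpSym$ into $\shbSym$, does go through --- each conflicting-critical-section edge of $\wcpSym$ is an $\hbSym$ edge (the earlier release precedes, in trace order, the acquire of the later section and hence the conflicting event), $\hbSym \subseteq \shbSym$, and $\shbSym$ is transitively closed and absorbs $\hbSym$ --- so that reverse statement is what I would ultimately expect to establish here.
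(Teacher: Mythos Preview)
Your diagnosis is correct: the displayed inclusion is a typo, contradicted by the immediately preceding lemma and by the label \texttt{le:wcp-subset-shb}; the paper is in fact proving $\wcpSym \subseteq \shbSym$, and it uses this direction exactly as you would expect in Proposition~\ref{prop:wcp-semi-complete} (from ``unordered under $\shbSym$'' conclude ``unordered under $\wcpSym$'').

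Your eventual argument for the intended direction is the same as the paper's, only more explicit. The paper's proof is two sentences: both relations contain PO, and each WCP generating condition (RCD, RRD, HB closure) is a restriction of the RAD condition already present in $\hbSym \subseteq \shbSym$, so every WCP edge is an SHB edge. Your version spells out the same mechanism --- the release of the earlier critical section precedes the acquire of the later one in trace order, hence by RAD the WCP edge is an $\hbSym$ edge, and $\hbSym \subseteq \shbSym$. The lengthy attempt at the literal $\shbSym \subseteq \wcpSym$ direction is unnecessary once you have spotted the contradiction with the preceding lemma, but it does no harm.
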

\begin{proof}
  Both relations apply the PO condition.

  Consider the `extra' WCP conditions.
  These conditions relax the RAD condition.
  Hence, if any of these WCP conditions apply,
  the RAD condition applies as well.
\end{proof}

\begin{lemma}
\label{le:wrd-cs-order}
Let $T$ be a  trace.
Let $<$ denote some strict partial order among elements in $T$.
Let $e, f \in T$, $\csSym{y}_1$ and $\csSym{y}_2$ be two critical sections for the same lock variable $y$
such that (1) $\acq{\csSym{y}_1} < e < \rel{\csSym{y}_1}$,
          (2) $\acq{\csSym{y}_2} < f < \rel{\csSym{y}_2}$, and
(3) $e < f$.
Then, we have that $\neg (\rel{\csSym{y}_2} < \acq{\csSym{y}_1})$.
\end{lemma}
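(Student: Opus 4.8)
The plan is to argue by contradiction, using nothing about $<$ beyond the fact that a strict partial order is transitive and irreflexive. Note that the hypothesis that $\csSym{y}_1$ and $\csSym{y}_2$ guard the \emph{same} lock variable $y$ is not actually needed for this order-theoretic argument; it is presumably kept only to match the context in which the lemma will later be invoked, so I would not try to exploit it.

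First I would assume, towards a contradiction, that $\rel{\csSym{y}_2} < \acq{\csSym{y}_1}$. Then I would assemble a cycle from the hypotheses: from (1) we have $\acq{\csSym{y}_1} < e$; from (3) we have $e < f$; from (2) we have $f < \rel{\csSym{y}_2}$; and by the contradiction assumption $\rel{\csSym{y}_2} < \acq{\csSym{y}_1}$. Applying transitivity of $<$ along the chain
$\acq{\csSym{y}_1} < e < f < \rel{\csSym{y}_2} < \acq{\csSym{y}_1}$
yields $\acq{\csSym{y}_1} < \acq{\csSym{y}_1}$ (equivalently, $e < e$), which contradicts irreflexivity of the strict partial order $<$. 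Hence the assumption is untenable and $\neg(\rel{\csSym{y}_2} < \acq{\csSym{y}_1})$, as claimed.

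There is essentially no hard part here: the only things to check carefully are that each link in the chain is genuinely supplied by a hypothesis — $\acq{\csSym{y}_1} < e$ and $f < \rel{\csSym{y}_2}$ are the immediate weakenings of (1) and (2), and $e < f$ is (3) verbatim — and that "strict partial order" is being used with its standard meaning, so that transitivity and irreflexivity are both available. Once those are pinned down, the contradiction is a one-line transitivity chase.
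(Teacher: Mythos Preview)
Your proposal is correct and follows exactly the same approach as the paper's proof: assume $\rel{\csSym{y}_2} < \acq{\csSym{y}_1}$, chain together $\acq{\csSym{y}_1} < e < f < \rel{\csSym{y}_2} < \acq{\csSym{y}_1}$, and derive a contradiction with irreflexivity. Your added remark that the same-lock hypothesis plays no role in the argument is a correct observation not made explicit in the paper.
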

\begin{proof}
  Suppose, $\rel{\csSym{y}_2} < \acq{\csSym{y}_1}$.
  Then, we find that $\acq{\csSym{y}_1} < e < f < \rel{\csSym{y}_2} < \acq{\csSym{y}_1}$.
  This is a contradiction and we are done.
\end{proof}

\subsection{Proof of Proposition~\ref{prop:pwr-vs-wdp}}

\begin{proof}
We first show that $\wdpSym \subseteq \pwrSym$.
\PWR\ applies PO like WDP.
The WDP rule RCD is an instance of the \PWR\ rule ROD
in combination with the WRD rule.
Similarly, the WDP rule RRD is an instance of the
\PWR\ rule ROD
in combination with the PO rule.

Example~\ref{ex:pwr-is-stronger-than-wdp}
shows that the reverse direction
$\pwrSym \subseteq \wdpSym$ does not hold.
\end{proof}

\subsection{Proof of Proposition~\ref{prop:lockset-www-completeness}}

\begin{proof}
  We need to show that the $\pwrSym$ relation does not rule out any predictable data race pairs.
  For this to hold we show that any correctly reordered prefix satisfies the $\pwrSym$ relation.
  Clearly, this is the case for the PO and WRD.

  What other happens-before conditions need to hold for correctly reordered prefixes?
  For critical sections we demand that they must follow a proper acquire/release order.
  We also cannot arbitrarily reorder critical sections as write-read dependencies
  must be respected.  See Lemma~\ref{le:wrd-cs-order}.
  Condition ROD catches such cases.

  We have $e \in \csSym{y}$, $f \in \csSym{y}'$ and $\pwr{e}{f}$.
  Critical section $\csSym{y}'$ appears after $\csSym{y}$ (otherwise
  $\pwr{e}{f}$ would not hold).
  Considering the entire trace, $\csSym{y}'$ cannot be put in front of $\csSym{y}$
  via some reordering (see Lemma~\ref{le:wrd-cs-order}).

  As we may only consider a prefix, it is legitimate to apply some reordering
  that only affects parts of $\csSym{y}'$. Due to $\pwr{e}{f}$
  we may only reorder the part of $\csSym{y}'$ that is above of $f$ in the trace.
  This requirement is captured via $\pwr{\rel{\csSym{y}}}{f}$.

  We find that the $\pwrSym$ relation does not rule out any of the
  correctly reordered prefixes. This concludes the proof.
\end{proof}

\subsection{Proof of Proposition~\ref{prop:lockset-www-two-threads-soundness}}

\begin{proof}
  We need to show that some correctly reordered prefix of $T$ exists
  for which the potential Lockset-\PWR\ race pair $(e,f)$
  appear right next to each other in the reordered trace.
  W.l.o.g.~we assume that $e$ appears before $f$ in $T$
  and $\compTID{e}=1$ and $\compTID{f}=2$.

  By assumption
  $\LS{e} = \LS{f} = \{ \}$.
  The layout of the trace is as follows.

  \bda{l|l}
  \thread{1}{}
  & \thread{2}{}
  \\ \hline
  \vdots & \vdots
  \\ e &
  \\ T_1 &
  \\ & T_1'
  \\ T_2 &
  \\ & T_2'
  \\ \vdots & \vdots
  \\ T_n &
  \\ & T_n'
  \\ & f
  \eda
  Clearly, none of the parts $T_1, \dots, T_n$ can happen before any of the
  parts $T_1', \dots, T_n'$ w.r.t.~the $\pwrSym$ relation.
  Otherwise, $e \pwrSym f$ which contradicts the assumption.

  Hence, $T_1', \dots, T_n'$ are independent of $T_1, \dots, T_n$
  and the trace can be correctly reordered as follows.
\bda{l|l}
  \thread{1}{}
  & \thread{2}{}
  \\ \hline
  \vdots & \vdots
  \\ & T_1'
  \\ & \vdots
  \\ & T_n'
  \\ e &
  \\ & f
  \\ T_1 &
  \\ \vdots
  \\ T_n &
  \eda
  Hence, we are done.
\end{proof}

The result does not extend to more than two threads.
The condition that the lockset is empty is also critical.

\begin{example}
\label{ex:flat-lock-but-unsound-under-pwr}
Consider
\bda{lll}
& \thread{1} & \thread{2}
\\ \hline
1.  &&      \writeE{x}
\\ 2. & \lockE{z} &
\\ 3. & \readE{x} &
\\ 4. & \writeE{y} &
\\ 5. & \unlockE{z} &
\\ 6.     &&   \lockE{z}
\\ 7.     &&   \writeE{x}
\\ 8.     &&   \unlockE{z}
\\ 9.     &&   \writeE{y}
\eda
Events $\writeEE{y}{4}$ and $\writeEE{y}{9}$ are not ordered
under \PWR. The lockset of $\writeEE{y}{4}$ contains $z$.
Both events are a potential lockset-\PWR\ race pair
but this is not a predictable data race pair.
\end{example}

\subsection{Proof of Proposition~\ref{prop:w3po-post}}

We first state some auxiliary results.

In general, we can reach all missing pairs by using pairs in $\concEvt{x}$
as a start and by following edge constraints.
This property is guaranteed by the following statement.
We slightly abuse notation and identify events $e, f, g$ via their epochs
and vice versa.

\begin{lemma}
\label{le:reach-all-concurrent-w3-pairs}
  Let $T$ be a  trace and $x$ be some variable.
  Let $\edges{x}$ and $\concEvt{x}$ be obtained by \PWREE.
  Let $(e, f)$ be two conflicting events involving variable $x$
  where $(e, f) \not\in \concEvt{x}$, $\pos{e} < \pos{f}$ and
  $e, f$ are concurrent to each other w.r.t.~\PWR.
  Then, there exists $g_1,\dots g_n \in \edges{x}$ such that
  $e \gtEdge g_1 \gtEdge \dots \gtEdge g_n$
  and $(g_n,f) \in \concEvt{x}$.
\end{lemma}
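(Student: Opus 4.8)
The plan is to extract one global invariant of the first pass of $\PWREE$ and instantiate it in the state just before $f$ is processed. The invariant I would maintain is: \emph{at every point of the run, for every variable $x$ and every read/write event $h$ on $x$ that has already been processed, either $h \in \rwVC{x}$, or there is a chain $h \gtEdge h_1 \gtEdge \cdots \gtEdge h_m$ whose edges all lie in $\edges{x}$, with $m \ge 1$ and $h_m \in \rwVC{x}$; moreover every edge $a \gtEdge b \in \edges{x}$ satisfies $a \pwrSym b$ (and $\pos{a} < \pos{b}$).} The lemma is the instance $h = e$ of this invariant, combined with the observation that the surviving endpoint of the chain must form a pair with $f$ in $\concEvt{x}$.

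First I would prove the invariant by induction on the number of processed events. The only steps that change $\rwVC{x}$ or $\edges{x}$ are the handlers for \emph{write} and \emph{read} events on $x$, so those are the only steps to inspect. Such a handler, processing an event $e'$ in thread $i$: (i) inserts the epoch of $e'$ into $\rwVC{x}$, so $e'$ itself satisfies the invariant; (ii) keeps exactly those $\thread{j}{k} \in \rwVC{x}$ with $k > \accVC{\threadVC{i}}{j}$, which by the analogue of Proposition~\ref{prop:epoch-conc-check-main} for $\pwrSym$ — valid since $\PWREE$ extends FastTrack, as noted right after that proposition — are exactly the $h \in \rwVC{x}$ that are $\pwrSym$-concurrent with $e'$; and (iii) for each \emph{removed} $h$ it records an edge $h \gtEdge e'$ in $\edges{x}$, which, since $e' \in \rwVC{x}$, gives $h$ a length-one chain to $e'$, while any previously processed event whose chain ended at such an $h$ inherits the extended chain ending at $e'$. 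Events whose chains end at surviving elements of $\rwVC{x}$, and all events on other variables, are untouched; the edge-shape clauses are immediate from the code (in the read handler, after the write-read synchronization the $\pwrSym$-clock is in force, so $h \gtEdge e'$ is correctly classified as an ordered pair). This closes the induction.

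To finish, take the state just before $f$ is processed. Since $\pos{e} < \pos{f}$, $e$ has been processed, so by the invariant either $e \in \rwVC{x}$ or there is a chain $e \gtEdge g_1 \gtEdge \cdots \gtEdge g_n$ in $\edges{x}$ with $n \ge 1$ and $g_n \in \rwVC{x}$. If $e \in \rwVC{x}$: when $f$ is processed, the $\concEvt{x}$-update of its handler applies the epoch test to $(e,f)$, which succeeds precisely because $e$ and $f$ are $\pwrSym$-concurrent, so $(e,f)$ enters $\concEvt{x}$; as $\concEvt{x}$ never shrinks this contradicts $(e,f) \notin \concEvt{x}$. Hence the chain case holds with $n \ge 1$. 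Moreover $g_n$ and $f$ are $\pwrSym$-concurrent: $g_n \pwrSym f$ together with $e \pwrSym g_n$ (from the chain) would give $e \pwrSym f$, contradicting the hypothesis; and $f \pwrSym g_n$ is impossible since $g_n$ was processed before $f$ and $\pwrSym$ respects the trace order (immediate from Definition~\ref{def:www-relation}). Therefore, with $g_n \in \rwVC{x}$ when $f$ is processed, the handler inserts $(g_n, f)$ into $\concEvt{x}$; since $\pos{g_n} < \pos{f}$ it is stored with $g_n$ first, which is exactly the asserted conclusion.

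The main obstacle is getting the invariant precisely right — in particular, establishing that the events \emph{removed} from $\rwVC{x}$ at a \emph{write}/\emph{read} step are exactly those $\pwrSym$-before $e'$ and that each such removal is matched by an edge into the updated $\rwVC{x}$. This is where the lifted form of Proposition~\ref{prop:epoch-conc-check-main} does the real work; care is needed in the read handler, where the write-read synchronization is applied before the $\edges{x}$/$\rwVC{x}$ updates, so that the ``keep'' and ``add-edge'' conditions indeed partition the old $\rwVC{x}$. The remaining steps (the contradiction in the $e \in \rwVC{x}$ case, and the concurrency of $g_n$ with $f$) are short once the invariant is in hand.
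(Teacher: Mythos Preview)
Your proposal is correct and follows essentially the same argument as the paper. The paper tracks $e$ directly---arguing that $e$ must have been evicted from $\rwVC{x}$ by some $g_1$ (yielding an edge), then $g_1$ by some $g_2$, and so on until some $g_n$ survives in $\rwVC{x}$ when $f$ is processed, and finally showing $g_n$ is concurrent with $f$ by the same contrapositive you use. Your version packages this as an explicit global invariant proved by induction over all processed events, which is a cleaner presentation but not a different idea; the paper's iterative ``$g_1$, then $g_2$, \dots'' is just the unrolled form of your chain clause, and both proofs rely on the same two facts: every removal from $\rwVC{x}$ is witnessed by an edge in $\edges{x}$, and every edge encodes a $\pwrSym$-ordering.
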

\begin{proof}
  We consider the point in time event $e$ is added to $\rwVC{x}$
  when running \PWREE.
  By the time we reach $f$, event $e$ has been removed from $\rwVC{x}$.
  Otherwise, $(e, f) \in \concEvt{x}$ which
  contradicts the assumption.

  Hence, there must be some $g_1$ in $\rwVC{x}$
  where
  $$
  \pos{e} < \pos{g_1} < \pos{f}.
  $$
  As $g_1$ has removed $e$, there must exist $e \gtEdge g_1 \in \edges{x}$ (1).

  By the time we reach $f$, either $g_1$ is still in $\rwVC{x}$,
  or $g_1$ has been removed by some $g_2$
  where $g_1 \gtEdge g_2 \in \edges{x}$ and $g_2 \in \rwVC{x}$.
  As between $e$ and $f$ there can only be a finite number of events,
  we must reach some $g_n \in \rwVC{x}$
  where $g_1 \gtEdge \dots \gtEdge g_n$ (2).
  Event $g_n$ must be concurrent to $f$.

  Suppose $g_n$ is not concurrent to $f$. Then, $g_n \pwrSym f$ (3).
  The case $f \pwrSym g_n$ does not apply because $g_n$ appears before $f$ in the trace.
  Edges imply \PWR\ relations. From (2), we conclude that
  $g_1 \pwrSym \dots \pwrSym g_n$ (4).
  (1), (2) and (4) combined yields $e \pwrSym f$. This contradicts the assumption
  that $e$ and $f$ are concurrent.

  Hence, $g_n$ is concurrent to $f$.
  Hence, $(g_n, f) \in \concEvt{x}$.
  Furthermore, we have that $e \gtEdge g_1 \gtEdge \dots \gtEdge g_n \in \edges{x}$.
\end{proof}

Example~\ref{ex:w3po-post} does not contradict the above
Lemma~\ref{le:reach-all-concurrent-w3-pairs}.
The lemma states that all concurrent pairs can be identified.

The next property characterizes a sufficient condition under which a pair
is added to $\concEvt{x}$.

\begin{lemma}
\label{le:no-other-conc-in-between}
Let $T$ be a well-formed trace.
Let $e,f \in \rwTx$ for some variable $x$
such that (1) $e$ and $f$ are concurrent to each other w.r.t.~\PWR,
(2) $\pos{f} > \pos{e}$, and
(3) $\neg\exists g \in \rwTx$ where $g$ and $f$ are concurrent
to each other w.r.t.~\PWR\ and $\pos{f} > \pos{g} > \pos{e}$.
Let $\concEvt{x}$ be the set obtained by $\PWREE$.
Then, we find that $(e,f) \in \concEvt{x}$.
\end{lemma}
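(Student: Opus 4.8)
The plan is to follow the evolution of $\rwVC{x}$ as \PWREE\ processes $T$ and to show that the epoch of $e$ is still an element of $\rwVC{x}$ at the moment $f$ is handled. Granting this, the conclusion is immediate: when $f$, say in thread $i$, is processed, the procedure for a write (resp.\ read) of $x$ extends $\concEvt{x}$ by every pair $(\thread{j}{k},\thread{i}{\accVC{\threadVC{i}}{i}})$ with $\thread{j}{k}\in\rwVC{x}$ and $k>\accVC{\threadVC{i}}{j}$. By the FastTrack epoch characterization (Proposition~\ref{prop:epoch-conc-check-main}), transferred to the \PWR\ vector clocks maintained by the algorithm, the side condition $k>\accVC{\threadVC{i}}{j}$ says precisely that $\thread{j}{k}$ and $f$ are concurrent w.r.t.\ \PWR, which holds for $e$ by hypothesis~(1). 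Since $\pos{e}<\pos{f}$ by~(2), the pair is recorded as $(e,f)$, the representative fixed by the position convention on $\concEvt{x}$.

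So the heart of the proof is the claim that, at the point $f$ is processed, the epoch of $e$ belongs to $\rwVC{x}$. First I would note that when $e$ itself was processed its epoch was inserted into $\rwVC{x}$ --- the recomputation of $\rwVC{x}$ always keeps the current event --- and that an epoch, once removed from $\rwVC{x}$, never reappears (the recomputation only ever takes a subset of the old $\rwVC{x}$ plus the current epoch). Hence, if the claim fails, there is a read/write event $h$ on $x$, processed strictly after $e$ and no later than $f$, whose processing removes $e$ from $\rwVC{x}$. Inspecting the recomputation of $\rwVC{x}$ at $h$, the fact that $e=\thread{j}{k}$ is dropped means the epoch test fails for $e$, i.e.\ $e$ is a \PWR-predecessor of $h$: $\pwr{e}{h}$.

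Next I would eliminate the case $h=f$: if $f$ were the event dropping $e$ we would have $\pwr{e}{f}$, contradicting concurrency~(1); hence $\pos{e}<\pos{h}<\pos{f}$ and $h\in\rwTx$. Now apply hypothesis~(3) with $g:=h$: since $\pos{e}<\pos{h}<\pos{f}$, the events $h$ and $f$ cannot be \PWR-concurrent; as $\pos{h}<\pos{f}$ and $\pwrSym$ is consistent with the trace order, this forces $\pwr{h}{f}$. Combining $\pwr{e}{h}$ with $\pwr{h}{f}$ and transitivity of $\pwrSym$ gives $\pwr{e}{f}$, contradicting~(1) again. This proves the claim and hence the lemma.

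The combinatorial core above is a short contradiction argument; the two background facts it rests on are where the actual content lies. The first is that (non-)membership in $\rwVC{x}$ is governed exactly by the epoch concurrency test, i.e.\ Proposition~\ref{prop:epoch-conc-check-main} transfers verbatim from FastTrack's happens-before clocks to the \PWR\ clocks of \PWREE; in the read case one should also note that the write--read synchronization $\threadVC{i}=\supVC{\threadVC{i}}{\lastWriteVC{x}}$ and the subsequent ROD-closure step, both performed before the $\concEvt{x}$ update, only enlarge $f$'s clock and so cannot undo the \PWR-concurrency of $e$ with $f$ granted by~(1). The second is that $\pwrSym$ respects trace position, $\pwr{a}{b}\Rightarrow\pos{a}<\pos{b}$, which is exactly what lets me upgrade ``not concurrent and earlier'' into ``\PWR-before''. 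I would cite both, or record them as small auxiliary observations preceding the proof.
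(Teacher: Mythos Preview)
Your proof is correct and follows essentially the same route as the paper's: show that $e$'s epoch survives in $\rwVC{x}$ until $f$ is processed by assuming a remover $h$ with $\pwr{e}{h}$, then deriving $\pwr{e}{f}$ via transitivity in contradiction to~(1). The paper orders the contradiction slightly differently (it first argues the remover must be concurrent to $f$, then invokes~(3)), but the two arguments are contrapositives of one another; you are also a bit more explicit in separating out the $h=f$ case and in flagging the two background facts (Proposition~\ref{prop:epoch-conc-check-main} for \PWR\ clocks, and that $\pwrSym$ is consistent with trace position), which the paper uses tacitly.
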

\begin{proof}
  By induction on $T$. Consider the point where $e$ is added to $\rwVC{x}$.
  We assume that $e$'s epoch is of the form $\thread{j}{k}$.
  We show that $e$ is still in $\rwVC{x}$
  at the point in time we process $f$.

  Assume the contrary. So, $e$ has been removed from $\rwVC{x}$.
  This implies that there is some $g$ such that
  $e \pwrSym g$ and $\pos{f} > \pos{g} > \pos{e}$.
  We show that $g$ must be concurrent to $f$.

  Assume the contrary.  Suppose $g \pwrSym f$. But then $e \pwrSym f$
  which contradicts the assumption that $e$ and $f$ are concurrent
  to each other.
  Suppose $f \pwrSym g$. This contradicts the fact that $\pos{f} > \pos{g}$.

  We conclude that $g$ must be concurrent to $f$.
  This is a contradiction to (3).
  Hence, $e$ has not been removed from $\rwVC{x}$.

  By assumption $e$ and $f$ are concurrent to each other.
  Then, we can argue that $k > \accVC{\threadVC{i}}{j}$
  where by assumption $\threadVC{i}$ is $f$'s vector clock
  and $e$ has the epoch $\thread{j}{k}$.
  Hence, $(e,f)$ is added to $\concEvt{x}$.
\end{proof}

We are now ready to verify Proposition~\ref{prop:w3po-post}.

\begin{proof}
  We first show that the construction of $\accConcs{x}$ terminates by
  showing that no pair is added twice.
  Consider $(e,f) \in \concEvt{x}$ where $g \gtEdge e$.
  We remove $(e,f)$ and add $(g,f)$.

  Do we ever encounter $(f,e)$? This is impossible as the position of first component
  is always smaller than the position of the second component.

  Do we re-encounter $(e,f)$? This implies that there must exist $g$
  such that $e \gtEdge g$ where $(g,f) \in \concEvt{x}$.
  By Lemma~\ref{le:no-other-conc-in-between} this is in contradiction
  to the assumption that $(e,f)$ appeared in $\concEvt{x}$.
  We conclude that the construction of $\accConcs{x}$ terminates.

  Pairs are kept in a total order imposed by the position of the first component.
  As shown above we never revisit pairs.
  For each $e$ any predecessor $g$ where $g \gtEdge e \in \edges{x}$
  can be found in constant time (by using a graph-based data structure).
  Then, a new pair is built in constant time.

  There are $O(n*n)$ pairs overall to consider.
  We conclude that the construction of $\accConcs{x}$ takes time $O(n*n)$.
  By Lemma~\ref{le:reach-all-concurrent-w3-pairs} we can guarantee that
  all pairs in $\allConcsP{T}{x}$ will be reached.
  Then, $\allConcsP{T}{x} \subseteq \accConcs{x}$.
\end{proof}

\subsection{Proof of Lemma~\ref{le:lockset-w3-filtering}}

\begin{proof}
  Follows from the fact that $\PWREE$ computes the event's lockset and vector clock.
  To check if two events are concurrent it suffices to compare the earlier in the trace
  events time stamp against the time stamp of the later in the trace event.
  Recall that for pairs in $\concEvt{x}$ and therefore also $\accConcs{x}$,
  the left component event occurs earlier in the trace than the right component event.
\end{proof}

\section{WRD Race Pairs}
\label{sec:wrd-race-pairs}

Lockset-\PWR\ WRD race pairs characterize write-read races resulting from the trace-specific
or alternative schedules.

\begin{example}
 \label{ex:w-w-and-w-r-race}
Consider the following trace (on the left) and the set of predictable and trace-specific race pairs (on the right).
  \bda{lll}
  & \thread{1}{} & \thread{2}{}
  \\ \hline
  1. & & \writeE{x}
  \\ 2. & \writeE{x} &
  \\ 3. & \lockE{y} &
  \\ 4. & \unlockE{y} &
  \\ 5. && \lockE{y}
  \\ 6. && \unlockE{y}
  \\ 7. && \readE{x}
  \eda
  where
  \bda{lcl}
  \allPredRacesP{T} & = & \{ (\writeEE{x}{1}, \writeEE{x}{2}), (\writeEE{x}{2}, \readEE{x}{7}) \}
  \\ \\ \schedSpecificPredRacesP{T} & = & \{ (\writeEE{x}{1}, \writeEE{x}{2}) \}
  \eda
  There are no read-write races in this case.
  The pair $(\writeEE{x}{2}, \readEE{x}{7})$ results from the correctly
  reordered prefix (alternative schedule)
   $T' =
      [ \thread{2}{\writeEE{x}{1}},
        \thread{2}{\lockEE{y}{5}},
        \thread{2}{\unlockEE{y}{6}},
        \thread{1}{\writeEE{x}{2}},
          \thread{2}{\readEE{x}{7}} ].
   $
      The pair $(\writeEE{x}{2}, \readEE{x}{7})$ is not in $\schedSpecificPredRacesP{T}$
      because $T'$ represents some alternative schedule and there is no trace-specific
      schedule where the write and read appear right next to each other.
\end{example}

The pair $(\writeEE{x}{1}, \readEE{x}{7})$ is Lockset-\PWR\ WRD race pair.
However, this pair is not a SHB WRD race pair because the write-read race
results from some alternative schedule.

\section{\PWR\ Variants}
\label{sec:w3-variants}

We consider the following variant of \PWR\ where
we impose a slightly different ROD rule.

\begin{definition}[WRD + ROD with Acquire]
\label{def:wwwa-relation}
  Let $T$ be a trace.
  We define a relation $\pwra{}{}$ among trace events
  as the smallest partial order that
  satisfies conditions PO and WRD as well as the following condition:

  \begin{description}
  \item[ROD with Acquire:] Let $f \in T$ be an event.
    Let $\csSym{y}$, $\csSym{y}'$ be two critical sections
    where $\csSym{y}$ appears before $\csSym{y}'$ in the trace, $f \in \csSym{y}'$
    and $\pwra{\acq{\csSym{y}}}{f}$.
    Then, $\pwr{\rel{\csSym{y}}}{f}$.
  \end{description}

  We refer to $\pwra{}{}$ as the \emph{WRD + ROD with Acquire} (PWRA)
  relation.
\end{definition}

The ROD rule  in Definition~\ref{def:www-relation} is more general
compared to the ROD with Acquire rule.
The ROD rule says that if $e \in \csSym{y}$, $f \in \csSym{y}'$ and $\pwr{e}{f}$.
then $\pwr{\rel{\csSym{y}}}{f}$. Hence, the ROD with Acquire rule is an instance
of this rule. Take $e = \acq{\csSym{y}}$.
Hence, $\pwraSym \subseteq \pwrSym$.
We can even show that all \PWR\ relations are already covered by PWRA.

\begin{lemma}
$\pwrSym = \pwraSym$.
\end{lemma}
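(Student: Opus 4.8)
The final statement is the claim $\pwrSym = \pwraSym$ where $\pwrSym$ is the PWR relation of Definition~\ref{def:www-relation} and $\pwraSym$ is the PWRA relation of Definition~\ref{def:wwwa-relation}. One inclusion ($\pwraSym \subseteq \pwrSym$) is already argued in the text immediately preceding the lemma; so the real content is the reverse inclusion $\pwrSym \subseteq \pwraSym$.

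**The plan.** Since $\pwrSym$ is defined as the \emph{smallest} partial order closed under PO, WRD and ROD, it suffices to show that $\pwraSym$ is also closed under those three rules; then minimality of $\pwrSym$ forces $\pwrSym \subseteq \pwraSym$. The relation $\pwraSym$ is a partial order by definition and it is closed under PO and WRD by definition, so the only thing to check is that $\pwraSym$ is closed under the (general) ROD rule: given events $e \in \csSym{y}$ and $f \in \csSym{y}'$ with $\pwra{e}{f}$, we must derive $\pwra{\rel{\csSym{y}}}{f}$.

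**Key step.** So fix such $e, f, \csSym{y}, \csSym{y}'$ with $\pwra{e}{f}$. The goal is to produce a PWRA-derivation of $\pwra{\rel{\csSym{y}}}{f}$. The natural route is via the acquire of $\csSym{y}$: by PO we have $\pwra{\acq{\csSym{y}}}{e}$ (the acquire precedes every event of its own critical section in program order), and combined with $\pwra{e}{f}$ and transitivity of the partial order $\pwraSym$ we get $\pwra{\acq{\csSym{y}}}{f}$. Now I want to invoke \emph{ROD with Acquire} to conclude $\pwra{\rel{\csSym{y}}}{f}$ — but that rule has the side condition that $\csSym{y}$ appears before $\csSym{y}'$ in the trace. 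I would discharge this the same way Lemma~\ref{le:wrd-cs-order} is used elsewhere: from $\pwra{\acq{\csSym{y}}}{f}$ (hence, in particular, $\pos{\acq{\csSym{y}}} < \pos{f}$, since every PWRA edge respects trace order — this itself is a small auxiliary fact worth recording, provable by induction on the derivation, as each of PO/WRD/ROD-with-Acquire only relates an earlier event to a later one) and $f \in \csSym{y}'$ we get that $\csSym{y}$'s acquire precedes $f$ which lies inside $\csSym{y}'$; if instead $\csSym{y}'$ appeared entirely before $\csSym{y}$ in the trace we would have $\pos{f} < \pos{\rel{\csSym{y}'}} < \pos{\acq{\csSym{y}}}$, contradicting $\pos{\acq{\csSym{y}}} < \pos{f}$. (The genuinely degenerate case $\csSym{y} = \csSym{y}'$ is handled by PO directly: then $\rel{\csSym{y}}$ and $f$ are in the same thread, and since $\pos{f} \le \pos{\rel{\csSym{y}'}} = \pos{\rel{\csSym{y}}}$ we even get $\pwra{f}{\rel{\csSym{y}}}$ or $f = \rel{\csSym{y}}$, so ROD is trivially respected or vacuous — I'd phrase the ROD closure as "$\pwra{\rel{\csSym{y}}}{f}$ whenever $\rel{\csSym{y}} \ne f$ and $\csSym{y} \ne \csSym{y}'$", matching how ROD is actually used.) Having established that $\csSym{y}$ precedes $\csSym{y}'$, ROD with Acquire applies to $\acq{\csSym{y}}$, $\csSym{y}$, $\csSym{y}'$, $f$ and yields $\pwra{\rel{\csSym{y}}}{f}$, completing the closure argument.

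**Expected obstacle.** The routine parts (PWRA is a partial order; PO and WRD closure; transitivity) are immediate. The only subtlety is the side condition "$\csSym{y}$ appears before $\csSym{y}'$ in the trace" in ROD with Acquire versus the fact that the general ROD rule has no such explicit hypothesis — so I must be careful that $\pwra{e}{f}$ with $e \in \csSym{y}$, $f \in \csSym{y}'$ genuinely forces the trace-order of the two critical sections, which is exactly where the "every PWRA edge respects trace position" auxiliary lemma does the work. Once that lemma is in hand the proof is a two-line application of PO, transitivity, and ROD with Acquire; I would state and prove the position-monotonicity lemma first, then run the closure argument, then invoke minimality of $\pwrSym$.
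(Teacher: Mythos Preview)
Your proof is correct and follows essentially the same route as the paper: reduce the general ROD rule to the acquire-restricted one by using PO to step from $e$ back to $\acq{\csSym{y}}$, then transitivity, then ROD with Acquire. Your closure/minimality framing is slightly cleaner than the paper's ``induction on the number of ROD applications,'' and you are more careful than the paper in explicitly discharging the side condition that $\csSym{y}$ precedes $\csSym{y}'$ in the trace (the paper simply asserts it is ``in the position to apply'' the rule).
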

\begin{proof}
  Case $\pwraSym \subseteq \pwrSym$: Follows from the fact that PWRA is an instance of \PWR.

  Case $\pwrSym \subseteq \pwraSym$:
  We verify this case by induction over the number of ROD rule applications.

  The base cases of the induction proof hold as both \PWR\ and PWRA assume PO and WRD.
  Consider the induction step. We must find the following situation.
  We have that $\pwr{\rel{\csSym{y}}}{f}$ where
  (1) $e \in \csSym{y}$, (2) $f \in \csSym{y}'$ and (3) $\pwr{e}{f}$.
  We need to show that $\pwra{\rel{\csSym{y}}}{f}$.

  From (1), (3) and PO we conclude that $\pwr{\acq{\csSym{y}}}{f}$.
  By induction we find that $\pwra{\acq{\csSym{y}}}{f}$.
  We are in the position to apply the ROD with Acquire rule
  and conclude that $\pwra{\rel{\csSym{y}}}{f}$ and we are done.
\end{proof}

We consider yet another variant of \PWR.

\begin{definition}[WRD + ROD for Read]
\label{def:wwwr-relation}
  Let $T$ be a trace.
  We define a relation $\pwrr{}{}$ among trace events
  as the smallest partial order that
  satisfies conditions PO and WRD as well as the following condition:

  \begin{description}
  \item[ROD for Read:] Let $e,f \in T$ be two events
    where $f$ is a read event.
    Let $\csSym{y}$, $\csSym{y}'$ be two critical sections
    where $e \in \csSym{y}$, $f \in \csSym{y}'$
     and $\pwrr{e}{f}$.
    Then, $\pwrr{\rel{\csSym{y}}}{f}$.
  \end{description}

  We refer to $\pwrr{}{}$ as the \emph{WRD + ROD for Read} (PWRR)
  relation.
\end{definition}

The difference to \PWR\ is that the ROD for Read rule only applies to read events.
Again, we find that $\pwrrSym \subseteq \pwrSym$ because
PWRR is an instance of \PWR.
However, the other direction does not hold because
some \PWR\ relations do not apply for PWRR as the following example shows.

\begin{example}
\label{ex:w3r-is-weaker}
  Consider the trace
  \bda{lll}
    & \thread{1}{} & \thread{2}{}
  \\ \hline
  1. & \lockE{y} &
  \\ 2. & \writeE{x} &
  \\ 3. & \writeE{z} &
  \\ 4. & \unlockE{y} &
  \\ 5. && \readE{x}
  \\ 6. && \lockE{y}
  \\ 7. && \writeE{z}
  \\ 8. && \unlockE{y}
  \\ 9. && \writeE{z}
  \eda
  Between $\writeEE{x}{2}$ and $\readEE{x}{5}$ there is a WRD.
  In combination with PO, we find that $\lockEE{y}{1} \pwrSym \writeEE{z}{7}$.
  Via the ROD rule we conclude that $\unlockEE{y}{4} \pwrSym \writeEE{z}{7}$.
  As there is no read event in the (second) critical section
  $(\lockEE{y}{6}, \unlockEE{y}{8})$,
  we do not impose $\unlockEE{y}{4} \pwrSym \writeEE{z}{7}$ under PWRR.
\end{example}

We summarize.
\PWR\ and PWRA are equivalent.
PWRR is weaker. In the context of data race prediction this means
that by using PWRR we may encounter more false positives.

Consider again Example~\ref{ex:w3r-is-weaker}.
Under PWRR, conflicting events $\writeEE{z}{3}$ and $\writeEE{z}{9}$
are not synchronized and their lockset is disjoint.
Hence, $(\writeEE{z}{3}, \writeEE{z}{9})$ form a potential data race pair under PWRR.
This is a false positive because due to the WRD the critical sections cannot be reordered
such that $\writeEE{z}{3}$ and  $\writeEE{z}{9})$ appear right next to each other.

\section{\PWREE{} Optimizations}
\label{sec:wwwpoee-optimizations}

\subsection{Application of ROD Rule}

Function \Call{w3}{} enforces the ROD rule.
In general, this needs to be done for each event to be processed.
For events in thread~$i$,
we can skip \Call{w3}{} if \Call{w3}{} has been called
for some earlier event in thread~$i$ and no new critical sections
from some other thread are added to the history.

\subsection{Read-Read Pair Removal}

\begin{algorithm}
\caption{\PWREE\ Read-Read Optimizations }\label{alg:w3poee-rr}
 {\small

\begin{algorithmic}[1]
  \Procedure{read}{$i,x$}
      \State $j = \lastWriteVCt{x}$
  \If {$\accVC{\threadVC{i}}{j} > \accVC{\lastWriteVC{x}}{j} \wedge \LSt{i} \cap \lastWriteVCL{x} = \emptyset$}
    \State $reportPotentialRace(\thread{i}{\accVC{\threadVC{i}}{i}}, \thread{j}{\accVC{\lastWriteVC{x}}{j}})$
  \EndIf
\State $\threadVC{i} = \threadVC{i} \sqcup \lastWriteVC{x}$
\State $\threadVC{i} = \Call{w3a}{\threadVC{i},\LSt{i}}$
\State $\vcEvt = \{ (\thread{i}{\accVC{\threadVC{i}}{i}}, \threadVC{i}, \LSt{i}) \} \cup \vcEvt$
\State $\edges{x} =
       \{ \thread{j}{k} \gtEdge \thread{i}{\accVC{\threadVC{i}}{i}}
       \mid \thread{j}{k} \in \rwVC{x} \wedge
       k < \accVC{\threadVC{i}}{j} \} \cup \edges{x}$
\State $\concEvt{x} = \{ (\thread{j}{k}, \thread{i}{\accVC{\threadVC{i}}{i}})
    \mid \thread{j}{k} \in \rwVC{x}
          \wedge k > \accVC{\threadVC{i}}{j} \wedge \mbox{$\thread{j}{k}$ is a write} \} \cup \concEvt{x}$
\State $\rwVC{x} = \{ \thread{i}{\accVC{\threadVC{i}}{i}} \}
         \cup \{ \thread{j}{k} \mid \thread{j}{k} \in \rwVC{x} \wedge
         (k > \accVC{\threadVC{i}}{j} \vee \mbox{$\thread{j}{k}$ is a write}) \}$
\State $\incC{\threadVC{i}}{i}$
\EndProcedure
\end{algorithmic}
}
\end{algorithm}

The set $\concEvt{x}$ also maintains concurrent read-read pairs.
This is necessary as we otherwise might miss to detect some read-write race pairs.
We give an example shortly.
In practice there are many more reads compared to writes.
Hence, we might have to manage a high number of concurrent read-read pairs.

We can remove all read-read pairs from $\concEvt{x}$ if we relax
the assumptions on $\rwVC{x}$.
Usually, all events in $\rwVC{x}$ must be concurrent to each other.
We relax this condition as follows:
\begin{itemize}
 \item All writes considered on their own and all reads considered on their own are concurrent to each.
 \item A write may happen before a read.
\end{itemize}
Based on the relaxed condition, set $\concEvt{x}$ no longer needs to keep track of read-read pairs.

Algorithm~\ref{alg:w3poee-rr} shows the necessary changes that only
affect the processing of reads.
The additional side condition "$\thread{j}{k}$ is a write"
ensures that no read-read pairs will be added to $\concEvt{x}$.
For $\rwVC{x}$ the additional side condition guarantees that
a write can only be removed by a subsequent write (in happens-before \PWR\ relation).

\fig{f:example}{Read-Read Optimization}{
  \bda{c}
    \ba{lllllllll}
  & \thread{1}{} & \thread{2}{} & \thread{3}{} & \rwVC{x}' & \rwVC{x} & \concEvt{x}' & \concEvt{x} & \edges{x}
  \\ \hline
  1. & \writeE{x} &&& \{ w_1 \} & \{ w_1 \} &&&
  \\ 2. & \readE{x} &&& \{ r_2 \} & \{w_1, r_2\} &&& w_1 \gtEdge r_2
  \\ 3. & & \writeE{x} && \{ r_2, w_3 \} & \{w_1, r_2, w_3\} & (r_2,w_3) & (w_1,w_3)   &
  \\    & &            &&                &                  &           & (r_2, w_3)  &
  \\ 4. & & \readE{x}  && \{ r_2, r_4 \} & \{w_1, r_2, w_3, r_4\} & (r_2,r_4) & (w_1,r_4) & w_2 \gtEdge r_4
  \\ 5. &&& \readE{x} & \{ r_2,r_4,r_5\} & \{w_1, r_2, w_3, r_4, r_5\} & (r_2,r_5) & (w_1, r_5) &
   \\   &&&           &                  &                            & (r_4,r_5) & (w_3, r_5)
   \ea
   \eda
   }

\begin{example}
    Consider the trace in Figure~\ref{f:example}.
  We write $\rwVC{x}'$ and $\concEvt{x}'$ to refer to the sets
  as calculated by Algorithm~\ref{alg:w3poee-firstpass}
  whereas $\rwVC{x}$ and $\concEvt{x}$ refer to the sets
  as calculated by Algorithm~\ref{alg:w3poee-rr}.

  The race pair $(w_1,r_4)$ is detected in the first pass of Algorithm~\ref{alg:w3poee-rr}.
  Based on Algorithm~\ref{alg:w3poee-firstpass} we require some second pass
  to detect $(w_1,r_4)$ based on $w_1 \gtEdge r_2$ and $(r_2,r_4)$.
\end{example}

We conclude. All read-read pairs can be eliminated from $\concEvt{x}$
by making the adjustments described by Algorithm~\ref{alg:w3poee-rr}.
By relaxing the conditions on $\rwVC{x}$ any write-read pair that
is detectable by the second pass via a read-read pair and some write-read edges is
immediately detectable via the set $\rwVC{x}$.
Recall that a write in $\rwVC{x}$ will only be removed from $\rwVC{x}$ if there is
a subsequent write in happens-before \PWR\ relation.
Hence, Algorithms~\ref{alg:w3poee-firstpass} and \ref{alg:w3poee-rr}
and their respective second passes yield the same number of potential race pairs.

The time and space complexities are also the same.
The set $\rwVC{x}$ under the relaxed conditions is still bounded by $O(k)$.
We demand that that all writes considered on their own and
all reads considered on their own
are concurrent to each. Hence, there can be a maximum of $O(k)$ writes and $O(k)$ reads.

The above example suggests that we may also remove write-read edges.
The edge $w_1 \gtEdge r_2$ plays no role for the second pass
based on Algorithm~\ref{alg:w3poee-rr}.
This assumption does not hold in general. The construction
of $\edges{x}$ for Algorithms~\ref{alg:w3poee-firstpass} and ~\ref{alg:w3poee-rr}
must remain the same.

\begin{example}
  Consider the following trace.
  \bda{llll}
  & \thread{1}{} & \thread{2}{} & \thread{3}{}
  \\ \hline
  1. & \writeE{x} &&
  \\ 2. & \readE{x} &&
  \\ 3. & \writeE{y} &&
  \\ 4. && \readE{y} &
  \\ 5. && \readE{x} &
  \\ 6. && \writeE{x} &
  \\ 7. &&& \writeE{x}
  \eda
  Due to the write-read dependency involving variable $y$,
  Algorithm~\ref{alg:w3poee-rr} only reports a single write-write pair, namely $(w_6, w_7)$.
  The additional pair $(w_1, w_7)$ is detected during the second pass
  where write-read and read-write edges such as $w_1 \gtEdge r_2$ and $r_5 \gtEdge w_6$
  are necessary.
\end{example}

\subsection{Aggressive Filtering}

We aggressively apply the filtering check (Lemma~\ref{le:lockset-w3-filtering})
during the second pass.
A pair $(e,f) \in \concEvt{x}$ (step (2) in Definition~\ref{def:all-concs-post})
that fails the Lockset + \PWR\ Filtering check
will not be added to $\accConcs{x}$ (step (4)).
But we have to consider the candidates $(g_i,f)$ and add
them to $\concEvt{x}$ (step (5)) as we otherwise might miss some potential race candidates.

\begin{example}
  Consider the following trace.
  \bda{lll}
  & \thread{1}{} & \thread{2}{}
  \\ \hline
  1. & \writeE{x} &
  \\ 2. & \lockE{y} &
  \\ 3. & \writeE{x} &
  \\ 4. & \unlockE{y} &
  \\ 5. & \writeE{x} &
  \\ 6. && \lockE{y}
  \\ 7. && \writeE{x}
  \\ 8. && \unlockE{y}
  \eda
  In the first pass we obtain $\concEvt{x} = \{ (w_5,w_7) \}$
  and $\edges{x} = \{ w_1 \gtEdge\ w_3 \gtEdge\ w_5 \}$.
  The second pass proceeds as follows.
  Via $(w_5,w_7)$ we obtain the next candidate $(w_3, w_7)$.
  This candidate is not added to $\accConcs{x}$ because locksets of $w_3$
  and $w_7$ are not disjoint. Hence, the filtering check fails.

  We remove $(w_5,w_7)$ from $\concEvt{x}$ but add $(w_3, w_7)$
  to $\concEvt{x}$. Adding $(w_3, w_7)$ is crucial.
  Via $(w_3, w_7)$ we obtain candidate $(w_1, w_7)$.
  This candidate is added to $\accConcs{x}$ (and represents an actual write-write race pair).
\end{example}

There are cases where we can completely ignore candidates.
If the filtering check fails because $e$ and $f$ are in happens-before \PWR\ relation,
then we can completely ignore $(e,f)$ and add $(e,f)$ \emph{not} to $\concEvt{x}$.
This is safe because all further candidates reachable via edge constraints
will also be in \PWR\ relation. Hence, such candidates would fail the filtering check as well.

\subsection{Removal of Critical Sections}

  \begin{algorithm}
  \caption{Thread-local history and removal}\label{alg:w3po-cs-removal}
  \begin{algorithmic}[1]
  \Function{w3}{$i, V, \LStSym$}
    \For {$y \in \LStSym$}
    \For {$(\thread{j}{k}, V') \in \LH{i,y}$}
    \If{\accVC{V'}{j} < \accVC{V}{j}}
      \State $\LH{i,y} = \LH{i,y} - \{(\thread{j}{k}, V') \}$
      \Else
       \If{$k < \accVC{V}{j}$}
       \State $V = V \sqcup V'$
            \EndIf
     \EndIf
     \EndFor
     \EndFor

    \Return V
    \EndFunction
  \end{algorithmic}

\begin{algorithmic}[1]
  \Procedure{release}{$i,y$}
  \State $\threadVC{i} = \Call{w3}{i, \threadVC{i},\LSt{i}}$
  \State $\LSt{i} = \LSt{i} - \{ y \}$
    \For {$i' \not= i$}
  \State $\LH{i',y} = \LH{i',y} \cup \{ (\Acq(x), \threadVC{i}) \}$
  \EndFor
\State $\incC{\threadVC{i}}{i}$
\EndProcedure
\end{algorithmic}

  \end{algorithm}

The history of critical sections for lock $y$ is maintained by $\LH{y}$.
We currently only add critical sections without ever removing them.
From the view of thread~$i$ and its to be processed events,
we can safely remove a critical section
if (a) thread~$i$ has already synchronized with this critical section
(see function \Call{w3}{} in Algorithm~\ref{alg:w3poee-firstpass}), and
(b) the release event happens-before the yet to be processed events.

Removing of critical sections is specific to a certain thread.
Hence, we use thread-local histories $\LH{i,y}$ instead of a global history $\LH{y}$.
Both removal conditions can be integrated into function \Call{w3}{}.
See the updated function \Call{w3}{} in Algorithm~\ref{alg:w3po-cs-removal}.
Function \Call{w3}{} additionally expects the thread id (and therefore all calls must include
now this additional parameter).

We always remove after synchronization.
Hence, removal checks (a) and (b) boil down to the same check
which is carried out within line numbers 4-6.
If the time stamp of the release is smaller compared to thread's time stamp (for the thread
the release is in), the release happens-before and therefore
the critical section can be removed.

In case of a release event, we add the critical section to all other
thread-local histories. Processing of all other events as well as the second pass
remains unchanged.

In theory, the size of histories can still grow considerably.

\begin{example}
  Consider the following trace.
  \bda{lll}
  & \thread{1} & \thread{2}
  \\ \hline
  1. & \lockE{y} &
  \\ 2. & \writeE{x_1} &
  \\ 3. & \lockE{y} &
  \\ \dots & &
  \\ & \lockE{y} &
  \\ & \writeE{x_n} &
  \\ & \unlockE{y} &
  \\ & & \lockE{y}
  \\ & & \readE{x_i}
  \\ & & \unlockE{y}
  \eda
  In thread 2's thread-local history we would find all $n$ critical sections of thread 1.
  This shows that size of thread-local histories may grow linearly in the size of the trace.

  As we assume the number of distinct variables is a constant,
  some of the variables $x_j$ might be repeats.
  Hence, we could truncate thread 2's thread-local history by only keeping
  the most recent critical section that contains a write access to $x_j$.
  Hence, the number of distinct variable imposes a bound on the size
  of thread-local histories.
\end{example}

Similarly, we can argue that the number of thread imposes a bound
on the size of thread-local histories.
Hence, we claim that the size of thread-local histories
be limited to the size $O(v*k)$ without compromising the correctness
of the resulting \PWR\ relation.
We assume that $k$ is the number of threads and $v$ the number of distinct variables.

Maintaining the size $O(v*k)$ for thread-local histories
would require additional management effort.
Tracking thread id's of critical sections and the variable accesses that
occur within critical sections etc.
In our practical experience, it suffices to simply impose a fixed limit
for thread-local histories. For the examples we have encountered,
it suffices to only keep the five most recent critical sections.
That is, when adding a critical section to a thread-local history
and the limit is exceeded, the to be added critical section simply overwrites
the oldest critical section in the thread-local history.

\section{Precision Benchmarks}
\label{sec:precision}

\begin{table*}[t]
{\small
\begin{tabular}{l|c|c|c|c|c}
\textbf{} & $\#$Race Candidates / False Positives & \NoFP\ & \OnlyFP\ & \NoFN\ & \OnlyFN\ \\ \hline
FastTrack & 23 / 5 & 25 & 0  & 4 & 15 \\
SHB & 14 / 0 & 28 & 0 & 4 & 15 \\
\SHBEELimit{} & 19 / 0 & 28 & 0 & 5 & 15 \\
TSan & 54 / 16 & 17 &  5 & 20 & 0 \\
\TSANWRD  & 46 / 8 & 20 & 4 & 20 & 0 \\
\PWRZero{} & 45 / 7 & 21 & 3 & 20 & 0 \\
\PWREELimit{} & 52 / 7 & 21 & 3 & 22 & 0 \\
WCP & 31 / 7 & 23 & 1 & 10 & 9 \\
\end{tabular}
}
\caption{Precision results (28 test cases with 45 predictable races)}
\label{tab:overall-res}
\end{table*}

The precision benchmark suite consists of
28 tests cases that give rise to 45 predictable races.
For 6 out of the 28 test cases there are no data races.
Many test cases require alternative schedules to be explored to predict the the data race.

Recall that \PWREELimit{} employs a limited number of edge constraints which may result in incompleteness (false negatives)
and also limits the history of critical sections which may lead to more false positives.
The limits we employ \PWREELimit{}
have no impact on the number of false negatives
and false positives compared to \PWREE{}.
We introduce the additional candidates SHB and \TSANWRD{}.
SHB is a variant of $\SHBEELimit$ where the limit of edge constraints is zero.
\TSANWRD{} is an extension of TSan that includes write-read dependencies.

Table \ref{tab:overall-res} shows the precision measurements for each algorithm.
Column $\#$Race Candidates / False Positives reports the overall number of race candidates reported
and the number of false positives among candidates.
TSan reports the highest number of race candidates (54) but includes a large number of
false positives (16). Hence, only 38 (=54-16) are (actual) data races.
\TSANWRD{} catches like TSan 38 (=46-8) data races but reports fewer race candidates (46) out of
which eight are false positives.
The precision of \PWRZero{} is similar to \TSANWRD{}.
38 data races are caught out of 45 candidates that include seven false positives.
\PWREELimit{} catches all 45 (=52-7) races and reports 52 race candidates
out of which seven are false positives.
WCP reports 31 race candidates out of which 24 (=31-7) are data races
due to seven false positives.
SHB and \SHBEE{} report the fewest number of race candidates but come with the guarantee
that no false positives are reported.
FastTrack catches 18 (=23-5) data races. Recall that FastTrack ignores write-read dependencies.

Based on the overall precision measured in terms of number of race candidates and false positives,
we draw the following conclusions.
When it comes to zero false positives, SHB and \SHBEE{} perform best.
TSan yields many false positives.
When aiming for many data races with a manageable number of false positives,
\TSANWRD{}, \PWRZero{} and WCP are good choices.
\PWREELimit{} is the best choice when the aim is to catch all data races
with a manageable number of false positives.
FastTrack yields also a manageable number of false positives but catches
considerably fewer data races.

We examine in more the detail the issue of false positives and false negatives.
For this purpose, we measure the number of tests for which an algorithm yields
no false positives among candidates reported (column \NoFP), only false positives among candidates reported (column \OnlyFP),
no false negatives (column \NoFN), only false negatives (column \OnlyFN).
By no false negatives we mean that all races for that test are reported.
By only false negatives we mean that no races are reported although the test has a race.

FastTrack does not report any false positives for 25 out of the 28 tests cases.
See column \NoFP{}. On the other hand, there are 15 tests cases with races for which
no race is reported (column \OnlyFN{}) and there are only 4 test cases for which
all races are reported (column \NoFN{}).
Any case listed in \OnlyFN{} also contributes to \NoFP{}.
Hence, the number 25 in column \NoFP{} results from
the fact that FastTrack reports considerably fewer race candidates compared to some of the other algorithms.
SHB and \SHBEE{} have the same number of ``false negative'' cases as FastTrack.
Their advantage is that both come with the guarantee of not having any false positives.

WCP is able to detect races resulting from alternative schedules.
This is the reason that WCP performs better than FastTrack, SHB and \SHBEE{} when
comparing the numbers in columns \NoFN{} and \OnlyFN{}.
However, WCP appears to be inferior compared to the family of ``TSan'' and ``PWR'' algorithms.
TSan, \TSANWRD{} and \PWRZero{} are able to report all races for 20 test cases.
For \PWREELimit{} we find 22 test cases. See column \NoFN{}.
Recall that there are 28 test cases overall out of which 22 test cases
have races and six test cases have no races.
Hence, 22 test cases is the maximum number to achieve in column \NoFN{}.

In summary, the performance and precision benchmark suites show that
\PWREELimit{} offers competitive performance while achieving high precision.

\section{Counting Data Races}
\label{sec:counting-data-races}

In our measurements, we report 252(2) to indicate that 252 races are found
overall of which two races are found via edge constraints.
Because we only count races modulo their code locations, it is possible
that a race can be found directly \emph{and} via edge constraints.
Instead of ``direct'' races and ``edge constraint'' races, we introduce
the notation of first pass and second pass race.

We refer to a \emph{second pass} race as a race that is
obtained via the help of edge constraints $\edges{x}$
and the set $\concEvt{x}$ of concurrent reads/writes.
We refer to a \emph{first pass} race as a race that is either
a write-read race or a race pair from $\concEvt{x}$ that is not a second pass race pair.
We assume that race pairs are reported following the order as defined by
the construction in Definition~\ref{def:all-concs-post}.

The side condition ``not a second pass race pair'' for a first pass race pair
seems strange as there should not be any mix up between first and second pass
races. However, this is possible for two reasons.
We only report race pairs modulo their code locations
and we compare variants of ``PWR'' that impose different limits on edge constraints.

Consider the trace

 \bda{ll|l}
 & \thread{1}{} & \thread{2}{} \\ \hline
 1. & \writeE{x}^a & \\
 2. & \writeE{x}^b & \\
 3. & & \writeE{x}^c \\
 4. & \writeE{x}^a
 \eda
 where we attach superscripts to indicate the code locations where the events result from.
 We find that $w_1$ and $w_4$ result from the same code location.

 We find that $\concEvt{x} = \{ (w_2,w_3), (w_3,w_4) \}$
 and $\edges{x} = \{ w_1 \gtEdge w_2, w_2 \gtEdge w_4 \}$.
 We first report the first pass race $(w_2,w_3)$.
 Via edge constraints we report the second pass race $(w_1, w_3)$.
 The race pair $(w_3,w_4)$ is not reported because we have already reported
 $(w_1, w_3)$. Recall that these race pairs refer to the same code locations.
 In terms of code locations, we find the first pass race $(b,c)$ and the second pass race $(a,c)$.

 If we impose a limit on edge constraints, say zero, we only consider $\concEvt{x}$.
 Then, first pass races reported are $(w_2,w_3)$ and $(w_3,w_4)$.
 In terms of code locations, we report $(b,c)$ and $(a,c)$.
 But this means that $(a,c)$ can either be reported as a first pass or second pass race
 depending on the limit imposed on edge constraints.

 Such cases arise in our measurements in Table~\ref{tab:realworldbench-singlephase}.
For H2, \PWRZero{} reports 252 first pass races whereas \PWREELimit{} reports 252(2) races.
That is, two of the second pass races are already reported by
\PWRZero{} as first pass races due to the fact the filter races modulo their code locations.

 \end{document}